\newcommand{\full}[1]{#1}
\newcommand{\itcs}[1]{}
\newtheorem{theorem}{Theorem}[section]
\newtheorem{corollary}[theorem]{Corollary}
\newtheorem{lemma}[theorem]{Lemma}
\newtheorem{proposition}[theorem]{Proposition}
\newtheorem{definition}[theorem]{Definition}
\newtheorem{observation}[theorem]{Observation}
\newtheorem{claim}[theorem]{Claim}
\def\squarebox#1{\hbox to #1{\hfill\vbox to #1{\vfill}}}
\newcommand{\qed}{\hspace*{\fill}\vbox{\hrule\hbox{\vrule\squarebox{.667em}\vrule}\hrule}\smallskip}
\newenvironment{proof}{\noindent{\bf Proof:~~}}{\(\qed\)}
\newcommand{\eps}{\varepsilon} 
\newcommand{\hv}{\bar{v}}
\providecommand{\vv}{\varphi} 
\renewcommand{\vv}{\varphi} 
\title {Mechanism Design with Moral Bidders\full{\thanks{We thank the participants of the Israel Algorithmic Game Theory seminar for useful comments.}}} 
\author{Shahar Dobzinski\thanks{Weizmann Institute of Science. Work supported by BSF grant 2016192 and ISF grant 2185/19.}  \and Sigal Oren\thanks{Ben-Gurion University of the Negev. Work supported by BSF grant 2018206 and ISF grant 2167/19.}}
\author{Shahar Dobzinski}{Weizmann Institute of Science, Rehovot, Israel \and \url{https://sites.google.com/site/dobzin/} }{dobzin@gmail.com}{}{Work supported by BSF grant 2016192 and ISF grant 2185/19.}
\author{Sigal Oren}{Ben-Gurion University of the Negev, Beer-Sheva, Israel \and \url{https://sites.google.com/site/sigal3/}}{sigal3@gmail.com}{https://orcid.org/0000-0002-4271-7291}{Work supported by BSF grant 2018206 and ISF grant 2167/19.}
\authorrunning{S. Dobzinski and S. Oren} 
\keywords{Mechanism Design, Cognitive Biases, Revenue Maximization} 
\begin{document}

		\maketitle
\begin{abstract}
	A rapidly growing literature on lying in behavioral economics and psychology shows that individuals often do not lie even when lying maximizes their utility. In this work, we attempt to incorporate these findings into the theory of mechanism design.
	
	We consider players that have a preference for truth-telling and will only lie if their benefit from lying is sufficiently larger than the loss of the others. To accommodate such players, we introduce $\alpha$-moral mechanisms, in which the gain of a player from misreporting his true value, comparing to truth-telling, is at most $\alpha$ times the loss that the others incur due to misreporting. Note that a $0$-moral mechanism is a truthful mechanism.
	
	We develop a theory of moral mechanisms in the canonical setting of single-item auctions within the ``reasonable'' range of $\alpha$, $0\leq \alpha \leq 1$. We identify similarities and disparities to the standard theory of truthful mechanisms. In particular, we show that the allocation function does not uniquely determine the payments and is unlikely to admit a simple characterization. In contrast, recall that monotonicity characterizes the allocation function of truthful mechanisms.
	
	Our main technical effort is invested in determining whether the auctioneer can exploit the preference for truth-telling of the players to extract more revenue comparing to truthful mechanisms. We show that the auctioneer can indeed extract more revenue when the values of the players are correlated, even when there are only two players. However, we show that truthful mechanisms are revenue-maximizing even among moral ones when the values of the players are independently drawn from certain identical distributions (e.g., the uniform and exponential distributions). 
	
	A by-product of our proof that optimal moral mechanisms are truthful is an alternative proof to Myerson's optimal truthful mechanism characterization in the settings that we consider. We flesh out this approach by providing an alternative proof that does not involve moral mechanisms to Myerson's characterization of optimal truthful mechanisms to all settings in which the values are independently drawn from regular distributions (not necessarily identical).
\end{abstract}

\thispagestyle{empty}
\clearpage
\setcounter{page}{1}

%

 \section{Introduction}

The backbone of mechanism design is the assumption that individuals will lie whenever doing so increases their profit. However, a rapidly growing literature in behavioral economics and psychology questions this assumption. Many experiments demonstrate that participants often do not lie or do not lie to the full extent in order to maximize their utility. One of the now classical experiments is described in \cite{fischbacher2013lies}: a participant is asked to roll a die in a cup and report the number to the experimenter (that cannot see which number the participant rolled). The payoff is determined according to the number reported, where higher numbers correspond to higher payoffs. Since the numbers are drawn from a known distribution, the aggregate deviation from truthful reports can be computed. A recent meta-analysis of such experiments by Abeler et al. \cite{abeler2019preferences} showed that participants only collect about 25\% of the potential gains from lying. Similar results were observed in other experiments (e.g., reporting the number of math problems solved \cite{mazar2006dishonesty}, sender-receiver games \cite{erat2012white,gneezy2005deception}).
 
Many models were suggested to explain this \emph{preference for truth-telling} of individuals. The meta-analysis of Abeler et al. \cite{abeler2019preferences} uses the data previously gathered together with new experiments to falsify many of these models. The models that survived this elimination combine a cost for lying and a desire to be perceived as honest by others.\footnote{These are also the main ingredients in two models that were published around the same time  \cite{khalmetski2019disguising,gneezy2018lying}.}  The cost of lying can be interpreted as a cognitive cost \cite{wang2010pinocchio,capraro2017does,christ2009contributions}. In contrast, the desire to be perceived as honest is guided by one's preference to keep his ``social identity'' aligned with his moral standards, as Gneezy et al. \cite{gneezy2018lying} suggest. Roughly speaking, according to social identity theory  (e.g., \cite{akerlof2000economics,benabou2011identity,tajfel2010social}), one's social identity is shaped by the way that others (even strangers) perceive him.

In a seminal paper, Gneezy \cite{gneezy2005deception} introduced the sender-receiver game to study deception. The sender-receiver game is a two-player game in which the sender recommends the receiver which alternative to choose: $A$ or $B$. Each alternative implies different rewards for the sender and the receiver. The sender is asked to tell the receiver which alternative is better for the receiver. While the sender is aware of the rewards, the receiver knows only the recommendation of the sender. The rewards that each alternative offered differ over treatments. However, alternative $A$  always maximized the receiver's reward, making $A$ the truthful recommendation. One remarkable finding is that participants lied less in treatments where their benefit from the lie was small relative to the loss of the other player.\footnote{Hurkens and Karnik \cite{hurkens2009would} provide an alternative explanation to the results of Gneezy \cite{gneezy2005deception}. They suggest that the results can be explained by assuming that some fraction of the population never lies and the rest lie whenever lying is beneficial. This criticism is refuted by Erat and Gneezy \cite{erat2012white} in a different experimental setup.} A possible explanation might be that the participants prefer alternative $A$ over alternative $B$ for other reasons.
For example, Fehr and Shmidt \cite{fehr1999theory} suggest that individuals have a preference for alternatives with a smaller difference in the payoff (i.e., inequality aversion). Charness and Rabin \cite{charness2002understanding} suggest that individuals prefer alternatives with higher social welfare (see \cite{fehr2006economics} for a comprehensive survey on such social preferences). 
Gneezy refutes those alternative explanations by showing that with the payoffs used in his experiments, the preference of $A$ over $B$ cannot be explained by the models of \cite{fehr1999theory} nor \cite{charness2002understanding}.



%

\subsection*{Moral Mechanisms}

So far, despite the large body of work demonstrating that individuals sometimes choose not to lie when it may be beneficial to them, relatively few theoretical works took this into consideration (e.g., \cite{matsushima2008role,kartik2014simple}). In this paper, we focus on players that rely on their moral standards when choosing their actions. In particular, such players have a preference for truth-telling (as demonstrated by \cite{abeler2019preferences}) and lie only if their benefit from lying is sufficiently greater than the loss of the other player(s) from lying  (as demonstrated by \cite{gneezy2005deception,erat2012white}).\footnote{The term moral is {somewhat loaded}, but we use it quite restrictively to refer to players that have these two preferences.} Due to the preference for truth-telling, both profit and loss are measured with respect to truth-telling. Broadly speaking, this paper joins the growing literature that is trying to connect algorithmic game theory and behavioral economics better (e.g., \cite{kleinbergTime,Kleinberg-mult-bias,kleinberg-soph,Kraft-time,Gravin:present-bias,EzraFF20,babaioff2018combinatorial,chen2021cursed}).



Concretely, in an $\alpha$-moral mechanism, the gain of a player from misreporting his true value, comparing to truth-telling, is at most $\alpha$ times the loss that the others incur due to misreporting (see Section \ref{sec-model} for precise definitions). Note that the case of $\alpha=0$ coincides with the standard definition of a truthful mechanism and that the larger the value of $\alpha$ the more the player ``cares'' about the consequences of lying for other players. In this paper we limit the discussion to $0 \leq \alpha \leq 1$. Other values of $\alpha$, very large and even negative, are unrealistic in many settings, but their analysis might be of mathematical interest. We leave studying these values to future work.


We stress that the behavior of moral players is very much different than the behavior of players that have externalities (such as altruistic or spiteful players, e.g., \cite{chen2008altruism,jehiel1996not}). The critical distinction here is that the preferences of players with externalities do not depend on their true value nor on the actions they may need to take to get the mechanism to output a preferred allocation. In contrast, moral players consider the utility of the other players only when they contemplate a lie that directly increases their own utility. 
As moral bidders try to follow their moral standards as much as possible, they only care about the utility of others when they are explicitly affecting it by taking a morally wrong action (e.g., lying). To give a concrete example, an altruistic player may lie to get the mechanism to output an allocation with higher social welfare even when his own utility remains the same or even decreases. In contrast, a moral player will not lie in this case.



Before diving into the technical details, it is worth discussing the applicability of moral mechanisms. Moral mechanisms are usually irrelevant in the case of competing firms. However, moral mechanisms are likely to be applicable when we have individuals that make decisions that directly influence their own payoff. In such cases, the social identity of the player is indeed shaped by his decisions. The player might therefore prefer to avoid lying, especially when the damage caused to others is large compared to the gains of the player.  Naturally, this tendency is usually stronger when the players know each other personally. Thus we believe that our model is relevant when auctioning resources within an organization, or auctioning among a group of individuals that share central traits such as religion or culture, and so on.
To illustrate this point, consider a hypothetical example of a department head who has to choose which PI will get some lab equipment, e.g., a fancy microscope. Each PI will naturally exaggerate the contribution of the new equipment to his lab, but we expect the exaggeration to be limited, in particular, if it is clear that another lab can make much better use of the equipment. 


\subsection*{A Theory of Moral Mechanisms}

We start with developing the foundations of moral mechanism design (Section \ref{sec-properties}). Our focus is the canonical setting of allocating a single item. We observe, as expected, that the set of allocation functions that can be implemented by moral mechanisms is a strict superset of the set of allocation functions that can be implemented by a truthful mechanism. In fact, for every $0\leq \alpha'<\alpha \leq 1$, the set of allocation functions of $\alpha$-moral mechanisms strictly contains the set of allocation functions of $\alpha'$-moral mechanisms. However, in sharp contrast to dominant-strategy mechanism design, where a dominant-strategy implementable allocation function uniquely determines the payment (up to a constant), multiple payment functions might be used to implement a given allocation function as a moral mechanism.



Next, we attempt to understand the structure of morally-implementable allocation functions. A key tool in the theory of mechanism design is the characterization of the set of dominant-strategy implementable allocation functions as the set of monotone functions. We provide some evidence that an analogous characterization does not exist for moral mechanisms in the sense that there are moral mechanisms whose allocation function can be arbitrary in many ``contiguous'' instances.


Nevertheless, we do obtain a simple condition that characterizes the set of moral mechanisms,  but this condition involves the payment function. It is well known that in a truthful mechanism, the price that the winner pays is only a function of the values of the others. That is, there are functions $p_{-1}, \ldots, p_{-n}$ and if player $i$ wins in the instance $(v_1,\ldots, v_n)$ then his payment is $p_{-i}(v_{-i})$. We refer to mechanisms that have this property as \emph{payment independent}. In a truthful mechanism, if player $i$'s profit is positive ($v_i-p_{-i}(v_{-i})>0$) then player $i$ will win the item and every player $j\neq i$ is not profitable ($v_j-p_{-j}(v_{-j})\leq 0$). In contrast, in a moral mechanism, several players might have a positive profit. This observation derives our characterization for every value of $\alpha$, stated here for the simple case of $\alpha=1$. A mechanism is a \emph{profit maximizer} if it is payment independent and the item is allocated to a player $i$ that maximizes $v_i-p_{-i}({v_{-i}})$, as long as this expression is not negative (importantly, more than one player might have a positive profit). In particular, for $\alpha=1$, we characterize the set of moral mechanisms as the set of profit maximizers.



\subsection*{The Main Result}

After understanding the basic principles of moral mechanism design, we go on to our main technical effort. Suppose that the players have positive values of $\alpha$. Is there a moral mechanism that generates more revenue than truthful mechanisms?\footnote{In our model, bidders care about the payoffs of the other bidders but not about the payoff of the auctioneer. This is quite common as often other bidders are treated as peers whereas the auctioneer is a part of a different social group. Consider, e.g., the example discussed above of PIs who are competing for a fancy microscope. The PIs probably care about the potential benefits of their colleagues from using the microscope but not so much about ``donating'' money to the institutional bureaucracy.}  In particular, can the auctioneer benefit from designing the auction in a way that will effectively ``amplify'' the players' values of $\alpha$?\footnote{\label{footnote-truthful}This might be done by making the consequences of the actions of the players more explicit. For example, in a different context, Read et al. \cite{read2017value} asked participants to choose between taking \$5 to their pocket and donating \$5 to the red cross. When the choices were made more explicitly -- ``take the money and the red cross does not get the money`` and ``donate the money to the red cross'' -- the percentage of donations to the red cross increased.}


We consider the case where the values $(v_1,\ldots, v_n)$ are drawn from a distribution $\mathcal F$ in a correlated way (Section \ref{sec-correlated}). We show that for \emph{any} $0< \alpha \leq 1$ there is a distribution $\mathcal F$ such that an $\alpha$-moral mechanism extracts more revenue than its optimal truthful counterpart by a multiplicative constant factor. In the other direction, we show that truthful mechanisms extract at least half the revenue of the revenue-maximizing $\alpha$-moral mechanism.

In Section \ref{sec-independent} we study the classic setting of revenue maximization with {$n$} players where the value of each player $i$ is drawn independently from a known and identical distribution $\mathcal F$. For convenience, we assume that the valuations are discrete, i.e., there is some $\eps>0$ such that the support of $\mathcal F$ is $\{0,\eps, 2\eps, \ldots, 1-\eps, 1\}$. Our main result shows that for certain distributions, including the uniform and exponential distributions, the optimal deterministic moral mechanism is in fact truthful even when $\alpha$ is as large as $1$. In other words, the auctioneer cannot exploit the ``moral standards'' of the players to increase his revenue.

Ideally{, to prove our main result}, we would like to extend Myerson's celebrated characterization of optimal mechanisms \cite{myerson1981optimal} to hold for profit maximizers (moral mechanisms with $\alpha=1$) and not just for truthful mechanisms.\footnote{Note that proving that optimal $1$-moral mechanisms are truthful immediately implies that optimal $\alpha$-moral mechanisms for any $0\leq \alpha <1$ are truthful as well.} However, this approach fails almost immediately: Myerson's first step is to characterize the payment function in terms of the allocation function. Given the characterization, Myerson's proof completely ignores the payment functions and continues to find the allocation function that maximizes revenue. This approach inherently fails for moral mechanisms since, as we show, the allocation function of $1$-moral mechanisms is ``almost'' unconstrained. Thus a different approach to characterizing optimal moral mechanisms is needed. Therefore, instead of ``ignoring'' the payment functions almost immediately and focusing on the allocation function, we develop a very different proof technique that puts the payment functions in the center. By subtle manipulations, we can characterize the payment functions of an optimal moral mechanism. We discover that the optimal moral mechanism is truthful, thus proving the optimality of Myerson's mechanisms also among moral mechanisms. 

As a by-product, we obtain an alternative proof of Myerson's theorem for the settings that we consider. \itcs{In the full version we}\full{We} flesh out our approach by providing\full{(Section \ref{sec-truthful})} an alternative proof to Myerson characterization of optimal \emph{truthful} mechanisms when the values of the players are independently drawn from (not necessarily identical) distributions that satisfy Chung and Ely's discrete analogue of the regularity condition \cite{chung2007foundations}. Our alternative proof is not shorter than Myerson's but is more ``direct''. We -- and this is obviously a subjective statement -- find it easier to digest since our proof gradually ``exposes'' the mechanism rather than obtaining it at once via Myerson's spectacular but somewhat opaque proof. We hope that this proof will find more applications beyond characterizing optimal moral mechanisms.

\subsection*{Open Questions and Future Directions}

It is well documented that in practice bidders sometimes do not bid truthfully, even when truth telling is a dominant strategy. The literature underlines the importance of the specific implementation in motivating truthful bidding (e.g., ascending vs. sealed bid auctions). Our work raises the possibility of using implementations that exploit behavioral biases to encourage truthful bidding in auction settings (see footnote \ref{footnote-truthful} and the discussion that leads to it). In that light, one implication of our paper is that in some settings the revenue that the auctioneer can extract does not increase with implementations that fully exploits the tendency of bidders to truth telling. It is a fascinating future direction to further explore this issue which resides on the border of behavioral economics and auction design.

We also leave several questions open on the immediate level: can moral mechanisms raise more revenue than their truthful counterparts when the distributions are not identical or not even regular? The proofs that moral mechanisms sometimes obtain more revenue and that truthful mechanisms are optimal for certain independent distributions are both quite subtle and make use of several novel technical constructs. We therefore believe that progress on this front will reveal more exciting insights. Furthermore, even considering truthful mechanisms, can the new proof technique for characterizing optimal mechanisms be used in other settings to obtain new results about truthful mechanisms, or maybe simplify existing ones?

Another promising avenue for future research is exploring moral mechanisms in other settings: can we characterize them? Are they more powerful? Even if they do not have more power, perhaps there are computationally efficient moral mechanisms for problems in which no computationally efficient truthful mechanisms exist.

\section{Model}\label{sec-model}


This paper studies single item auctions. Let $N=\{1,\ldots,n\}$ be a set of players. The private information of player $i$ is his value $v_i$ for the item. The set of possible values of player $i$ is $\mathbb R^+$. An allocation function $f: (\mathbb R^+)^n \rightarrow N \cup \{0\}$ receives the bids of all players and decides whether to allocate the item and to which player. A direct mechanism is composed of an allocation function $f$ and a payment function $p: (\mathbb R^+)^n \rightarrow \mathbb R^n$ that determines the payment of each player. All mechanisms in this paper are deterministic.

We study moral mechanisms for single item auctions that obey the following standard properties:
\begin{enumerate}
	\item \textbf{Individual Rationality:} for every player $i$ and $(v_1,\ldots, v_n)$, $v_i \cdot \delta(i=f(v_1,\ldots, v_n))\geq p(v_1,\ldots, v_n)_i$ (i.e., a player does not pay more than his value). Here, $\delta(\cdot)$ is a function that receives a boolean condition and returns $1$ if the condition holds and $0$ otherwise.
	\item \textbf{No Positive Transfers:} For every player $i$ and $(v_1,\ldots, v_n)$, $p(v_1,\ldots, v_n)_i \geq 0$.
\end{enumerate}

Note that the two properties imply that the payment of a player that does not get the item is $0$.
Define the profit of player $i$ with true value $v_i$ for reporting value $v'_i$: 
\begin{align*}
\pi_i(v'_i,v_{-i}|v_i) = v_i \cdot \delta(i=f(v'_i,v_{-i}))-p(v'_i,v_{-i})_i
\end{align*}
%
%
\begin{definition}[Moral Mechanisms]
Consider an individually rational mechanism with no positive transfers for single item auction $M=(f,p)$. 
$M$ is \emph{$\alpha$-moral} if in every instance $(v_1,\ldots, v_n)$, for every player $i$ and $v'_i$ such that $\pi_i(v'_i,v_{-i}|v_i)  - \pi_i(v_i,v_{-i}|v_i)>0$:
  \begin{align*}
  \pi_i(v'_i,v_{-i}|v_i)  - \pi_i(v_i,v_{-i}|v_i)  \leq  \alpha \cdot \sum_{j \neq i} \big(   \pi_j(v_i,v_{-i}|v_j)  - \pi_j(v'_i,v_{-i}|v_j) \big) 
 \end{align*}
when $\alpha=1$ we simply say that $M$ is \emph{moral}.
\end{definition}
For example, when $\alpha=1$ and all players report truthfully, if player $i$ misreports its value then its profit might indeed increase (comparing to truth-telling), but the sum of profits of the other players decreases by at least the amount that the profit of player $i$ increased.

Observe that any mechanism that is $\alpha$-moral is also $\alpha'$-moral for $\alpha'>\alpha$. In particular, note that a $0$-moral mechanism is simply a truthful mechanism. To simplify the presentation, we will sometimes limit ourselves to the somewhat ``cleaner'' case of $\alpha=1$.  We note that our main result that shows that among all $1$-moral mechanisms, the revenue-maximizing mechanism is truthful ($0$-moral) directly implies that for every $0\leq \alpha\leq 1$ the revenue-maximizing mechanism among all $\alpha$-moral mechanisms is truthful.

Note that truth-telling is an equilibrium of $\alpha$-moral mechanisms if every player $i$  believes that the other players report truthfully and player $i$ will not misreport its value if its gain is less than $\alpha$ times the loss of the other players from this misreport\footnote{This is similar to, e.g., direct revelation Bayes-Nash incentive compatible mechanisms: truth-telling is an equilibrium if all players agree on the underlying distributions from which the values are drawn and assume that the other players bid their true types.}. In this
paper we analyze the truth-telling equilibrium of $\alpha$-moral mechanisms. 

Our definition of moral mechanisms is inspired by \cite{gneezy2005deception,gneezy2018lying} who only consider two-player games. Two natural extensions come to mind when trying to extrapolate to $n$-player games: a player weighs his lie against the sum of losses of the other players, or against the maximal loss of a single player. In our single item auction setting, since the payoff of each player that does not receive the item is $0$, these two extensions coincide.


\section{Properties of $\alpha$-Moral Mechanisms}\label{sec-properties}
We start with  showing that the relaxation to $\alpha$-moral mechanisms strictly extends the set of allocation functions that can be implemented: 
\begin{claim} \label{claim-allocation-not-monotone}
For any $\alpha >0$ there exists an allocation function that can be implemented with an $\alpha$-moral mechanism but not in dominant strategies.
\end{claim}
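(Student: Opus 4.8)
The plan is to exhibit a concrete small example --- two players suffice --- and a non-monotone allocation function together with payments that make the resulting mechanism $\alpha$-moral, and then observe that the allocation rule violates monotonicity and hence cannot be implemented in dominant strategies. The intuition is that the moral constraint only bites when a player's deviation strictly increases his own profit; if we arrange the payments so that at a ``non-monotone'' point the winner is already extracting essentially all of his value (profit near zero), he has little to gain from any deviation, and the small gain can be dominated by $\alpha$ times the loss inflicted on the other player.

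Concretely, I would take $n=2$ and design $f$ on a region of the $(v_1,v_2)$-plane so that, fixing $v_2$, player $1$ wins on some interval of low values of $v_1$, loses on an intermediate interval, and wins again on high values --- this ``win, lose, win'' pattern in $v_1$ is exactly the failure of monotonicity that rules out dominant-strategy implementation (by the standard monotonicity characterization of truthful single-item auctions, which the paper has already invoked). To make it $\alpha$-moral I would set player $1$'s payment, on the lower winning interval, equal to (or within $\eps$ of) his reported value, so his profit there is (at most) $\eps$; then any profitable upward misreport of player $1$ that moves him into the higher winning region gains him at most some small amount, while it changes player $2$'s outcome from winning to losing (or reduces player $2$'s profit), so the required inequality $\mathrm{gain} \le \alpha \cdot (\text{loss of others})$ holds provided $\alpha>0$ and the numbers are chosen with the right slack. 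Downward misreports and misreports that keep player $1$ a winner or keep him a loser are either unprofitable or must be checked to produce no profitable deviation; symmetrically one checks player $2$, who can simply be given a classic monotone threshold rule so that he has no profitable deviation at all.

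I expect the main obstacle to be bookkeeping: verifying the $\alpha$-moral inequality for \emph{all} deviations of \emph{both} players simultaneously, including the deviations that do not change the allocation but might still change payments, and making sure individual rationality and no-positive-transfers hold everywhere, not just on the interesting region. The delicate point is calibrating the gap between the ``low'' winning interval and the ``high'' winning interval for player $1$: the interval must be wide enough that player $1$ truly loses in between (so monotonicity genuinely fails), yet the profit jump available to player $1$ by lying upward must be small relative to $\alpha$ times player $2$'s loss. For $\alpha$ close to $0$ this forces the profit in the low region to be extremely close to zero, which is why the construction naturally uses payments essentially equal to value there; one should also confirm the construction degrades gracefully as $\alpha \to 0$ and collapses (as it must) to requiring monotonicity exactly at $\alpha = 0$. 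Finally, I would note that since the example already shows the allocation function is not monotone, no appeal to payments is needed for the negative half of the claim --- it is immediate from the known characterization of dominant-strategy implementable rules.
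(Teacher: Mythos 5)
Your overall plan --- a concrete two-player example with a non-monotone allocation rule, made $\alpha$-moral by arranging that any profitable deviation's gain is dominated by $\alpha$ times the other player's loss --- is exactly the paper's approach (the paper has player $1$ win at price $1$ for all $v_1\geq 1$ except at the single point $v_1=1+\frac{\alpha}{10}$, where player $2$ wins for free). However, your key device for controlling player $1$'s profit is broken: you propose to set player $1$'s payment on the lower winning interval ``equal to (or within $\eps$ of) his reported value.'' Any $\alpha$-moral mechanism must be payment independent (Observation \ref{obs-payment-independent}): if player $1$ wins at two reports $v_1>v_1'$ against the same $v_2$ and pays more at $v_1$, he can report $v_1'$, strictly increase his profit, and inflict zero loss on player $2$ (who loses and has profit $0$ in both instances), so the moral inequality fails for \emph{every} $\alpha$. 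Hence a winner's payment that tracks his own report over an interval of winning values is inadmissible; for each fixed $v_2$ the price must be constant on player $1$'s winning set and, by individual rationality, at most its smallest element --- which is why the paper uses a flat price of $1$ and shrinks the losing region to a single point rather than an interval.

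Relatedly, you identify the wrong binding deviation. A player whose true value lies in the low winning interval and already wins has nothing to gain by moving to the high winning region (the price there is no lower, by payment independence); the dangerous deviation is the opposite one: a player whose true value lies in the \emph{losing} middle region lying into a winning region, where he gains his entire profit at the constant price. With a constant price $p$ and a losing region reaching up to $c$, this gain can be as large as $c-p$, so you must guarantee that player $2$'s profit throughout the losing region is at least $(c-p)/\alpha$ --- which, as $\alpha\to 0$, forces the losing region to collapse toward a single point (or player $2$'s guaranteed profit to blow up). The paper's construction makes exactly this calibration: the losing region is the single value $1+\frac{\alpha}{10}$, the deviator's gain is exactly $\frac{\alpha}{10}$, and player $2$ wins for free with value at least $\frac{1}{10}$, giving gain $\leq\alpha\cdot(\text{loss})$. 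As written your proposal does not go through, though the repaired version essentially coincides with the paper's proof.
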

\begin{proof}
Consider the following mechanism:
		\begin{itemize}
		\item If $v_1 \geq 1$, $v_1\neq 1 + \frac {\alpha}{10}$ and $v_2\geq \frac{1}{10}$ -- allocate to player $1$ at price $1$.
		\item If $v_1=1 + \frac {\alpha}{10}$ -- allocate to player $2$ at price $0$.
		\item Otherwise, do not allocate the item.
	\end{itemize}
Note that the allocation function is not monotone, hence the mechanism cannot be implemented in dominant strategies, even with different payments. To see that the allocation function is not monotone, consider the case where $v_1=1$ and $v_2=\frac {2}{10}$. The item is allocated to player $1$. However, when $v_1=1+ \frac {\alpha}{10}$ and $v_2$ remains the same, the item is allocated to player $2$. 

We now show that the mechanism is $\alpha$-moral. First, observe that truth telling is a dominant strategy for player $2$: when he gets the item his payment is always zero and whether he gets the item or not is not a function of his own value. As for player $1$, we first observe that if $v_2<\frac {1}{10}$, player $1$ does not get the item regardless of his report. Now, assuming that $v_2\geq \frac {1}{10}$ we distinguish between two cases:

\begin{itemize}
	\item When player $1$ gets the item he pays $1$. Thus, when $v_1<1$ player $1$ should bid his true value. Similarly, when $1+ \frac {\alpha}{10} \neq v_1>1$ his profit for reporting his true value is $v_1-1\geq 0$.
	\item If $v_1=1 + \frac {\alpha}{10}$, player $2$ gets the item. If $v_2\geq \frac {1}{10}$, the only way for player $1$ to have a positive profit is by reporting a value that is at least $1$ but not $1 + \frac {\alpha}{10}$. In this case player $1$ gets the item and pays $1$, so the profit of player $1$ increases to $\frac {\alpha}{10}$ from $0$. However, notice that player $2$ gets the item and pays nothing, so his profit is at least $v_2-0\geq \frac {1}{10}$. Thus the profit of player $1$ indeed increases by reporting a different value, but not by more than $\alpha$ times the decrease of the profit of player $2$. Hence the mechanism is $\alpha$-moral.
\end{itemize}
\end{proof}

\subsection{The Payment Functions of $\alpha$-Moral Mechanisms}
Truthful mechanisms have the property that the price that a player pays when it wins the item does not depend on the value that it reports. We say that a (not necessarily truthful) mechanism with this property is \emph{payment independent}. We observe that $\alpha$-moral mechanisms are payment independent.
\begin{definition}[payment independence]
A mechanism $M=(f,p)$ is \emph{payment independent} if the price of every winning player $i$ does not depend on his own value. I.e., for every $v_i, v'_i, v_{-i}$, if player $i$ wins in both instances $(v_i, v_{-i})$ and $(v'_i, v_{-i})$ then $p(v_i,v_{-i})_i=p(v'_i,v_{-i})_i$.
\end{definition}

\begin{observation}\label{obs-payment-independent}
Every $\alpha$-moral mechanism is payment independent.
\end{observation}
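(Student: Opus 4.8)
The plan is to argue by contradiction: suppose some $\alpha$-moral mechanism $M=(f,p)$ is not payment independent, so there exist $v_i, v_i', v_{-i}$ with player $i$ winning in both instances $(v_i,v_{-i})$ and $(v_i',v_{-i})$, but $p(v_i,v_{-i})_i \neq p(v_i',v_{-i})_i$. Write $p = p(v_i,v_{-i})_i$ and $p' = p(v_i',v_{-i})_i$, and without loss of generality assume $p < p'$. The key observation is that the payments of the other players $j\neq i$ are the same in both instances, because $i$ wins in both, so by Individual Rationality and No Positive Transfers every $j\neq i$ pays $0$ in each of the two instances. Hence the term $\sum_{j\neq i}\big(\pi_j(v_i,v_{-i}|v_j) - \pi_j(v_i',v_{-i}|v_j)\big)$ on the right-hand side of the moral condition vanishes in both directions: regardless of which of the two reports $i$ submits, every other player gets nothing and pays nothing.

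Now I would apply the moral condition to each of the two instances in turn. In the instance where $i$'s true value is $v_i'$, consider $i$ deviating to report $v_i$: his profit changes from $v_i' - p'$ to $v_i' - p$, an increase of $p' - p > 0$. Since this gain is strictly positive, the moral condition must hold, giving $p' - p \le \alpha \cdot 0 = 0$, a contradiction. (One should check the deviation is from the true value $v_i'$ and that it is a legitimate report, which it is since reports range over $\mathbb{R}^+$; also individual rationality of the original instance guarantees $v_i' \ge p'$, but we don't even need that — the contradiction is immediate from the nonnegativity of $\alpha$ and the vanishing of the others' loss term.)

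The only subtlety — and the single place that needs care rather than being purely formal — is making sure the moral condition is genuinely \emph{applicable}, i.e. that we are comparing a truthful report against a misreport in the sense of the definition. The definition quantifies over every player $i$, every instance $(v_1,\dots,v_n)$, and every alternative report $v_i'$; so fixing the true profile to have $i$'s value equal to $v_i'$ and taking the alternative report to be $v_i$ is exactly an instantiation of the definition. Since both candidate reports leave $i$ as the winner, the others' profits are unaffected, so the right-hand side is $0$ and any strictly positive deviation gain is forbidden. This forces $p' - p \le 0$, contradicting $p' > p$, and hence $p(v_i,v_{-i})_i = p(v_i',v_{-i})_i$. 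I don't anticipate any real obstacle here; the argument is short and the main content is recognizing that a winning player's deviation to another winning report leaves all opponents at zero payoff, collapsing the moral inequality to the familiar truthful one.
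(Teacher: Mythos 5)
Your proof is correct and follows essentially the same route as the paper's: identify the winning report with the lower payment as a profitable deviation, and note that all other players have zero profit in both instances (they win nothing and pay nothing), so the right-hand side of the moral inequality vanishes and any strictly positive gain contradicts $\alpha$-morality. The only cosmetic difference is the choice of which of the two values is taken as the true one, which is immaterial.
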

\begin{proof}
Consider an $\alpha$-moral mechanism $M=(f,p)$. Fix some $v_{-i}$ and assume that there are $v_i,v'_i$ such that the item is allocated to player $i$ in both instances $(v_i,v_{-i})$ and $(v'_i,v_{-i})$. Toward contradiction and without loss of generality, suppose that $p(v_i,v_{-i})_i> p(v'_i,v_{-i})_i$.
Note that if player $i$'s value is $v_i$, reporting $v'_i$ increases his profit by reducing his payment. Observe that in both instances all the other players have a zero profit  since they do not win anything and hence pay nothing. Thus, the mechanism is not moral since player $i$ can increase his profit by reporting $v'_i$ instead of $v_i$ without reducing the profit of any of the other players.
\end{proof}

The theory of dominant-strategy mechanisms states that in single item auctions an implementable allocation function uniquely defines the prices. In contrast, when the mechanism is $1$-moral, the same allocation function may be implemented by different payment functions. For example, we show that:

\begin{claim} \label{clm-arb-prices}
Consider the allocation function that allocates the item to the player with the highest value. For any $0\leq c\leq 1$, the mechanism that allocates the item to the player with the highest value and charges him $c\cdot v_j$, where $v_j$ is the second-highest value, is $1$-moral.
\end{claim}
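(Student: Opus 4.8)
The plan is to verify the $1$-moral condition directly, instance by instance, exploiting the fact (which the proof of Observation~\ref{obs-payment-independent} already isolates) that payment independence severely limits the deviations one must consider. First I would record the routine structural facts: the mechanism is individually rational, since a winner with value $v_i$ pays $c\cdot v_j\le v_j\le v_i$ as $c\le 1$ and $v_j$ is the second-highest value; it has no positive transfers, since $c\cdot v_j\ge 0$ and losers pay $0$; and it is payment independent, since the winner's price $c\cdot v_j$ is a function of the other players' reports only. Consequently, fixing an instance $(v_1,\dots,v_n)$ and a player $i$, the only two outcomes $i$ can reach by changing his report are ``win and pay $c\cdot\max_{j\ne i}v_j$'' or ``lose and pay $0$'', so his profit from any report is either $0$ or $v_i-c\cdot\max_{j\ne i}v_j$.

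Next I would split on whether $i$ wins under truthful reporting. If $i$ is the truthful winner, then $v_i=\max_j v_j$, hence $\max_{j\ne i}v_j\le v_i$ and $v_i-c\cdot\max_{j\ne i}v_j\ge(1-c)v_i\ge 0$; so truth-telling already achieves the larger of the two possible profits and $i$ has no profitable deviation, leaving nothing to check. If $i$ is a truthful loser, let $w$ be the winner, so that $v_w=\max_j v_j=\max_{j\ne i}v_j\ge v_i$, and let $s$ be the second-highest value overall; since $i$ is not the top player we have $v_i\le s\le v_w$. Player $i$'s truthful profit is $0$, and any profitable deviation must make him win, yielding profit $v_i-c\,v_w>0$; so the gain from the deviation is exactly $v_i-c\,v_w$.

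It then remains to compare this gain with the total loss of the other players. Before the deviation, $w$ has profit $v_w-c\,s$ and everyone else (including $i$) has profit $0$; after $i$ deviates to a value exceeding $v_w$, $i$ wins and everyone else (including $w$) has profit $0$. Hence the only other player whose profit changes is $w$, who loses $v_w-c\,s\ge 0$, and the total loss equals $v_w-c\,s$. The $1$-moral inequality to verify is thus $v_i-c\,v_w\le v_w-c\,s$, i.e.\ $0\le(v_w-v_i)+c\,(v_w-s)$, which holds because $v_i\le v_w$, $s\le v_w$ and $c\ge 0$, completing the argument. Every computation here is a one-liner, so there is no genuinely hard step; the only point requiring mild care is tie-breaking in the allocation rule, and the argument is insensitive to it — in the winner case any report that still wins leaves the price unchanged and any report that loses gives profit $0\le v_i-c\cdot\max_{j\ne i}v_j$, while in the loser case $i$ can always secure the item by reporting strictly above $v_w$ without affecting the price, so it suffices to analyze that deviation. (As a sanity check, for $c=1$ the gain $v_i-v_w$ is never positive even for a loser, so the mechanism is then outright truthful; the substance of the claim is the range $c<1$.)
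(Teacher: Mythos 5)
Your proposal is correct and follows essentially the same route as the paper's proof: the winner has no profitable deviation by payment independence, and a loser's gain $v_i-c\,v_w$ from overbidding is dominated by the winner's loss $v_w-c\,s$, which is exactly the paper's chain $v_w-c\,s\ge s-c\,v_w\ge v_i-c\,v_w$ in different notation. Your extra remarks on individual rationality, no positive transfers, and tie-breaking are fine but not substantively different.
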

\begin{proof}
	Let $i$ be the player with the highest value. This player is allocated the item, and his profit is non-negative $v_i-c\cdot v_j$ ($v_j$ is the second highest value). Any other report will either make his profit $0$ or will not affect his profit, so player $i$'s dominant strategy is to report his true value.
	
	Consider some player $k\neq i$. In truth-telling, his profit is $0$. Any report below $v_i$ will not make him win the item, but player $k$ can win the item by reporting a value that is bigger than $v_i$. In this case his profit changes to $v_{k}-c\cdot v_i$. This profit might be negative, but in any case it cannot be that the profit of player $k$ increased more than the decrease in profit of player $i$. To see why, observe that $(1+c) \cdot v_i \geq (1+c)\cdot v_j$. This implies that $v_i - c\cdot v_j \geq v_{j}-c\cdot v_i \geq v_{k}-c\cdot v_i$ as required.
\end{proof}

Note that in the above example the higher the value of $c$ the more revenue that the mechanism generates in every instance. In particular, among all possible choices of $c$, $c=1$ gives the mechanism with the highest revenue in that family. Interestingly, when $c=1$ the mechanism is just a second-price auction, which is obviously a dominant strategy mechanism. We show that this is not a coincidence: if $f$ is an allocation function that can be implemented as an $\alpha$-moral mechanism by several payment functions and one of the implementations is a dominant strategy implementation, then in every instance the revenue that the dominant strategy implementation generates is at least as high as the other implementations.
\begin{proposition}\label{prop-truthful-implementation}
Consider a dominant-strategy mechanism $M=(f,p)$ and an $\alpha$-moral mechanism $M'=(f,p')$ (both mechanisms implement the same social choice function), both are individually rational and have no positive transfers. Then, $p' \leq p$. 
\end{proposition}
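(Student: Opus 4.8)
The plan is to fix an instance $(v_1,\dots,v_n)$ and show that for the player $i$ who wins the item under $f$ in that instance we have $p'(v_1,\dots,v_n)_i \le p(v_1,\dots,v_n)_i$; since non-winning players pay $0$ in both mechanisms (by individual rationality and no positive transfers), this gives $p' \le p$ pointwise. Both mechanisms are payment independent: $M$ because it is dominant-strategy, and $M'$ by Observation~\ref{obs-payment-independent}. So write $p_{-i}(v_{-i})$ and $p'_{-i}(v_{-i})$ for the (value-of-$i$-independent) prices charged to $i$ when $i$ wins.

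The key step is to exploit the fact that $M$ is a dominant-strategy mechanism implementing the \emph{same} allocation function $f$. In a dominant-strategy single-item mechanism, $i$ wins exactly on an ``upper'' set of his own values (monotonicity), and the critical price $p_{-i}(v_{-i})$ is precisely the infimum of values $v_i$ at which $i$ wins against $v_{-i}$. Concretely: pick a value $w$ slightly above the critical threshold, so that $i$ wins at $(w, v_{-i})$ with $w$ arbitrarily close to $p_{-i}(v_{-i})$. Now I would argue toward a contradiction: suppose $p'_{-i}(v_{-i}) > p_{-i}(v_{-i})$. Choose $w$ strictly between these two numbers. At the true profile $(w, v_{-i})$, consider player $i$ with true value $w$. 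Under $M'$, truth-telling gives $i$ the item at price $p'_{-i}(v_{-i}) > w$, so $\pi_i(w,v_{-i}\mid w) = w - p'_{-i}(v_{-i}) < 0$. But $i$ could instead \emph{lose on purpose} (report a value below his critical threshold), obtaining profit $0 > w - p'_{-i}(v_{-i})$ — so reporting such a $v'_i$ strictly increases $i$'s profit. Wait: but then moral-ness would require this gain be bounded by $\alpha$ times the others' loss. However, when $i$ stops winning, the allocation $f$ must send the item to some other player or to no one; in either case each $j \ne i$ has profit $\ge 0$ after the deviation, and the loss of the others, $\sum_{j\ne i}(\pi_j(w,v_{-i}\mid v_j) - \pi_j(v'_i,v_{-i}\mid v_j))$, can be negative or zero — it need not cover the gain. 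So this particular deviation already gives a contradiction unless the others' aggregate loss is large, which I must handle.

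The subtlety, and what I expect to be the main obstacle, is exactly that last point: a priori, when $i$ drops out, the item goes to another player $j$, and $j$'s profit could \emph{drop} if under $M'$ player $j$'s winning price $p'_{-j}(w, v_{-i\setminus j})$ exceeds $v_j$ — so the ``others' loss'' term could in principle be large enough to justify $i$'s overpayment. I would resolve this by instead taking $w$ to be just barely above the critical threshold: as $w \downarrow p_{-i}(v_{-i})$, consider the deviation where $i$ reports a value just \emph{below} threshold. Then $f(v'_i, v_{-i}) = f(w', v_{-i})$ for $w'$ just below threshold, which is a fixed allocation independent of how close $w$ is to the threshold. Meanwhile the other players' values $v_{-i}$ are fixed, so the term $\sum_{j\ne i}(\pi_j(w,v_{-i}\mid v_j) - \pi_j(v'_i,v_{-i}\mid v_j))$ is a fixed finite quantity $C$ not depending on $w$. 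But $i$'s gain from deviating is $0 - (w - p'_{-i}(v_{-i})) = p'_{-i}(v_{-i}) - w$, which as $w \downarrow p_{-i}(v_{-i})$ tends to $p'_{-i}(v_{-i}) - p_{-i}(v_{-i}) > 0$ — a fixed positive number. So far this is still just ``gain $\le \alpha C$'', not yet a contradiction. To finish, I would apply the same reasoning symmetrically or push $w$ even lower: actually, the cleanest route is to note that for \emph{every} $w$ with $p_{-i}(v_{-i}) < w < p'_{-i}(v_{-i})$, player $i$'s profit under truthful play in $M'$ is negative while deviating to lose yields $0$; the $\alpha$-moral inequality then forces $p'_{-i}(v_{-i}) - w \le \alpha C$ for \emph{all} such $w$, with $C$ the fixed loss incurred by others when $i$ switches to the fixed losing allocation. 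Taking $w \downarrow p_{-i}(v_{-i})$ gives $p'_{-i}(v_{-i}) - p_{-i}(v_{-i}) \le \alpha C$. If $C \le 0$ we are immediately done (contradiction). If $C > 0$, I would iterate: the fact that others incur loss $C$ when $i$ drops out means some $j \ne i$ wins under $M'$ at a price exceeding $v_j$, and I would run the argument again at $j$'s profile, or more directly observe that individual rationality of $M'$ forces each $j$'s post-deviation profit to be $\ge 0$ only if $j$ doesn't win — pinning down that the item must go unallocated when $i$ drops, making $C = 0$. This case analysis on where the item goes when $i$ withdraws is the heart of the argument; once it is clean, monotonicity of $M$ plus the limit $w \downarrow p_{-i}(v_{-i})$ delivers $p'_{-i}(v_{-i}) \le p_{-i}(v_{-i})$, hence $p' \le p$.
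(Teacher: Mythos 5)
Your setup --- payment independence of both mechanisms, monotonicity of $f$ with $p_{-i}(v_{-i})$ as the critical value, and an intermediate report $w$ with $p_{-i}(v_{-i}) < w < p'_{-i}(v_{-i})$ --- is exactly the paper's argument, and it already finishes the proof at the point where you compute $\pi_i(w,v_{-i}\mid w) = w - p'_{-i}(v_{-i}) < 0$: a winning player paying strictly more than his value directly violates the individual-rationality condition $v_i\cdot\delta(i=f(v))\ge p(v)_i$, which $M'$ is assumed to satisfy. No deviation, no moral inequality, and no limit $w\downarrow p_{-i}(v_{-i})$ are needed; the paper stops exactly there. Everything after that point in your write-up is a detour, and the detour is where the only real wobble lies: the quantity $C=\sum_{j\ne i}\bigl(\pi_j(w,v_{-i}\mid v_j)-\pi_j(v'_i,v_{-i}\mid v_j)\bigr)$ is always $\le 0$, because individual rationality together with no positive transfers forces every player's profit to be nonnegative in every instance (a loser pays $0$; a winner pays at most his value), while the pre-deviation profits of the players $j\ne i$ are exactly $0$. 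So your feared case $C>0$ cannot occur, the proposed iteration is unnecessary, and your parenthetical that IR bounds $j$'s post-deviation profit ``only if $j$ doesn't win'' has it backwards --- IR is precisely the statement that a \emph{winner} does not pay more than his value. With that observation the moral inequality reads $0 < p'_{-i}(v_{-i})-w\le\alpha C\le 0$, a contradiction, so your route does close; it is just several steps longer than it needs to be.
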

\begin{proof}
	Recall that for any instance $v=(v_1,\ldots,v_n)$ and player $i\neq f(v)$ we have that $p(v)_i=p'(v)_i=0$. Assume towards contradiction that there exists $v_i,v_{-i}$ such that $f(v_i,v_{-i})=i$ but $p'(v_i,v_{-i})_i>p(v_i,v_{-i})_i$. Let $v'_i$ be such that $p'(v_i,v_{-i})_i>v'_{i}>p(v_i,v_{-i})_i$. Since $M$ is a dominant-strategy mechanism we have that $f$ is monotone and $p(v_i,v_{-i})_i$ is the critical value of player $i$. This implies that player $i$ wins the item in the instance $(v'_i,v_{-i})$ in both mechanisms. 	
Since every moral mechanism is payment independent (Observation \ref{obs-payment-independent}), in $M'$ player $i$ pays $p'(v'_i,v_{-i})_i = p'(v_i,v_{-i})_i$. Hence, we get that in this case $v'_i < p'(v'_i,v_{-i})_i$ in contradiction to individual rationality.
\end{proof}

\subsection{Characterizations and non-Characterizations of Moral Mechanisms}
Working directly with the definition of moral mechanisms in order to construct and analyze moral mechanisms is not always easy. Thus, we identify a certain natural class of moral mechanisms, called \emph{profit maximizers}, that is easier to work with. Fortunately, we will show that $\alpha$-profit maximizers are equivalent to $\alpha$-moral mechanisms.

\begin{definition}[Potential Profit]
Let $M=(f,p)$ be a payment independent mechanism. Fix $v_{-i}$. Let $p_{-i}(v_{-i})$ be the payment of player $i$ if there exists some value $v'_i$ such that $f(v'_i,v_{-i})=i$. If such $v'_i$ does not exist then $p_{-i}(v_{-i})$ is defined to be $\infty$.
$p_{-i}(v_{-i})$ is the \emph{potential payment} of player $i$ and the \emph{potential profit} of player $i$ is $v_i-p_{-i}(v_{-i})$.
\end{definition}

	\begin{definition}[$\alpha$-Profit Maximizers]
		Let $M=(f,p)$ be an individually rational payment independent mechanism with no positive transfers. $M$ is a \emph{profit maximizer} if when the item is allocated, it is allocated to a player with a maximal non-negative potential profit. Furthermore, the item is allocated whenever the potential profit of at least one of the players is positive.

		A mechanism is \emph{$\alpha$-profit maximizer} for $0\leq \alpha \leq 1$ if in addition the following holds: for every $(v_1,\ldots, v_n)$ such that the item is allocated to player $i$,		
		for every player $j\neq i$:
		\begin{align*}
		v_j - p_{-j}(v_{-j}) \leq \alpha \cdot (v_i - p_{-i}(v_{-i}))
		\end{align*}
		
	\end{definition}

We refer to $1$-profit maximizers as profit maximizers. Note that, every dominant-strategy mechanism is a profit maximizer in the sense that if the potential profit of player $i$ is positive, then his value is above his critical value (which is his potential payment). However, $\alpha$-profit maximizer (for $\alpha>0$) are more general in the sense that there might be multiple players with positive potential profits.

\begin{claim}
An individually rational mechanism with no positive transfers is $\alpha$-moral if and only if it is $\alpha$-profit maximizing. 
\end{claim}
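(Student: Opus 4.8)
The plan is to prove the two implications separately, relying throughout on one elementary fact about any individually rational, payment-independent mechanism with no positive transfers: in every instance a losing player has profit $0$ and a winning player $i$ has profit equal to his potential profit $v_i - p_{-i}(v_{-i})$; since a single item has at most one winner, any unilateral deviation changes the profit of at most two players, namely the deviator and whoever is the winner before and after the deviation. Note that both classes in the statement are payment independent — $\alpha$-moral mechanisms by Observation~\ref{obs-payment-independent}, and $\alpha$-profit maximizers by definition — so potential payments and potential profits are well defined on both sides, and the direction ``$\Rightarrow$'' inherits individual rationality and no positive transfers from the hypothesis.

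\textbf{From $\alpha$-moral to $\alpha$-profit maximizing.} I would fix an instance $(v_1,\dots,v_n)$ in which the item goes to player $i$ (the non-allocation case is treated separately). Individual rationality gives $v_i - p_{-i}(v_{-i}) \ge 0$. For $j \ne i$: if no report of $j$ wins against $v_{-j}$ then $p_{-j}(v_{-j})=\infty$ and the desired inequality $v_j - p_{-j}(v_{-j}) \le \alpha(v_i - p_{-i}(v_{-i}))$ is trivial; otherwise $j$ has a winning deviation, which by payment independence moves only $j$'s profit (from $0$ up to $v_j - p_{-j}(v_{-j})$) and $i$'s profit (from $v_i - p_{-i}(v_{-i})$ down to $0$). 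If $v_j - p_{-j}(v_{-j}) > 0$, the morality inequality applied to this deviation gives exactly $v_j - p_{-j}(v_{-j}) \le \alpha(v_i - p_{-i}(v_{-i}))$; if $v_j - p_{-j}(v_{-j}) \le 0$ the inequality is trivial since $\alpha\ge 0$ and $v_i - p_{-i}(v_{-i})\ge 0$. Because $\alpha\le 1$ this also shows $i$ has maximal potential profit, giving the first two conditions of an $\alpha$-profit maximizer; for the allocation condition, if the item were not allocated while some $j$ had $v_j - p_{-j}(v_{-j})>0$, that $j$ would have a winning deviation strictly raising his profit above $0$ while leaving everyone else at $0$, contradicting morality.

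\textbf{From $\alpha$-profit maximizing to $\alpha$-moral.} I would fix a truthful instance $w=(v_i,v_{-i})$, a deviation $v_i'$, and assume $\pi_i(v_i',v_{-i}\,|\,v_i)-\pi_i(v_i,v_{-i}\,|\,v_i)>0$. By payment independence $i$'s profit is $0$ when he loses and $v_i - p_{-i}(v_{-i})$ when he wins, whichever of $v_i,v_i'$ he reports, so a strict increase forces $i$ to lose in $w$, win in $w'=(v_i',v_{-i})$, and $v_i - p_{-i}(v_{-i})>0$; then the gain equals $v_i - p_{-i}(v_{-i})$. Since $i$ has positive potential profit in $w$, the profit-maximizer rule forces the item to be allocated in $w$, necessarily to some $m\ne i$. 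Now $\sum_{j\ne i}\big(\pi_j(w\,|\,v_j)-\pi_j(w'\,|\,v_j)\big)$ collapses to a single term: in $w'$ player $i$ wins so every $j\ne i$ has profit $0$, and in $w$ only $m$ has nonzero profit, equal to $v_m - p_{-m}(w_{-m})$. Finally the $\alpha$-profit-maximizer inequality for instance $w$ with winner $m$, taken at $j=i$, gives $v_i - p_{-i}(v_{-i}) \le \alpha\,(v_m - p_{-m}(w_{-m}))$, which is exactly the morality inequality for this deviation.

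\textbf{Where the work is.} Most of the above is bookkeeping, but two points need care. First, ``potential profit'' must be read with the value each player \emph{reports in the instance under consideration} — the true value in $w$ but the deviation $v_i'$ in $w'$ — and the degenerate cases (a player with no winning report; a deviation that does not change the winner; the $\infty$ potential payment) must be dispatched cleanly. Second, and more conceptually, one must notice that in the second direction the sum over $j\ne i$ reduces to the single contribution of the \emph{truthful} winner $m$, and that the factor $\alpha$ there comes from the $\alpha$-inequality in the definition of $\alpha$-profit maximizer, not merely from maximality of the winner's potential profit — exactly mirroring the fact that the first direction genuinely uses $\alpha$-morality rather than plain truthfulness. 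I expect this correspondence to be the crux that makes the equivalence work for every $\alpha\in[0,1]$.
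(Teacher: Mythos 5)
Your proposal is correct and follows essentially the same route as the paper's own proof: in both directions you observe that, by payment independence, a unilateral deviation can only transfer the item between the deviator and the truthful winner, so the morality inequality for that deviation is exactly the $\alpha$-profit-maximizer inequality between their potential profits. Your write-up is merely more explicit about the edge cases (the no-allocation instance, the $\infty$ potential payment, and the maximality of the winner's potential profit when $\alpha\leq 1$) that the paper's argument treats implicitly.
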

\begin{proof}
Consider an $\alpha$-profit maximizing mechanism $M$. The profit of the winner is always non-negative. By no positive transfers, the profit of all other players is $0$. Observe that the winner will not misreport his value since
$M$ is payment independent and hence the winner's profit can only be reduced to $0$ (when he reports a value that makes him lose the item). As for the losers, let $i=f(v_1,\ldots,v_n)$ and consider a loser $j$. Player $j$ can misreport and change his profit from $0$ to his potential profit $v_j - p_{-j}(v_{-j})$. However, in this case the winner's profit decreases by his potential profit $(v_i - p_{-i}(v_{-i}))$. Since this is an $\alpha$-profit maximizer, we have that $v_j - p_{-j}(v_{-j}) \leq \alpha (v_i - p_{-i}(v_{-i}))$, as required from an $\alpha$-moral mechanism. 
Finally, in the case that the item is not allocated (i.e., 
$0=f(v_1,\ldots,v_n)$) we have that the potential profit of any loser $j$ is non-positive (i.e., $v_j - p_{-j}(v_{-j})\leq 0$), thus player $j$ cannot increase his profit by misreporting. Hence the mechanism is $\alpha$-moral. 

In the other direction, consider an individually rational $\alpha$-moral mechanism with no positive transfers. Here, the only possible effect of misreporting is that the profit of the winner decreases by its potential profit and the profit of a loser increases by its potential profit. However, in an $\alpha$-moral mechanism, $\alpha$ times the potential profit of the winner must be at least as large as the potential profit of any loser, otherwise misreporting is beneficial. This implies that the mechanism is $\alpha$-profit maximizer.
\end{proof}

\full{In Proposition \ref{prop-lattice} in the appendix}\itcs{In the full version} we rely on the
equivalence of $\alpha$-profit maximizers and $\alpha$-moral mechanisms 
%
to prove that the payment functions that implement a given allocation function form a lattice-like structure.

A basic building block in mechanism design is the characterization of monotone allocation functions as the set of implementable functions in single parameter domains. Can we hope to obtain a similar characterization for moral mechanisms? We have already seen that when the mechanism is moral the allocation might not be monotone (Claim \ref{claim-allocation-not-monotone}). One might hope that a different characterization of the morally implementable allocation functions exists. \full{In Claim \ref{clm-no-char}}\itcs{In the full version} we show that essentially there is no simple characterization for $1$-moral mechanisms. This is done by providing a family of allocation functions that are implementable using $1$-moral mechanisms and their allocation can be arbitrary in a wide range of values, when $v_1+ v_2\geq 1$ and $v_1,v_2\leq 1$. In general, we do not know whether a characterization of the allocation functions of $\alpha$-moral mechanisms exists. \full{However, Claim \ref{clm-rule} in the appendix provides}\itcs{However, in the full version we provide} a simple tool to rule out the implementability of some functions.

\section{Revenue Maximization with Independent Distributions}\label{sec-independent} 

Our main technical effort is invested in understanding whether the auctioneer can take advantage of the moral standards of the players to generate more revenue than the optimal truthful mechanism. Section \ref{sec-correlated} shows that the answer is yes when the values are drawn in a correlated manner. In contrast, we now show that moral mechanisms do not generate more revenue than truthful ones when the values are drawn i.i.d. from some set of distributions (which includes the uniform and exponential distributions).

Recall that the allocation function of moral mechanisms does not uniquely determine the payments. However, by Proposition \ref{prop-truthful-implementation}, if one of the implementations of a moral mechanism is a dominant strategy one then this implementation gives the highest revenue in every instance. By itself, of course, this proposition does not guarantee that the optimal moral mechanism is truthful. For example, the mechanism given in Section \ref{sec-correlated} is moral and is not dominant strategy (and generates more revenue than any dominant strategy mechanism). 

Alternatively, one could hope to extend Myerson's proof to hold for moral mechanisms as well. However, the apparent lack of structure of moral mechanisms (e.g., no condition analogous to monotonicity) is a major obstacle to implementing this approach. Instead, we develop an alternative technique that characterizes optimal moral mechanisms by directly characterizing the payment functions. Since moral mechanisms are profit maximizers, the payment functions determine the allocation function up to tie-breaking. Our proof holds for truthful mechanisms, which are $0$ moral, as well. In fact, whereas for moral mechanisms we need some additional constraints on the distributions, for the simpler case of truthful mechanisms we provide \itcs{in the full version} an alternative proof to Myerson's characterization when values are drawn from independent (not necessarily identical) regular distributions \full{(Section \ref{sec-truthful})}.


For convenience, we assume that the support of the distributions is discrete. We use the following extension of virtual valuations and regularity for discrete distributions: 

\begin{definition}[Chung and Ely \cite{chung2007foundations}]\label{def-virtual-value}
Fix some $\eps>0$. Let $\mathcal F$ be a distribution with support $\{0,\eps, 2\eps, 3\eps,..., 1-\eps, 1\}$. Let $F,f$ be its CDF and PDF, respectively. Let the \emph{virtual valuation} of $\mathcal F$ be ${\varphi(v)= v-\eps\cdot \dfrac{1-F(v)}{f(v)}}$. The distribution $\mathcal F$ is \emph{regular} if $\varphi(\cdot)$ is non-decreasing.\footnote{In fact, Chung and Ely assume that the support is $\{a,a+\eps, a+2\eps, 3\eps,..., b-\eps, b\}$ for some $b>a$. We use $a=0$ and $b=1$ to slightly simplify the proof, but our proof can easily be extended to support all other values of $a$ and $b$.}
\end{definition}

We identify a class of regular distributions for which the revenue-maximizing moral mechanism is truthful:

\begin{definition}
Fix some $\eps>0$. Let $\mathcal F$ be a distribution with support $\{0,\eps, 2\eps, 3\eps,..., 1-\eps, 1\}$. Let $F,f$ be its CDF and PDF, respectively. We say that $\mathcal F$ is \emph{standard} if for every $v>v'$: 
	\begin{align*}
	v -\eps \cdot \frac {(1-F(v))} {f(v)}\geq  v'-\eps  \cdot \frac {(1-F(v))} {f(v')} 
	\end{align*}
\end{definition}

Note the syntactic similarity of the definitions: the definition of a regular distribution is obtained by replacing the term $F(v)$ in the RHS of the definition of a standard distribution with $F(v')$. In particular, any standard distribution is also regular since for $v\geq v'$, $(1-F(v))\leq (1-F(v'))$. We show that:
\begin{proposition} \label{prop-standad}
The following distributions are standard:
\begin{enumerate}
\item The uniform distribution on $[a,b]$, for some $0\leq a <b$.
\item The exponential distribution.
\item\label{cond-increasing-pdf} All distributions such that for every $v \geq v'$ in the support, $f(v)\geq \frac{f(v')}{1+f(v')}$.
\end{enumerate}
\end{proposition}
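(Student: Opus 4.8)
The plan is to first rewrite the defining inequality of a standard distribution in a more convenient form. Since $v,v'$ lie in the support we have $f(v),f(v')>0$, so the condition $v-\eps\frac{1-F(v)}{f(v)}\ge v'-\eps\frac{1-F(v)}{f(v')}$ is equivalent to
\[
(1-F(v))\Bigl(\tfrac{1}{f(v)}-\tfrac{1}{f(v')}\Bigr)\ \le\ \frac{v-v'}{\eps}\qquad\text{for all }v>v'\text{ in the support.}
\]
Two trivial observations cut the work down: the right-hand side is always at least $1$, because $v$ and $v'$ are distinct multiples of $\eps$; and if $f(v)\ge f(v')$ the left-hand side is nonpositive. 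Hence it suffices to show $(1-F(v))\bigl(\tfrac1{f(v)}-\tfrac1{f(v')}\bigr)\le 1$ whenever $f(v)<f(v')$, for which we may use $0\le 1-F(v)\le 1$.

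For item~\ref{cond-increasing-pdf}, the key remark is that the hypothesis ``$f(v)\ge \frac{f(v')}{1+f(v')}$ for all $v\ge v'$ in the support'' is, after clearing denominators and dividing by $f(v)f(v')>0$, exactly the statement $\tfrac{1}{f(v)}-\tfrac{1}{f(v')}\le 1$. Plugging this into the reduced inequality gives $(1-F(v))\bigl(\tfrac1{f(v)}-\tfrac1{f(v')}\bigr)\le 1\cdot 1=1$ (the bracket being negative only helps), which is what we need. Item~1 is then immediate, since the discrete uniform has a constant pmf and thus $\tfrac1{f(v)}-\tfrac1{f(v')}=0$; in particular it satisfies the hypothesis of item~\ref{cond-increasing-pdf}.

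The only case requiring an actual computation is item~2, because the truncated geometric fails the item~\ref{cond-increasing-pdf} hypothesis once $\eps$ is small. Writing the discretized exponential on $\{0,\eps,\dots,1\}$ as $f(i\eps)=\frac{(1-q)q^{i}}{1-q^{N+1}}$ with $N=1/\eps$ and $0<q<1$, I would sum the geometric tail to get $1-F(i\eps)=\frac{q^{i+1}-q^{N+1}}{1-q^{N+1}}$, substitute into the reduced inequality with $v=i\eps>v'=i'\eps$ and $k:=i-i'\ge 1$, and simplify to the requirement
\[
k(1-q)\ \ge\ q-q^{k+1}-q^{\,N+1-i}\bigl(1-q^{k}\bigr).
\]
Since $q^{N+1-i}(1-q^{k})\ge 0$, this follows from $k(1-q)\ge q-q^{k+1}=q(1-q)(1+q+\dots+q^{k-1})$, i.e.\ from $k\ge q+q^{2}+\dots+q^{k}$, which holds because the sum has $k$ terms, each strictly below $1$.

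I expect item~2 to be essentially the entire proof in terms of effort, and its one mild subtlety is checking that the truncation correction $-q^{N+1-i}(1-q^{k})$ is nonpositive (using $q^{N+1-i'}=q^{N+1-i}q^{k}$) so that it can simply be discarded. Items~1 and~\ref{cond-increasing-pdf} collapse to the single observation that the item~\ref{cond-increasing-pdf} hypothesis is precisely $\tfrac1{f(v)}-\tfrac1{f(v')}\le 1$, combined with the bounds $v-v'\ge\eps$ and $1-F(v)\le 1$.
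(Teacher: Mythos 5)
Your proof is correct and follows essentially the same route as the paper's: clear denominators in the defining inequality, use $v-v'\ge\eps$ and $1-F(v)\le 1$ to reduce item~3 to its hypothesis (your observation that the hypothesis is exactly $\tfrac1{f(v)}-\tfrac1{f(v')}\le 1$ is a clean restatement of the paper's chain of ``stronger conditions''), derive item~1 from item~3, and verify the exponential by a direct computation. The only substantive difference is that you discretize the exponential as a normalized truncated geometric with arbitrary ratio $q$, whereas the paper takes $F(x)=1-e^{-x/\eps}$ and lumps the leftover mass at the top of the support; both are legitimate readings of ``the exponential distribution'' on the grid, and your calculation --- including the check that the truncation correction $-q^{N+1-i}(1-q^{k})$ is nonpositive and can be dropped --- is sound.
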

\begin{proof}
		We first prove that distributions that satisfy condition \ref{cond-increasing-pdf} are standard. To this end it suffices to show that if condition \ref{cond-increasing-pdf} holds then:
		\begin{align*}
		f(v') \cdot \eps \cdot (1-F(v)) \leq f(v) \cdot \eps  \cdot(1-F(v)) + f(v)\cdot f(v') \cdot (v-v')
		\end{align*}
		First note that $v\geq v'+\eps$, thus $v-v'\geq \eps$. Hence, a stronger condition is:
		\begin{align*}
		f(v') \cdot  (1-F(v)) \leq f(v)  \cdot(1-F(v)) + f(v)\cdot f(v')
		\end{align*}
		Since $(1-F(v)) >0$, we get that a stronger condition is:
		\begin{align*}
		f(v') \leq f(v)  + f(v)\cdot f(v')
		\end{align*}
		It is easy to see that the last inequality holds for $f(v)\geq \frac{f(v')}{1+f(v')}$.
		
		It is obvious that the uniform distribution satisfies condition \ref{cond-increasing-pdf}. We now show that the exponential distribution is standard. Consider the exponential distribution with support in $[0,\eps, 2\eps, \ldots, 1]$: $F(x)=1-e^{-\frac x \eps}$ for $x<1$ in the support and $F(1)=1$. Note that $f(x)=F(x)-F(x-\eps) = e^{\frac {-x+\eps} \eps} - e^{-\frac x \eps }=(e -1)\cdot e^{-\frac x \eps}$. We need to show that:
		\begin{align*}
		(e-1)\cdot e^{-\frac {v'} \eps} \cdot \eps \cdot (e^{-\frac {v} \eps}) &\leq (e-1)\cdot e^{-\frac {v} \eps} \cdot \eps  \cdot (e^{-\frac {v} \eps}) + (e-1)\cdot e^{-\frac {v} \eps}\cdot (e-1)\cdot e^{-\frac {v'} \eps} \cdot (v-v')\\
		e^{-\frac {v'} \eps} \cdot \eps  &\leq \eps  \cdot (e^{-\frac {v} \eps}) +  (e-1)\cdot e^{-\frac {v'} \eps} \cdot (v-v')
		\end{align*}
		which holds for any $v\geq v'$. 
\end{proof}

\begin{theorem}\label{thm-standard}
Fix $0 \leq \alpha \leq 1$. Let $M$ be an $\alpha$-moral mechanism for selling one item to $n$ players. Suppose that the values of the players are drawn i.i.d. from a distribution $\mathcal F$ and that $\mathcal F$ is standard. Then, there is a \emph{dominant-strategy} mechanism $M'$ with revenue that is at least as large as the revenue of $M$.
\end{theorem}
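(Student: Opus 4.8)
The plan is to prove the theorem by directly characterizing the payment (equivalently, the potential payment) functions of a revenue-maximizing $\alpha$-moral mechanism, and then showing that an optimal one can always be taken to be payment independent with thresholds that coincide with Myerson's reserve structure — hence dominant-strategy implementable. Since every $\alpha$-moral mechanism is an $\alpha$-profit maximizer (the claim just before Definition~\ref{def-virtual-value}), we may work entirely with the potential payment functions $p_{-1},\dots,p_{-n}$: these determine the allocation up to tie-breaking, and since the mechanism is deterministic and the distribution is discrete we only need to worry about ties on a measure-zero set, which can be broken arbitrarily without affecting revenue. So the real object of study is the tuple $(p_{-i}(\cdot))_i$ of functions from $(\mathbb{R}^+)^{n-1}$ (restricted to the discrete support) to $\mathbb{R}^+\cup\{\infty\}$.

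First I would set up the expected-revenue functional. Fixing $v_{-i}$, player $i$ wins (and pays $p_{-i}(v_{-i})$) exactly when $v_i - p_{-i}(v_{-i})$ is the largest non-negative potential profit; the $\alpha$-moral constraint says that whenever $i$ wins, every $j\ne i$ has $v_j - p_{-j}(v_{-j}) \le \alpha\,(v_i - p_{-i}(v_{-i}))$. The key structural move is to consider perturbing a single value $p_{-i}(v_{-i})$ while holding everything else fixed and ask how revenue changes; raising the threshold loses the sales to player~$i$ in the marginal region but increases the price in the region where $i$ still wins, and it may also hand some instances to another player. Using the standardness hypothesis I expect the revenue contribution to behave like a discrete virtual-value expression, so that the optimal threshold for player~$i$ against $v_{-i}$ is governed by $\varphi$. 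The crucial point where standardness (as opposed to mere regularity) is needed: the $\alpha$-moral constraint allows multiple players to simultaneously have positive potential profit, so when we compare "allocate to $i$ at price $p_{-i}(v_{-i})$" versus "allocate to $j$ at price $p_{-j}(v_{-j})$" for the same instance, the two associated virtual-value-like terms involve $F$ evaluated at different arguments; the inequality defining a standard distribution, $v - \eps(1-F(v))/f(v) \ge v' - \eps(1-F(v))/f(v')$ for $v>v'$, is exactly what is needed to conclude that reallocating toward the higher bidder (and ultimately toward the Myerson winner) never decreases revenue, even across these "mismatched" thresholds. I would make this precise by an exchange argument: starting from an arbitrary optimal $\alpha$-moral mechanism, show that one can monotonically modify the potential payment functions — pushing each $p_{-i}(v_{-i})$ toward the common Myerson reserve $r$ and removing any instance where a non-winner has strictly positive potential profit — without ever decreasing expected revenue, and arrive at a mechanism that is payment independent with a uniform threshold, i.e. the second-price-with-reserve mechanism, which is dominant strategy.

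The main obstacle I anticipate is handling the interaction between the allocation (tie-breaking / "which bidder wins") and the payments when several bidders have positive potential profit — precisely the feature that breaks Myerson's "ignore the payments" approach, as the introduction emphasizes. Concretely: when I lower $p_{-i}(v_{-i})$ I may turn a losing instance of $i$ into a winning one, displacing the previous winner $j$ and changing revenue by $p_{-i}(v_{-i}) - p_{-j}(v_{-j})$ on that instance; bounding the net effect over all such instances requires carefully pairing up the "lost" revenue against the "gained" revenue, and this is where the discrete virtual-value machinery plus standardness has to be deployed delicately. A secondary technical point is that the potential payment $p_{-i}(v_{-i})$ may be $\infty$ for some $v_{-i}$ (player $i$ can never win), and the argument must treat these cases uniformly — but since such instances only ever decrease the set of winners, they cannot help revenue, so one can reduce to the case where every $p_{-i}$ is finite on the relevant region before running the exchange argument. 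Once the exchange argument is complete, the conclusion that $M'$ is dominant strategy is immediate: a payment-independent mechanism with a uniform reserve price is monotone with the critical value equal to $\max(r, \text{second highest bid})$, hence truthful by the standard single-parameter characterization.
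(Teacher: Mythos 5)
Your high-level strategy is the same as the paper's: reduce to $\alpha$-profit maximizers, work directly with the potential payment functions $p_{-i}(\cdot)$, run a local exchange argument that perturbs these thresholds without decreasing expected revenue, and land on second-price-with-reserve (so the optimal moral mechanism is truthful). You also correctly locate standardness as the inequality needed for a comparison in which $F$ is evaluated at a ``mismatched'' argument. However, the proposal leaves a genuine gap, and it is exactly the one you flag yourself as the main obstacle: you never say how to pair the revenue lost against the revenue gained when a perturbation displaces a winner, and without that the exchange argument does not close. The paper resolves this with an inductive scaffold you are missing: it defines a partial order $\succ$ on the vectors $v_{-i}$ (lexicographic on sorted values) and processes them from the top down, proving that once $p_{-j}(\hv_{-j})\geq\max(\hv_{-j})$ holds for every $\hv_{-j}\succ v_{-i}$, then in any instance $(z,v_{-i})$ with $z$ large only player $i$ can be profitable. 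That inductive hypothesis is what makes each local $\eps$-perturbation analyzable instance by instance; without an ordering, the set of instances whose winner changes under your perturbation is uncontrolled.

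Two further points where the plan diverges from what actually works. First, the direction of the perturbation: in the region $\max(v_{-i})\geq x$ the payments that violate the target are too \emph{low} and must be raised to $\max(v_{-i})$ (which is generally far above the reserve $r$), so ``pushing each $p_{-i}(v_{-i})$ toward the common Myerson reserve'' misdescribes the bulk of the argument; only below the monopolist price does the threshold get set to $x$. Second, the hard case (where no other player is profitable at the critical instance) requires a \emph{simultaneous} modification of two entries --- raising $p_{-i}(v_{-i})$ by $\eps$ while lowering $p_{-j}(v_{-j})$ by $\eps$ for the highest-valued other player $j$ --- and standardness enters as a comparison of expected revenues across the two one-dimensional slices $\{(z,v_{-i})\}_z$ and $\{(z,v_{-j})\}_z$, weighted by $f(v_i)$ and $f(v_j)$, with the gain only guaranteed on the event $z>v_j$ (hence the factor $1-F(v_j)$ on both sides). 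Your description of standardness as licensing a within-instance reallocation toward the higher bidder is not where the condition is actually used. As written, the proposal is a plausible plan that matches the paper's philosophy but does not yet contain the ideas (the order $\succ$, the two-case split, the two-coordinate swap) that turn it into a proof.
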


\full{The full proof of the theorem is given in Section \ref{sec-thm-standard}.}
Since the proof is quite involved, we now provide a sketch for the $2$-player case to gently introduce the reader to the main building blocks of the proof. 
 The proof for $n$ players is similar except for the induction hypothesis, which is more subtle in the $n$ player case. \itcs{This proof can be found in the full version.}

\subsection{An Outline of the Proof for 2 Players}

Our goal is to prove that the optimal moral auction for two players is in fact truthful. Specifically, we will prove that this is a second-price auction with a reserve price. The reserve price equals the monopolist price of $\mathcal F$, denoted $x=\arg\max_pp\cdot (1-F(p-\eps))$ (recall that the monopolist price of $\mathcal F$ is the take-it-or-leave-it price that maximizes the revenue in an auction with only one player whose value is drawn from $\mathcal F$). The main challenge is to prove the following lemma:

\begin{lemma}\label{lemma-sketch-main}
For every player $i$ and value $v\geq x$ of the other player we have that $p_{-i}(v)\geq v$.
\end{lemma}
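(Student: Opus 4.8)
I would argue by contradiction: from a violation I will build another $\alpha$-moral mechanism whose revenue is at least that of the optimal $M$ and which does satisfy the lemma, so by optimality it is again optimal (and, choosing $M$ among optimal mechanisms to maximize the total potential payment, this already contradicts the choice of $M$). I use throughout the equivalence of $\alpha$-moral mechanisms and $\alpha$-profit maximizers, so $M$ is described by its two potential-payment functions $p_{-1}(\cdot),p_{-2}(\cdot)$ and a tie-breaking rule: the item goes to a player of maximum non-negative potential profit, and is sold exactly when some potential profit is positive. As a preliminary I would establish the moral analogue of the usual reserve-price fact, namely that in an optimal mechanism no winner ever pays below the monopolist price $x$ (raise the price toward $x$ and invoke that $p\mapsto p(1-F(p-\eps))$ is non-decreasing for $p\le x$, which follows from regularity); this makes every ``column'' $v_2=w<x$ inert. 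Now suppose $p_{-1}(v)<v$ with $v\ge x$.

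\textbf{The modification and why it is $\alpha$-moral.} Let $\tilde M$ raise every potential payment on $[x,\infty)$ up to at least the competitor's value: $\tilde p_{-j}(w):=\max\{p_{-j}(w),w\}$ for $w\ge x$ and $\tilde p_{-j}(w):=p_{-j}(w)$ for $w<x$, for both $j$, with the allocation given by the profit-maximizer rule for $(\tilde p_{-1},\tilde p_{-2})$ and ties broken to keep revenue maximal (in particular selling at a zero potential profit rather than not selling). Individual rationality and no positive transfers are immediate, and $\tilde M$ satisfies the lemma by construction. For $\alpha$-morality: raising a potential payment only lowers the corresponding potential profit, so in every instance each player's potential profit in $\tilde M$ is at most its value in $M$; for $\alpha=1$ this already gives it since every profit maximizer is $1$-moral, and for $\alpha<1$ one checks that whenever the winner in $\tilde M$ is a player whose payment was raised, say player $1$ with $v_2\ge x$, then $v_1-\tilde p_{-1}(v_2)\ge 0$ forces $v_1\ge \tilde p_{-1}(v_2)\ge v_2\ge x$, hence $\tilde p_{-2}(v_1)\ge v_1\ge v_2$, so the loser's potential profit is non-positive and the $\alpha$-profit-maximizer inequality is vacuous; in all remaining cases the inequality is inherited from $M$ since the loser's potential profit only decreased.

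\textbf{The revenue comparison -- the crux, where ``standard'' enters.} By the preliminary it suffices to compare $M$ and $\tilde M$ column by column on columns $v_2=v\ge x$ (and symmetrically with the players exchanged). In such a column $\tilde M$ is explicit: it sells to player $1$ at price $v$ for every $v_1\ge v$, and for $v_1<v$ it sells to player $2$ at price $\max\{p_{-2}(v_1),v_1\}$ when this is below $v$ and does not sell otherwise. Comparing instance by instance, one sees that $\tilde M$ collects at least as much as $M$ on every instance except possibly the $(v_1,v)$ with $v_1<v$ at which player $1$ won in $M$ at the discount price $p_{-1}(v)<v$ but at which in $\tilde M$ player $2$ cannot be sold to either; on all the other instances the change is non-negative (for $v_1\ge v$ because the profit-maximizer constraints force $M$ to collect at most $p_{-1}(v)<v$ there, while $\tilde M$ collects $v$; for $v_1<v$ with player $2$ the winner in $M$ because $\tilde p_{-2}$ only went up). The heart of the argument is then to bound the ``bad'' set -- here one must exploit the $\alpha$-profit-maximizer constraints tying $p_{-1}(v)$ to the values of $p_{-2}(\cdot)$ -- and, after weighting by the i.i.d.\ density $f$, to verify that the column-wise net change is non-negative. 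This final inequality is exactly an instance of the defining inequality of a \emph{standard} distribution, $v-\eps\frac{1-F(v)}{f(v)}\ \ge\ v'-\eps\frac{1-F(v)}{f(v')}$: the point is that a sale lost here was a sale to a player competing against value $v$, so the relevant conditional-survival weight is taken at the higher value $v$ rather than at the player's own value $v'$ -- precisely the replacement of $F(v')$ by $F(v)$ that distinguishes ``standard'' from ``regular''; and the threshold is the monopolist price $x$ because below $x$ the same replacement would decrease revenue, which is also why the optimal reserve comes out to be $x$. I expect this last step -- delimiting the bad instances via the profit-maximizer constraints, bounding their total weight tightly enough for the standard inequality to close the gap, and coping with the coupling between the two modified payment functions -- to be the main obstacle; the $\alpha$-morality check on $\tilde M$ is routine by comparison.
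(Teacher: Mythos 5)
There is a genuine gap in the revenue comparison, and it is not a detail that can be patched: the one-shot, raise-only modification $\tilde p_{-j}(w):=\max\{p_{-j}(w),w\}$ can \emph{strictly decrease} revenue on a valid moral mechanism. Concretely, take two i.i.d.\ uniform players on $\{0,\eps,\dots,1\}$ (so $x\approx 1/2$), and the profit maximizer with $p_{-2}(1)=0.6$, $p_{-1}(0.6)=\infty$, and $p_{-1}(w)=\max(w+\eps,x)$, $p_{-2}(w)=\max(w,x)$ everywhere else. Your modification only changes the column $v_1=1$, replacing $p_{-2}(1)=0.6$ by $1$. In the instance $(1,0.6)$ the original mechanism sells to player $2$ at $0.6$ while $\tilde M$ sells to nobody (player $1$ is priced at $\infty$), losing $0.6$; the gains occur only for $z>0.6$ where player $2$ used to win at $0.6$ and now either player $1$ wins at $z+\eps$ or player $2$ pays $1$, and these total roughly $\int_{0.8}^{1}(z-0.6)\,dz+0.4=0.46<0.6$. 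So the column-wise net change is negative, and your contradiction does not close. You cannot escape this by saying you only apply the transformation to the \emph{optimal} $M$: to rule out that the optimal mechanism has exactly this shape (a low $p_{-2}(v_1)$ paired with a prohibitively high $p_{-1}(v_2)$) is precisely what the lemma is supposed to prove, so the inequality $rev(\tilde M)\ge rev(M)$ would have to hold for all candidate mechanisms, and it does not.

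The missing idea is that raising a price must sometimes be accompanied by \emph{lowering} a competing price. The paper's proof proceeds by $\eps$-increments on the first violating vector and splits into two cases at the marginal instance $(v_1,v_2)$ with $v_2=p_{-2}(v_1)$. When some other player is still (weakly) profitable there, raising $p_{-2}(v_1)$ by $\eps$ alone is safe (your analysis of the ``good'' instances is essentially this case). But when every other player is priced out ($p_{-1}(v_2)>v_1$), one first shows via regularity that necessarily $p_{-1}(v_2)=v_1+\eps$, and then performs a \emph{swap}: raise $p_{-2}(v_1)$ by $\eps$ and simultaneously lower $p_{-1}(v_2)$ to $v_1$, so that the marginal sale is not lost but transferred to player $1$ at the higher price $v_1$. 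The cost of that lowering is an $\eps$ loss on all instances $(z,v_2)$ with $z>v_1$, the benefit is an $\eps$ gain on instances $(v_1,z)$ with $z>v_2$ plus the $f(v_1)f(v_2)(v_1-v_2)$ term from the transferred sale, and the standard condition $v-\eps\frac{1-F(v)}{f(v)}\ge v'-\eps\frac{1-F(v)}{f(v')}$ is exactly the inequality that balances these three terms. You correctly sensed that the standard condition must enter through a survival weight evaluated at the higher value, but in a raise-only scheme there is no lowered price, hence no compensating term, and the inequality you would need is simply false. Your $\alpha$-morality check of $\tilde M$ and the treatment of columns below $x$ are fine, but the argument needs to be restructured around the incremental raise-and-lower swap rather than a single monotone projection.
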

That is, if the other player's value is $v$ for some $v\geq x$, then the potential payment of player $i$ (i.e., the payment of player $i$ if he wins the item) is at least $v$.

Before sketching the proof of Lemma \ref{lemma-sketch-main}, we apply the lemma and show that the mechanism is truthful.
%
 For the mechanism to be truthful, it is sufficient to show that for every player $i$ and value $v$ we have that $p_{-i}(v)\geq v$. The lemma 
 guarantees this for $v\geq x$ and in the next paragraph we prove this for $v<x$.
  This is sufficient since it implies that in every instance the potential payment of the player with the lower value is at least the value of the higher value player and hence there could be at most one player with positive potential profit.
  

Thus, assuming Lemma \ref{lemma-sketch-main}, it remains to show that for $v<x$ we have that $p_{-i}(v)\geq v$. The intuition is as follows\full{ (Proposition \ref{prop-less-than-mono} proves this formally)}: by definition, the monopolist price maximizes the revenue we can extract from a player, even ignoring all other players. Denote this revenue by $R_x$. If the value of the other player $j$ is $v_j<x$ and the potential payment is $p_{-i}(v_j)=x$ then player $i$ wins the item whenever $v_i\geq x$ and pays $x$ (because of Lemma \ref{lemma-sketch-main} and $v_i\geq x>v_j)$. Thus the revenue extracted from player $i$ if player $j$'s value is $v_j<x$ is also $R_x$. By definition of monopolist price, there is no value of $p_{-i}(v_j)$ that extracts higher revenue.

We now sketch the proof of Lemma \ref{lemma-sketch-main}. {For sake of concreteness we explicitly refer to the players here as player $1$ and player $2$.}  The proof is inductive, where the induction is over the values in the support of $\mathcal F$ from the highest to the monopolist price $x$. Let $v_1\geq x$ be the highest value such that for some player, without loss of generality player $2$, $p_{-2}(v_1)<v_1$, i.e., when the value of player $1$ is $v_1$ the potential payment of player $2$ is less than $v_1$ and for any $z>v_1$, $p_{-1}(z)\geq z$ and $p_{-2}(z)\geq z$. Let $v_2=p_{-2}(v_1)$ and consider the instance $(v_1,v_2)$. There are two possible cases, depending on whether $p_{-1}(v_2)\leq v_1$ or not. 

In both cases we will show that we can obtain a new mechanism by increasing $p_{-2}(v_1)$ by $\eps$, keeping the payment functions for higher values $z>v_1$ the same, and without decreasing the revenue comparing to the original mechanism. If $p_{-2}(v_1)+\eps \geq v_1$, then $v_1$ no longer violates the lemma in the new mechanism, so we can continue and consider the next value that violates Lemma \ref{lemma-sketch-main}. Else, we analyze the new mechanism in the instance $(v_1,v_2+\eps)$ according to the two cases, and conclude that we can increase $p_{-2}(v_1)$ by another $\eps$. We continue similarly until a mechanism with $p_{-2}(v_1)\geq v_1$ and then proceed to the next value that violates Lemma \ref{lemma-sketch-main}. At the end of the process we will have a mechanism with the property that for each player $i$ with value $v_i\geq x$ the potential payment of the other player is at least $v_i$. In addition, the revenue of the new mechanism is no smaller than the revenue of the original mechanism.

\paragraph*{Case I: $p_{-1}(v_2)\leq v_1$.} This case is depicted in Figure \ref{fig-smaller-price}.
		\begin{figure}[htb] 
	\begin{center}
		\begin{tikzpicture}[->,shorten >=1pt,auto,node distance=1.5cm, thin]
				\node at (0,2) {Player $1$};	
		\node at (3,2) {Player $2$};	
		\node (v1)  at (0,0.5) {$v_1$};
		\node (q1)  at (3,0.5) {$v_1$};
		\node (r1) at (3,1.5) {$.$};
		\node (l1) at (0,1.5) {$.$};
		\node (r2) at (3,1.2) {$.$};
		\node (l2) at (0,1.2) {$.$};
		\node (r3) at (3,0.9) {$.$};
		\node (l3) at (0,0.9) {$.$};				
		\node (l4) at (0,0.2) {$.$};				
				\node (l5) at (0,-0.1) {$.$};				
						\node (l6) at (0,-0.4) {$.$};				
								\node (l7) at (0,0.7) {$.$};				
		\node (z)  at (3,-0.2) {};
		\node (v2)  at (3,-1) {$v_2$};	
\draw (0,1.25) ellipse (0.3cm and 0.5cm);
\draw (3,1.25) ellipse (0.3cm and 0.5cm);	
		\draw (0.3,1.4) -- (2.75,1.4);
		\draw (2.7,1) -- (0.25,1);
			\draw (v1) -- (v2);
			\draw (v2) -- (l6);
		\end{tikzpicture}
		\caption{The setting of \full{Claim \ref{clm:smaller-price}}\itcs{Case I} illustrated for two players. The values in the support of each player are depicted as points from low to high. A directed edge from $v$ to $u$ implies that the potential payment that the player with value $v$ presents to the other player is $u$.
		} \label{fig-smaller-price}
	\end{center}
\end{figure}
We show that in this case setting $p_{-2}(v_1)=v_2+\eps$, i.e., increases the potential payment by $\eps$ and does not decrease the revenue\full{( the formal proof of this case is given in Claim \ref{clm:smaller-price})}. This is shown by fixing the value of player $1$ to be $v_1$ and comparing the expected revenue of $p_{-2}(v_1)=v_2$ and of $p_{-2}(v_1)=v_2+\eps$. Towards this end, we now analyze all instances of the form $(v_1,z)$ and show that the revenue does not decrease in each of them:

\begin{itemize}
\item When $z<v_2$ player $2$'s potential profit is negative in both cases so the allocation and payment remain the same with both payment functions.
\item When $z=v_2$ player $2$'s potential profit is non-positive in both cases. If player $1$'s potential profit is positive then player $1$ gets the item and pays the same in both cases. If player $1$'s potential profit is $0$, then when $p_{-2}(v_1)=v_2$ the potential profits of both players are $0$ and player $1$ will get the item by tie breaking, since his potential payment is higher. Thus, when $p_{-2}(v_1)=v_2+\eps$ player $2$'s potential profit becomes negative and the item is allocated to player $1$ for the same price.
\item When $v_1\geq z>v_2$, \full{Observation \ref{obs-small-is profitable} shows}\itcs{we show} that setting $p_{-2}(v_1)=v_2+\eps$ does not decrease the revenue. The intuition is as follows: for all values $z$ in the range player $2$'s potential profit is non-negative profit
 even when $p_{-2}(v_1)=v_2+\eps$, so the item will be allocated. Thus, our only concern is the following instances: the item was allocated to player $2$ when $p_{-2}(v_1)=v_2$ but is allocated to player $1$ when $p_{-2}(v_1)=v_2+\eps$. This can happen when the potential profit of player $1$ was equal to player $2$'s potential profit, but, due to the increase of $p_{-2}(v_1)$ the potential profit of player $1$ became strictly higher.
 The worry is that player $1$ pays less than $v_2$. We claim that if player $1$ becomes more profitable due to the increase in the potential payment of player $2$ then the payment of player $1$ is at least $v_2$. This is because if the potential profits are equal then $v_1-p_{-1}(z)=z-v_2$. However, we have that $v_1\geq z$ thus $p_{-1}(z)\geq v_2$.
\item $z>v_1$. In this case by our assumption we have that $p_{-1}(z)\geq z>v_1$. I.e., player $1$ is not profitable. The potential profit of player $2$ is non-negative when $p_{-2}(v_1)=v_2$ and when $p_{-2}(v_1)=v_2+\eps$ and thus player $2$ will receive the item. However, the revenue is strictly higher when $p_{-2}(v_1)=v_2+\eps$ in this case.
\end{itemize}

\paragraph*{Case II: $p_{-1}(v_2)>v_1$.} This case is depicted in Figure \ref{fig-higher-price}.
		\begin{figure}[htb] 
	\begin{center}
		\begin{tikzpicture}[->,shorten >=1pt,auto,node distance=1.5cm, thin]
				\node at (0,2) {Player $1$};	
		\node at (3,2) {Player $2$};	
		\node (v1)  at (0,0.5) {$v_1$};
		\node (q1)  at (3,0.5) {$v_1$};
		\node (r1) at (3,1.5) {$.$};
		\node (l1) at (0,1.5) {$.$};
		\node (r2) at (3,1.2) {$.$};
		\node (l2) at (0,1.2) {$.$};
		\node (r3) at (3,0.9) {$.$};
		\node (l3) at (0,0.9) {$.$};				
		\node (z)  at (3,-0.2) {};
		\node (v2)  at (3,-1) {$v_2$};	
\draw (0,1.25) ellipse (0.3cm and 0.5cm);
\draw (3,1.25) ellipse (0.3cm and 0.5cm);	
		\draw (0.3,1.4) -- (2.75,1.4);
		\draw (2.7,1) -- (0.25,1);
			\draw (v1) -- (v2);
			\draw (v2) -- (l3);
		\end{tikzpicture}
		\caption{The setting of \full{Claim \ref{clm:higher-price}}\itcs{Case II} illustrated for two players. The values in the support of each player are depicted as points from low to high. A directed edge from $v$ to $u$ implies that the potential payment that the player with value $v$ presents to the other player is $u$.
		} \label{fig-higher-price}
	\end{center}
\end{figure}
We again sketch the proof\full{and note that the formal proof can be found in Claim \ref{clm:higher-price}}. We first show that $p_{-1}(v_2)=v_1+\eps$. The proof relies on $p_{-2}(z)\geq z$ for all $z>v_1$ and on the regularity of $\mathcal F$, and is similar to our observation above that if Lemma \ref{lemma-sketch-main} holds then for $z<x$ we have that $p_{-1}(z)=x$.

We consider obtaining a new mechanism by applying the following simultaneous manipulations on the payment functions: $p_{-2}(v_1)=v_2+\eps$ and $p_{-1}(v_2)=v_1$. In other words, we increase $p_{-2}(v_1)$ by $\eps$ and decrease $p_{-1}(v_2)$ by $\eps$. On one hand, the changes increase the revenue by $\eps$ in instances $(v_1,z)$, when $z>v_2$ (the analysis is similar to the previous case). On the other hand, the changes decrease the revenue by $\eps$ in instances $(z,v_2)$, $z>v_1$. When analyzing the conditions under which making both changes does not decrease the revenue the requirement that the distribution is standard organically emerges.

This concludes the sketch of proof of Lemma \ref{lemma-sketch-main} and of Theorem \ref{thm-standard} for the case of two players.

\section{Correlated Distributions: Moral Beats Truthfulness}\label{sec-correlated}

We now prove that moral mechanisms sometimes generate more revenue than their truthful counterparts. For this result to hold we need the values of the players to be correlated. The proof relies on the following theorem:

\begin{theorem}[essentially \cite{DU18}]\label{thm-correlated-truthful}
	For every small enough $\delta, \eps>0$, and $1\geq \alpha \geq \eps$, there exists a distribution $\mathcal H$ over the values of two players with the following properties:
	\begin{enumerate}
		\item The support of $\mathcal H$ is finite.
		\item All values in the support of $\mathcal H$ are at least $1$.
		\item\label{item-distance} For each two possible values of player $1$ $v_1,v'_1$ in the support of $\mathcal H$ we have that $|v_1-v'_1|>\eps$.
		\item For each possible value $v_2$ of player $2$ in the support of $\mathcal H$ we have that $1\leq v_2\leq 1+ \eps\cdot \alpha$.
		\item The revenue that every truthful mechanism generates on $\mathcal H$ is at most $\frac e {e-1}+\delta$. 
		\item There exists a revenue-maximizing truthful mechanism $M$ in which player $2$ is never allocated (i.e., all revenue is generated from player $1$). In this mechanism, the item is not allocated at all with probability (over $\mathcal H$) at least $\frac 1 {e-1}$.
	\end{enumerate}
\end{theorem}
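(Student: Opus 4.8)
The plan is to obtain $\mathcal H$ from the correlated hard instance underlying \cite{DU18} by two rounding steps --- one per player --- that install the additional structural guarantees (items~2--4) while folding all resulting revenue changes into the slack $\delta$. The starting point is that, by payment independence (the $\alpha=0$ case of the profit-maximizer characterization), a deterministic truthful two-player mechanism is fully described by two potential-payment functions: player~$1$ wins and pays $p_{-1}(v_2)$ exactly when $v_1\ge p_{-1}(v_2)$, and symmetrically for player~$2$, with player~$1$ favored on ties. Hence bounding the revenue of an \emph{arbitrary} truthful mechanism on $\mathcal H$ is an optimization over the pair $(p_{-1},p_{-2})$, and this is what the argument of \cite{DU18} controls: player~$1$'s marginal is designed so that its posted-price revenue curve is essentially flat over a range of prices, so the only way a truthful mechanism improves on a fixed reserve is to correlate the threshold $p_{-1}$ with the weak signal that $v_2$ carries about $v_1$; the construction's parameters are tuned so that this leverage is worth at most a $\tfrac{e}{e-1}$ factor and so that the revenue-optimal truthful mechanism declines to sell with probability $\tfrac{1}{e-1}$ --- which, in the idealized instance, is precisely items~5 and~6 with $\delta=0$.

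The first rounding step installs item~3. Place player~$1$'s support on the geometric grid $\{(1+\eps)^k : 1\le k\le k_1\}$ (the marginal of \cite{DU18}, discretized onto this grid). Consecutive grid points differ by $(1+\eps)^{k+1}-(1+\eps)^{k}=\eps\cdot(1+\eps)^{k}>\eps$, which is item~3; the smallest value $1+\eps\ge 1$ gives item~2 for player~$1$, and a large $k_1$ (chosen as a function of $\delta$) makes the truncation loss negligible. Because the grid refines as $\eps\to 0$, the revenue of every threshold pair on the discretized instance converges to its value on the continuous one, so for $\eps$ small the revenue of \emph{every} truthful mechanism changes by at most $\delta/3$. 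The second rounding step installs item~4: replace $v_2$, conditioned on each value of $v_1$, by a value in $[1,1+\eps\alpha]$ that preserves the monotone, bucketed information $v_2$ conveys about $v_1$ but clips its magnitude to that tiny window (this is where the parameter $\alpha$ enters --- it sets the window width). Since player~$2$ then pays at most $1+\eps\alpha$ when it wins, and since player~$2$ winning merely displaces a sale to player~$1$, any mechanism's revenue moves by at most $O(\eps\alpha)\le\delta/3$; and because the truthful-revenue bound of \cite{DU18} only uses the order information of $v_2$, the bound survives with the accumulated $O(\delta)$ slack --- giving item~5 --- while the revenue-optimal truthful mechanism is still the player-$1$-only reserve mechanism, whose non-allocation probability is $\tfrac{1}{e-1}$, giving item~6. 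Item~1 is immediate from the finite grid.

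The main obstacle is item~5: ruling out \emph{every} truthful mechanism, not just the natural ones. Correlated distributions are fragile --- a generic one even admits full surplus extraction --- so the clipping of $v_2$ into $[1,1+\eps\alpha]$ has to be done so that $v_2$ still reveals only the limited bucketed information that \cite{DU18} allows; were the clipping map injective on player~$1$'s support, a truthful mechanism could set $p_{-1}(v_2)=v_1$ and collect $\mathbb{E}[v_1]$, blowing up the bound. Concretely I would re-run the maximization over threshold pairs $(p_{-1},p_{-2})$ against the discretized, clipped law and check that the optimum is unchanged up to the accumulated $O(\delta)$ error, leaning on the explicit conditional laws of $v_2$ given $v_1$ in \cite{DU18}; this verification is where essentially all the effort goes. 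The remaining items (1--4 and~6) are bookkeeping once the construction is fixed.
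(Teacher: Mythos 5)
You should first note that the paper does not actually prove Theorem \ref{thm-correlated-truthful}: it is stated as ``essentially \cite{DU18}'' and the text only supplies intuition, taking the distribution $\mathcal H$ directly from that construction (player $2$'s value is $1$ plus low-order bits that encode which distribution player $1$'s value is drawn from), with items 1--4 read off from that construction after a choice of parameters and items 5--6 being the main theorem of \cite{DU18} itself. Your plan has the same shape, but it contains a genuine gap exactly at the theorem's core, namely property 5 (and with it 6). You discretize player $1$'s marginal onto a geometric grid and clip $v_2$ into $[1,1+\eps\alpha]$, and then assert that the revenue of \emph{every} truthful mechanism moves by at most $O(\delta)$; but you never carry out that argument --- you explicitly defer it (``re-run the maximization over threshold pairs \ldots this verification is where essentially all the effort goes''). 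That verification is not bookkeeping: it is precisely the adversarial bound over all truthful mechanisms that constitutes the content of \cite{DU18}, and it is the step that is fragile under your modifications. As you yourself observe, truthful revenue on correlated distributions is not robust --- an arbitrarily small change in how $v_2$'s low-order information correlates with $v_1$ can move the optimum from $\approx\frac{e}{e-1}$ to full surplus extraction --- so the claim that the bound ``survives with the accumulated $O(\delta)$ slack because it only uses the order information of $v_2$'' is itself the thing to be proved, and your clipping map (``preserves the monotone, bucketed information'') is never specified concretely enough even to state such a lemma.

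A second, smaller gap is item 6: you assert that after both roundings the revenue-optimal truthful mechanism is still the player-$1$-only reserve mechanism with non-allocation probability $\frac{1}{e-1}$, but optimality of a specific mechanism is again a statement quantified over all truthful mechanisms and does not follow from the continuity-style estimates you invoke. If your intention is to use \cite{DU18} as a black box (as the paper does), the cleaner route is to take its distribution verbatim --- it already has a finite, well-separated support for player $1$ and player-$2$ values of the form $1$ plus tiny encoding bits, so no rounding is needed, only a choice of the perturbation scale to fit $[1,1+\eps\alpha]$ --- rather than to perturb it and then re-derive the hardness for the perturbed instance. If instead you want a self-contained proof, then the missing piece is the case analysis sketched after the theorem in the paper: every deterministic truthful mechanism either ignores the low-order bits of $v_2$, or uses them to price player $1$ but must then never sell to player $2$ (lest player $2$ profitably garble the bits), and quantifying the revenue of the best mechanism in each case is where the factor $\frac{e}{e-1}$ and the non-allocation probability $\frac{1}{e-1}$ come from; none of that appears in your proposal.
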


We now provide some intuition for the proof of Theorem \ref{thm-correlated-truthful}. Let $\mathcal H'$ be a distribution that is identical to $\mathcal H$ except that $v_2$ is always $1$. The optimal mechanism for $\mathcal H'$ is to make a take-it-or-leave it offer to player $1$, and if player 1 declines, sell the item to player $2$ at price $1$. However, in the distribution $\mathcal H$ the low-order bits of $v_2$ determine the distribution that $v_1$ is drawn from. Ideally, the mechanism would query the low order bits of $v_2$, determine the distribution that $v_1$ is drawn from, make a take-it-or-leave-it-offer to player $1$ and if player $1$ declines sell the item to player $2$ and charge $1$. However, this will contradict truthfulness since player $2$ can possibly misreport his low-order bits in order to fool the mechanism to believe that $v_1$ is drawn from another distribution, this will cause the mechanism to make a take-it-or-leave-it offer which is too high, player $1$ will decline and player $2$ will incur positive profit from taking the item and paying $1$. 

While this is only one way to fool a specific algorithm, \cite{DU18} shows that every truthful mechanism can be similarly fooled. Thus, intuitively, a truthful mechanism can either (1) sell the item at a fixed price of $1$ (2) make a take-it-or-leave-it offer to player $1$ without querying the low-order bits of $v_2$ and if player $1$ declines then the item is sold to player $2$ at price $1$, or (3) query the low order bits of $v_2$, make a take-it-or-leave-it offer to player $1$ and do not sell the item if he declines. The last auction is in fact the revenue-maximizing truthful mechanism.

We prove that moral mechanisms can generate more revenue on $\mathcal H$. For simplicity, consider $\alpha=1$ and the following profit maximizer: player $1$'s potential payment is (almost) the best take-it-or-leave-it offer that can be made after querying player $2$'s low-order bits. Player $2$'s potential payment is $1$. We show that player $1$ receives the item if his profit is positive, and if his profit is negative then player $2$ gets the item and pays $1$. By our discussion above, this mechanism generates more revenue than the optimal truthful mechanism for $\mathcal H$. Formally:

\begin{theorem}\label{thm-correlated-moral}
	Fix $1\geq \alpha>0$ and $\eps>0$. There exists a distribution $\mathcal H$ for which an $\alpha$-moral mechanism  generates a revenue of $\frac e {e-1}+\delta + \frac{1}{e-1} - \frac{e+1}{e-1}\cdot \eps -\eps\cdot \delta$ (which is $\approx \frac {e+1} {e-1}$) on $\mathcal H$ whereas the revenue of every truthful mechanism is at most $\frac e {e-1}+\delta$.
	%
	%
\end{theorem}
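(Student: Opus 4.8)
The statement combines the two halves of Theorem~\ref{thm-correlated-truthful}: the truthful upper bound $\frac{e}{e-1}+\delta$ is immediate from item~5 of that theorem, so the whole work is to exhibit an $\alpha$-moral mechanism achieving revenue $\approx\frac{e+1}{e-1}$ on the same distribution $\mathcal H$. The plan is to build this mechanism as an $\alpha$-profit maximizer (which by the equivalence claim is automatically $\alpha$-moral), using the revenue-maximizing truthful mechanism $M$ from item~6 as a black box. Concretely, I would set player~$2$'s potential payment to be $p_{-2}(v_1)=1$ for every $v_1$ in the support (recall all values are $\geq 1$, so this is individually rational), and I would set player~$1$'s potential payment $p_{-1}(v_2)$ to be (essentially) the take-it-or-leave-it price that $M$ offers player~$1$ after reading $v_2$'s low-order bits — possibly shaded down by one grid step $\eps$ to make the relevant profit strict for tie-breaking. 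The resulting allocation rule: give the item to player~$1$ when his potential profit $v_1-p_{-1}(v_2)$ is positive and at least player~$2$'s potential profit; otherwise, if player~$2$'s potential profit $v_2-1\geq 0$ is the larger non-negative one, give it to player~$2$ at price $1$; else do not allocate.

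**Verifying the $\alpha$-moral property.** Here I must check the $\alpha$-profit-maximizer inequality $v_j-p_{-j}(v_{-j})\le \alpha\cdot(v_i-p_{-i}(v_{-i}))$ whenever $i$ wins. The case that needs care is when player~$1$ wins: then player~$2$'s potential profit is $v_2-1$, which by item~4 of Theorem~\ref{thm-correlated-truthful} is at most $\eps\cdot\alpha$. So it suffices that player~$1$'s potential profit, whenever he wins, is at least $\eps$ — i.e. at least one grid step — which is exactly why I shade $p_{-1}(v_2)$ down so that a ``positive'' profit is genuinely $\ge\eps$; then $\eps\cdot\alpha \le \alpha\cdot(v_1-p_{-1}(v_2))$ as required. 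When player~$2$ wins, player~$1$'s potential profit is non-positive by the allocation rule, so that side is trivial. When nothing is allocated both potential profits are non-positive. This is the step I expect to be the main obstacle: getting the discretization right so that player~$1$'s winning profit is bounded below by $\eps$ while simultaneously not losing more than a few $\eps$'s of revenue, and making sure item~3 of Theorem~\ref{thm-correlated-truthful} (the $\eps$-separation of player~$1$'s values) is what lets the shaded price still capture player~$1$'s contribution up to an additive $\eps$ per sale.

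**Revenue accounting.** Finally I would compare the moral mechanism's revenue to $M$'s. Conditioned on player~$2$'s report $v_2$, the moral mechanism collects from player~$1$ at least what $M$ collects (up to the $\eps$ shading, harmless since values are $\eps$-separated), and with the remaining probability — which by item~6 is at least $\frac{1}{e-1}$ over $\mathcal H$ — player~$1$ declines and now player~$2$ is sold the item at price $1$, revenue that $M$ never collects (item~6 says player~$2$ is never allocated in $M$). Summing: the moral revenue is at least $\bigl(\text{rev}(M)\bigr) + \frac{1}{e-1}\cdot 1$ minus $O(\eps)$ and $O(\eps\delta)$ correction terms coming from the shading and from the slack in items~4–6. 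Since $\text{rev}(M)=\frac{e}{e-1}+\delta$ by item~5 (the revenue-maximizing truthful mechanism attains the bound), this gives exactly the claimed $\frac{e}{e-1}+\delta+\frac{1}{e-1}-\frac{e+1}{e-1}\eps-\eps\delta$ after collecting the error terms, completing the proof.
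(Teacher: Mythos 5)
Your proposal is correct and follows essentially the same route as the paper: the same profit-maximizer construction (player $1$ always presents price $1$ to player $2$; player $2$ presents $M$'s take-it-or-leave-it prices shaded down by $\eps$), the same use of item~4 together with the $\eps$ shading to verify the $\alpha$-profit-maximizer inequality, and the same revenue accounting via items~5 and~6. The only cosmetic difference is that the paper phrases the feasibility check directly in terms of potential profits rather than worrying about tie-breaking, but the substance is identical.
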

\begin{proof}
Let $\mathcal H,\eps, \delta, \alpha, M$ be as in Theorem \ref{thm-correlated-truthful}. We define the following profit maximizer $M'$: the prices that player $2$ presents to player $1$ remain the same, except that we decrease them by $\eps$. Player $1$ always presents a price of $1$ to player $2$.
	
	We now analyze the revenue of $M'$. We claim that in every instance in the support of $\mathcal H$ where $M$ allocates the item to player $1$ for a payment of $q$, $M'$ will also allocate the item to player $1$ for a payment of $q-\eps$. Moreover, when $M$ does not allocate the item at all, $M'$ allocates the item and charges player $2$ a payment of $1$. In other words, when $M$ allocates the item to player $1$ $M'$ does so as well and the revenue barely changes. When $M$ does not allocate the item (and recall that this happens with probability at least $\frac 1 {e-1}$), $M'$ generates a revenue $1$. This will imply the theorem.
	
	To prove the above claim, recall that $M$ is truthful and thus is a profit maximizer in which at most one player has a positive profit. Let $(v_1,v_2)$ be an instance in the support of $\mathcal H$ in which player $1$ is allocated and is charged $q$. That is, $v_1-q\geq 0$. Consider this instance in $M'$. The potential profit of player $1$ increases now to $v_1-(q-\eps) \geq \eps$. The potential profit of player $2$ is 
	$v_2-1\leq 1+ \eps \cdot \alpha-1=\eps \cdot \alpha$. We get that $M'$ allocates the item to player $1$ (if player $2$ misreports his gain will be at most $\eps \cdot \alpha$ and the loss of player $1$ from this misreport is at least $\eps$) and charges $\eps$ less, compared to the revenue of $M$. Thus, the aggregate loss of $M'$ with respect to $M$ is at most $\eps\cdot (\frac e {e-1}+\delta)$.  

	Now consider an instance $(v_1,v_2)$ in the support of $\mathcal H$ where the item is not allocated at all. Note that player $2$ is profitable in every instance in the support of $\mathcal H$, since $v_2\geq 1$ and his potential payment is $1$, so $M'$ does allocate the item. If the item is allocated to player $2$, then the payment is $1$ and we are done.
	 We claim that the revenue is high also when the item is allocated to player $1$, despite decreasing his potential payment by $\eps$. This follows because $M$ is revenue maximizing, and thus the payments that player $1$ pays must be in the support of $\mathcal H$ (otherwise, the payments can be increased and the revenue of $M$ increases as well). This implies that all prices presented by $M$ are at least $1$, and thus the prices presented by $M'$ are at least $1-\eps$, as needed. The aggregate gain of $M'$ with respect to $M$ is at least $\frac{1}{e-1}\cdot (1-\eps)$. Putting this together and rearranging, we get that the total gain of $M'$ with respect to $M$ is at least $\frac{1}{e-1} - \eps-\frac{e+1}{e-1}\eps\cdot \delta $. 
\end{proof}

Theorem \ref{thm-correlated-moral} shows that the multiplicative gap between the revenue of an optimal truthful mechanism and an optimal moral mechanism can be as large as $\frac {e+1} e$. We now shows that this gap cannot be larger than $2$. We obtain this result by showing that the lookahead auction of Ronen \cite{ronen-lookahead} provides a $2$ approximation not just with respect to the optimal truthful mechanism as shown in \cite{ronen-lookahead} but also with respect to the optimal $\alpha$-moral mechanism. 

\begin{theorem} \label{thm-cor-exp}
Fix a distribution $\mathcal H$ and an $\alpha$-moral mechanism $M$, $1\geq \alpha\geq 0$. Denote by $rev(M)$ the expected revenue of $M$ over $\mathcal H$. There is a truthful mechanism $M'$ whose revenue is at least $\frac {rev(M)} 2$.
\end{theorem}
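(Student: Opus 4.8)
The plan is to adapt Ronen's lookahead auction analysis to the setting of $\alpha$-moral mechanisms. Recall that the lookahead auction works as follows: given the reported profile, identify the player $i^*$ with the highest value $v_{i^*}$, condition the distribution $\mathcal{H}$ on the reports of all other players, and run a revenue-optimal take-it-or-leave-it mechanism on player $i^*$ alone (using the conditional distribution of $v_{i^*}$ given $v_{-i^*}$, restricted to the event that $i^*$ is the highest bidder). This is truthful: the other players have no influence on their own allocation or payment (they never win), and player $i^*$ faces a posted price. So it suffices to show $\mathrm{rev}(\text{lookahead}) \geq \tfrac12 \mathrm{rev}(M)$ for any $\alpha$-moral $M$.

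The key step is to bound $\mathrm{rev}(M)$ from above. Fix a profile in the support and let $M$ allocate to player $i$ at payment $q$. Since $M$ is a profit maximizer (by the characterization in Section~\ref{sec-properties}, $\alpha$-moral $=$ $\alpha$-profit maximizer), $q = p_{-i}(v_{-i})$ and $v_i \geq p_{-i}(v_{-i})$; moreover for every other player $j$, $v_j - p_{-j}(v_{-j}) \leq \alpha(v_i - p_{-i}(v_{-i})) \leq v_i - p_{-i}(v_{-i})$, so $v_j \leq v_i + (p_{-j}(v_{-j}) - p_{-i}(v_{-i}))$. I would split the revenue of $M$ according to whether the winner $i$ is the highest-value player or not. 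When $i$ is the highest bidder, the payment $q=p_{-i}(v_{-i})$ depends only on $v_{-i}$ and is at most $v_i$; conditioning on $v_{-i}$ and on $i$ being the top bidder, this is exactly a posted-price revenue from player $i$, which the lookahead auction optimizes over — hence the contribution from "winner is the top bidder" is at most $\mathrm{rev}(\text{lookahead})$. When $i$ is not the highest bidder, let $k$ be the highest bidder; then $p_{-i}(v_{-i}) \leq v_i \leq v_k$, so $M$'s revenue here is at most $v_k$. I then want to charge this against the lookahead auction's revenue from player $k$: conditioned on $v_{-k}$, the lookahead auction extracts the optimal single-agent revenue from $v_k$, and in particular running the reserve-free "sell to $k$ at the second-highest price" thought experiment, or more carefully comparing to the optimal monopoly price, gives that the expected value of $v_k$ restricted to events where $k$ is the top bidder is at most $2$ times the lookahead revenue — but this is precisely the content of Ronen's original argument, which only used that the benchmark never extracts more than $v_k$ from a non-winner and exactly $p_{-i}(v_{-i}) \le v_i$ from the winner when the winner is the top bidder.

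So the cleanest route is: show that any $\alpha$-moral $M$ is dominated, instance by instance in revenue, by a (possibly randomized over the instance, but really deterministic) "generalized" mechanism that (i) when the top bidder wins, charges a price depending only on $v_{-i}$, and (ii) otherwise earns at most the top bidder's value; then observe that Ronen's proof that the lookahead auction $2$-approximates \emph{this} class goes through verbatim. Concretely I would: (1) invoke the profit-maximizer characterization to write $M$'s per-instance revenue in the two-case form above; (2) bound it pointwise by $\min\{v_i : i \text{ is top bidder and wins}\}$-type quantities as in Ronen; (3) quote/reprove Ronen's lemma that $\mathbb{E}[\text{top bidder's value} \mid \text{rest of profile}] \le 2\cdot(\text{optimal single-agent revenue on top bidder})$, summed appropriately; (4) conclude $\mathrm{rev}(M) \le 2\,\mathrm{rev}(\text{lookahead})$, and note the lookahead auction is truthful, giving the desired $M'$.

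The main obstacle is step (2)–(3): in Ronen's truthful-benchmark analysis, the winner is \emph{always} the top bidder, so the entire revenue is a posted price on the top bidder and the factor $2$ comes from a single clean monopoly-vs-value argument. In a moral mechanism the winner need not be the top bidder, so I must separately handle the revenue collected from a non-top winner and argue it is still absorbed by the lookahead auction's revenue on the \emph{actual} top bidder of that instance — using $v_{\text{winner}} \le v_{\text{top}}$ and the fact that the lookahead auction is run on the top bidder conditioned on exactly the event that determines who the top bidder is. Making this charging argument precise (and checking that no double-counting occurs across instances with different top bidders) is the part that requires care; everything else is a direct transcription of Ronen's argument together with the $\alpha$-profit-maximizer characterization already established in the paper.
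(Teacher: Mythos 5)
Your overall architecture is the paper's: the witness $M'$ is Ronen's lookahead auction, $M$ is reduced to a profit maximizer via the characterization of Section \ref{sec-properties}, and the revenue of $M$ is split according to whether the winner is the highest-value player. Your treatment of the top-winner case is fine and matches the paper: the payment is $p_{-i}(v_{-i})$, it depends only on $v_{-i}$ and is at most $v_i$, so conditioned on the others' values it is dominated by the optimal posted price that the lookahead auction uses.

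The genuine gap is in the non-top-winner case. You bound $M$'s payment there by the \emph{top} bidder's value $v_k$ and then invoke a ``lemma'' that $\mathbb{E}[v_k \mid \text{rest}]$ is at most twice the optimal single-agent revenue extractable from $v_k$. No such lemma exists and it is not the content of Ronen's argument: for an equal-revenue-type conditional distribution ($\Pr[v\geq x]=1/x$ on $[1,H]$) the expectation is of order $\log H$ while the best posted-price revenue is $1$, so the expected top value cannot be charged against the lookahead auction's revenue at any constant factor. The correct charging, and the one the paper uses, exploits that the winner in these instances is \emph{not} the top bidder, hence its value --- and therefore its payment, by individual rationality --- is at most the \emph{second-highest} value; and the lookahead auction's revenue is at least the expected second-highest value, because posting the second-highest value as the take-it-or-leave-it price always sells and the lookahead price is optimal against the conditional distribution. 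With that substitution the factor $2$ falls out immediately (either by summing the two contributions, each bounded by the lookahead revenue, or by the paper's ``at least half the revenue comes from one of the two cases'' split), and none of the double-counting worries you raise arise, since each case is charged against the lookahead revenue as a whole rather than against per-bidder pieces. So the proof is repairable with a one-line change, but as written step (3) rests on a false statement.
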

\begin{proof}
		The mechanism $M'$ is Ronen's lookahead auction: without loss of generality order the players so that $v_1\geq v_2\geq \ldots \geq v_n$. Compute the marginal distribution $\mathcal H_1$ of $v_1$ given that $v_1\geq v_2$. Make a revenue maximizing offer to player $1$ given that his value is drawn from $\mathcal H_1$. If player $1$ accepts then he pays the take-it-or-leave-it price, otherwise the item is not sold at all. It is not hard to see (and is proved in \cite{ronen-lookahead}) that the lookahead mechanism is truthful.
		
		We will now prove that the revenue of the lookahead auction is at least $\frac {rev(M)} 2$. In fact, we may assume that $M$ is a profit maximizer (i.e., $\alpha=1$), this only makes our result stronger.
		
		There are two possible cases. In the first case, at least half of the revenue that $M$ generates is made in instances where the item is allocated to the highest value player. Here we observe that the potential profit of the highest player is non-negative. Thus, by definition, we have that the best take-it-or-leave-it offer to the highest player will generate in expectation more revenue than collecting the potential payment of player $1$ (as in $M$), which shows that in this case, the lookahead auction provides a $2$-approximation.
		
		Else, more than half of the revenue that $M$ generates is made in instances where the item is allocated to a player that does not have the highest value. Note that the expected revenue of $M$ can be bounded in this case by twice the expected second-highest value. 
		Since the take-it-or-leave-it price can be the second-highest value, the item is sold in every instance, and thus the revenue is the expectation of the second-highest value. This gives us a $2$-approximation in this case as well.
\end{proof}


\bibliography{deception}

\full{
\appendix
\section{Truthful Revenue Maximization for Independent Distributions}\label{sec-truthful}

In this section we consider optimal (revenue-maximizing) truthful mechanisms for $n$ players with discrete valuations. We limit our attention to distributions that satisfy Chung and Ely's \cite{chung2007foundations} discrete analogue of Myerson's regularity. 

Denote by $x_i$ the monopolist price of distribution $\mathcal F_i$, i.e., the maximal value $v$ such that $\vv_i(v)\leq 0$.
For a vector $v_{-i}$ that specifies the values of all players except player $i$, we let $q_{-i}(v_{-i})=\max(x_i,\vv_i^{-1}(\hv))$, where $\hv=\max_{j} (\vv_j(v_j))$ is the highest virtual value that corresponds to a value in $v_{-i}$. We characterize the optimal auction in terms of its payment functions. We let $p_{-i}(v_{-i})$ denote the price (i.e., critical value) that is presented to player $i$ when the values of the other players are according to $v_{-i}$. 

The characterization we obtain is slightly different than Myerson's characterization for continuous and atomless distribution. For example, in the case of two players with values sampled independently from the same regular (continuous and atomless) distribution, the optimal auction is a second price auction. In our discrete setting, the optimal auction is that the critical price of player $1$ is $v_2$, and the critical price of player $2$ is $v_1+\eps$.

\begin{theorem}
Let $M$ be a truthful mechanism for selling one item to $n$ players. Suppose that the value $v_i$ of player $i$ is drawn independently from a regular distribution $\mathcal F_i$. Let $M$ be an optimal mechanism with payment functions $p_{-1}, \ldots, p_{-n}$. Consider an instance $(v_1,\ldots, v_n)$. For each player $i$, if $q_{-i}(v_{-i})=x_i$ then $p_{-i}(v_{-i})=x_i$. Among the players with $q_{-i}(v_{-i})>x_i$, it holds there is exactly one player such that $p_{-i}(v_{-i})=q_{-i}(v_{-i})$ and for every other player $j\neq i$, $p_{-j}(v_{-j})=q_{-j}(v_{-j})+\eps$.
\end{theorem}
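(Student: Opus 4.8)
The plan is to characterize the optimal truthful mechanism by working directly with the critical-value (payment) functions $p_{-i}$, rather than first fixing the allocation rule. Since the mechanism is truthful, it is payment independent, so $p_{-i}(v_{-i})$ is well-defined, and the expected revenue decomposes as $\sum_i \Pr[v_i \geq p_{-i}(v_{-i})\text{ and }i \text{ wins}] \cdot \mathbb{E}[\text{payment}]$; because the allocation of a profit maximizer/truthful mechanism is forced to give the item to the highest-potential-profit player, the only freedom left is in choosing the functions $p_{-i}$. First I would establish a lower bound: no player can ever be charged below his monopolist price $x_i$, i.e.\ $p_{-i}(v_{-i}) \geq x_i$ always, and moreover $p_{-i}(v_{-i}) \geq q_{-i}(v_{-i})$. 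The intuition (as in the sketch leading to Lemma~\ref{lemma-sketch-main}) is that if some player $j$ with $\vv_j(v_j)=\hv$ has $q_{-i}(v_{-i}) = \vv_i^{-1}(\hv) > p_{-i}(v_{-i})$, then in instances where player $i$ wins at a price strictly below $q_{-i}(v_{-i})$, raising $p_{-i}$ by $\eps$ trades away revenue only on a set whose virtual-value contribution is dominated — this is exactly where regularity of $\mathcal F_i$ (monotonicity of $\vv_i$) is used. I would make this precise with an exchange argument: compare the original mechanism to the one with $p_{-i}(v_{-i})$ replaced by $p_{-i}(v_{-i})+\eps$, fixing $v_{-i}$, and show the change in expected revenue over $v_i$ is nonnegative whenever $p_{-i}(v_{-i}) < q_{-i}(v_{-i})$. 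Iterating pushes every $p_{-i}(v_{-i})$ up to at least $q_{-i}(v_{-i})$ without loss of revenue.

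Next I would establish the matching upper-bound/tightness structure. For the players with $q_{-i}(v_{-i}) = x_i$ (equivalently $\hv \leq 0$, all competing virtual values nonpositive), I claim $p_{-i}(v_{-i}) = x_i$ exactly: charging more than the monopolist price strictly loses revenue from player $i$ by definition of the monopolist price, and since no other player has positive virtual value, there is no benefit in routing the item elsewhere — so the revenue-maximizing choice is the take-it-or-leave-it price $x_i$. For the players with $q_{-i}(v_{-i}) > x_i$, consider a fixed profile of virtual values and let $j$ be a player attaining the top virtual value $\hv$. The key observation is the tie-breaking asymmetry: among players whose potential profit can tie, the mechanism must hand the item to exactly one of them, and to maximize revenue it should be the one whose price is set at $q_{-i} = \vv_i^{-1}(\hv)$ (so he wins precisely when $v_i \geq \vv_i^{-1}(\hv)$, contributing virtual surplus $\hv$), while every other player $j'$ is priced one notch higher at $q_{-j'}(v_{-j'}) + \eps$ so that he is strictly dominated and never wins in that virtual-value configuration — raising his price by the extra $\eps$ costs nothing (he wasn't going to win) and is in fact forced, because if two players were both priced exactly at their $q$ value they would have equal potential profit at the boundary and the revenue would depend on tie-breaking in a way that cannot beat the asymmetric assignment. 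I would verify by the same $\pm\eps$ exchange argument that any deviation from this pattern (some second player priced at $q$ rather than $q+\eps$, or the "winner" priced above $q$) does not increase revenue, and strictly decreases it in the generic case.

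The main obstacle I anticipate is handling ties cleanly: when several players share the top virtual value $\hv$, the allocation of a profit maximizer is only determined up to tie-breaking, so "the" optimal mechanism is not unique and the statement "exactly one player has $p_{-i} = q_{-i}$" must be read as identifying the revenue-maximizing family up to such ties. I would deal with this by (i) arguing that in any revenue-maximizing mechanism the set of instances where a genuine tie in potential profit occurs has the property that the revenue is the same regardless of which tied player is chosen — essentially because their potential payments at the tie are forced to be equal to the same $\vv^{-1}(\hv)$ value — and (ii) showing that once prices below $q_{-i}$ and above $q_{-i}$ (except the one $+\eps$ notch) are both ruled out by the exchange argument, the only remaining degrees of freedom are these measure-consistent tie choices. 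A secondary subtlety is the boundary of the support ($v=1$, where $F(1)=1$ so $\vv(1)=1$) and the discretization: the $+\eps$ in "$q_{-j}(v_{-j})+\eps$" is exactly the discrete analogue of Myerson's "strictly higher virtual value wins," and I would double-check the edge case where $q_{-j}(v_{-j})+\eps$ exceeds the support, in which case player $j$ simply never wins, consistent with the statement. The induction itself (from highest value down to the monopolist price, as in the two-player sketch) then packages these exchange arguments into a proof that the optimal mechanism has precisely the claimed payment functions.
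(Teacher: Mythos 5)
Your overall strategy (characterize the optimal mechanism through the payment functions $p_{-i}$, push each price up to $q_{-i}$ by local $\eps$-exchanges, then pin down the exact values and handle ties) matches the paper's, but the core step as you state it has a genuine gap. You claim that whenever $p_{-i}(v_{-i})<q_{-i}(v_{-i})$, replacing $p_{-i}(v_{-i})$ by $p_{-i}(v_{-i})+\eps$ (with $v_{-i}$ fixed, averaging only over $v_i$) does not decrease revenue. This is false in the case where, in the critical instance $(v_i,v_{-i})$ with $v_i=p_{-i}(v_{-i})$, every other player's price exceeds his value: there the $+\eps$ change kills the sale at $v_i$ entirely (no other player picks it up), so the net change is $\eps\bigl(1-F_i(v_i)\bigr)-v_i f_i(v_i)$, which is negative precisely when $\vv_i(v_i)>0$, i.e.\ whenever $x_i<p_{-i}(v_{-i})<q_{-i}(v_{-i})$. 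Regularity cannot rescue a single-price exchange here, because whether the lost sale is recovered depends on the other players' prices in that instance, not on a comparison of $p_{-i}$ to $q_{-i}$.

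The paper resolves exactly this by splitting into two cases (Claims \ref{clm:smaller-price-truthful} and \ref{clm:higher-price-truthful}). When some other player is (weakly) profitable at the critical instance, your single-price raise works. Otherwise, one must first use an optimality lemma (Lemma \ref{lem-rev-max-truthful}) to show that the top-virtual-value player $j$ in $v_{-i}$ has $p_{-j}(v_{-j})=v_j+\eps$, and then perform a \emph{simultaneous} two-price swap: raise $p_{-i}(v_{-i})$ by $\eps$ and lower $p_{-j}(v_{-j})$ to $v_j$, so that player $j$ picks up the sale at the critical instance. The revenue accounting then compares two different slices of the probability space (profiles extending $v_{-i}$ versus profiles extending $v_{-j}$), weighted by $f_i(v_i)$ and $f_j(v_j)$, and reduces to $\vv_j(v_j)\geq\vv_i(v_i)$ --- this, and not a one-dimensional average over $v_i$, is where regularity and the definition of $q_{-i}$ via the top virtual value actually enter (Lemma \ref{lem-sim-change-truthful}). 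Relatedly, your iteration needs the induction to proceed along the partial order on sorted virtual-value vectors (Proposition \ref{prop-main-truthful}) so that in all higher instances only player $i$ (respectively $j$) can be profitable; without that invariant the revenue comparisons in both cases break down. Your treatment of the $q_{-i}=x_i$ case and of the ``exactly one player at $q_{-i}$, the rest at $q_{-j}+\eps$'' tie structure is in the right spirit and close to the paper's final claim, but it rests on the lower bound whose proof, as proposed, does not go through.
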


We now prove the theorem. For simplicity we assume that all the values in the support of each $\mathcal F_i$ are identical and we denote this set of values by $V$. We define a partial order $\succ$ over all vectors in $V^{n-1}$. For $v \in  V^{n-1}$, let $t_v$ be the vector that its $i$'th coordinate is $\vv_i(v_i)$. $sort(v)$ is the $n-1$ entries of the vector $t_v$ sorted from largest to smallest. For $v, v' \in  V^{n-1}$, $v \succ v'$ if $sort(v)$ is lexicographically bigger than $sort(v')$. For example, $v \succ v'$ if the highest virtual value that corresponds to a value in $v$ is greater than the highest virtual value that corresponds to a value in $v'$.

\begin{proposition}\label{prop-main-truthful}
Fix some mechanism $M$ with payment functions $p_{-1},\ldots, p_{-n}$. Suppose that for some $v_{-i}\in V^{n-1}$ with $\vv_i^{-1}(\hv)\geq x_i$ we have that for every $\hv_{-j}\succ v_{-i}$ it holds that $p_{-j}(\hv_{-j})\geq q_{-j}(\hv_{-j})$. Then, there exists a mechanism $M'$ with payment functions $p'_{-1},\ldots, p'_{-n}$ such that for every $\hv_{-j}\succ v_{-i}$ it holds that $p'_{-j}(\hv_{-j})\geq q_{-j}(\hv_{-j})$. In addition, $p'_{-i}(v_{-i}) \geq q_{-i}(v_{-i})$. Moreover, the revenue of $M'$ is not lower than the revenue of $M$.
\end{proposition}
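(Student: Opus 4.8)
The plan is to prove the proposition by an $\eps$-by-$\eps$ ``fix-up'' of the single price $p_{-i}(v_{-i})$. If $\theta := p_{-i}(v_{-i})$ already satisfies $\theta \ge q_{-i}(v_{-i})$ we are done with $M'=M$; otherwise it suffices to exhibit one elementary step that modifies $M$ into a mechanism with $p_{-i}(v_{-i}) = \theta+\eps$, with the inequality $p_{-j}(\hat v_{-j}) \ge q_{-j}(\hat v_{-j})$ still holding for every $\hat v_{-j}\succ v_{-i}$, and with no loss in expected revenue; iterating the step $(q_{-i}(v_{-i})-\theta)/\eps$ times produces $M'$. Throughout I will use that, by Observation~\ref{obs-payment-independent} and the characterization of $0$-moral mechanisms as $0$-profit maximizers, a truthful mechanism is exactly a payment-independent profit maximizer in which no instance has two players with positive potential profit, so to remain in the class it is enough to preserve this ``at most one positive potential profit'' property (monotonicity then follows).

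The elementary step has a \emph{primary move} and, in one borderline instance, a \emph{compensating move}. The primary move raises $p_{-i}(v_{-i})$ by $\eps$ and changes nothing else. By payment independence this affects only instances of the form $(v_i,v_{-i})$; and since at most one player is ever profitable, in every such instance with $v_i>\theta$ player $i$ was the unique winner, and after the raise he is still the unique player with non-negative potential profit, so he keeps the item and pays $\eps$ more — revenue goes \emph{up} there, and is unchanged for $v_i<\theta$. Moreover no $p_{-j}(\hat v_{-j})$ with $\hat v_{-j}\succ v_{-i}$ was touched, so the inductive hypothesis survives verbatim. The only loss occurs in the borderline instance $(\theta,v_{-i})$ (or $(\theta+\eps,v_{-i})$, depending on the convention for potential-profit-zero ties): there player $i$ had potential profit $0$ and now has negative potential profit, and if no other player is profitable the item goes unsold and the payment $\theta$ is lost. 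To recover it, the compensating move \emph{lowers} the potential payment of the player $j^\star$ attaining the top virtual value $\hv=\max_j\vv_j((v_{-i})_j)$ of $v_{-i}$, evaluated at the vector $u$ that agrees with $v_{-i}$ off coordinate $j^\star$ and carries $\theta$ in player $i$'s coordinate, lowering it just enough that $j^\star$ wins $(\theta,v_{-i})$ at (essentially) the price $(v_{-i})_{j^\star}$. Because $\theta<q_{-i}(v_{-i})=\vv_i^{-1}(\hv)$ forces $\vv_i(\theta)<\hv$, the vector $u$ is $\prec v_{-i}$, so this lowering again leaves the inductive hypothesis untouched; and using the hypothesis together with the regularity of the $\mathcal F_j$ one checks that in every instance affected by the lowering every player other than $j^\star$ has non-positive potential profit (their ``rest vectors'' are $\succ v_{-i}$ or were already priced out under $M$), which keeps the mechanism truthful and means the lowering only turns ``unsold'' into ``$j^\star$ wins'' or leaves $j^\star$ as the winner.

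It remains to add up the revenue change over all instances touched by the two moves and verify it is non-negative. The primary move contributes the $+\eps$ gains above; the compensating move recovers revenue $\approx(v_{-i})_{j^\star}$ in the borderline instance but can cost revenue in the instances $(v_{j^\star};u)$ where $j^\star$ was already winning at a higher price; balancing these is the technical heart, and it is exactly where regularity — and the hypothesis $\vv_i^{-1}(\hv)\ge x_i$, which guarantees that the relevant virtual values are not too negative — gets used, in the same spirit as (but more simply than) the ``Case II'' trade-off in the two-player sketch. I expect this final accounting, together with the $n$-player bookkeeping around it (identifying $j^\star$, tracking which instances the lowering perturbs, and controlling all cross-distribution comparisons only through the quantities $q_{-j}(\cdot)$ and the inductive hypothesis rather than by comparing raw values from different $\mathcal F_j$), to be the main obstacle; the guiding intuition — push $p_{-i}(v_{-i})$ up toward the Myerson reserve $\vv_i^{-1}(\hv)$, which can only improve virtual surplus — is clear, but the discreteness of the support forces the incremental argument and a careful treatment of potential-profit-zero ties.
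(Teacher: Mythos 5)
Your plan is, in outline, the paper's plan: take the $\succ$-maximal violating vector $v_{-i}$, raise $p_{-i}(v_{-i})$ by $\eps$ repeatedly, and when the borderline instance $(v_i,v_{-i})$ with $v_i=p_{-i}(v_{-i})$ would otherwise go unsold, compensate by lowering the price offered to the player $j$ attaining the top virtual value $\hv$ of $v_{-i}$ at the vector $u=v_{-j}$ obtained by inserting $v_i$ and deleting $j$'s coordinate; the observation that $u\prec v_{-i}$ (so the inductive hypothesis is untouched) and the split into ``some other player weakly profitable at the borderline'' versus ``nobody else profitable'' are exactly Claims \ref{clm:smaller-price-truthful} and \ref{clm:higher-price-truthful}. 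The problem is that you stop precisely at the step that \emph{is} the proposition: the revenue accounting for the combined move is declared ``the main obstacle'' and left as an expectation, so the statement is not actually established.

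Moreover, as you set up the compensating move it need not balance. You lower $p_{-j}(v_{-j})$ ``just enough that $j$ wins at (essentially) the price $v_j$''; if $p_{-j}(v_{-j})>v_j+\eps$ this is a decrease of many $\eps$'s, costing that full amount on every instance $(z,v_{-j})$ with $z\geq p_{-j}(v_{-j})$, while your primary move gains only $\eps$ per instance plus one recovered sale, and no argument in your sketch covers such a loss. The missing ingredient is the paper's Lemma \ref{lem-rev-max-truthful}: using revenue maximality of $M$ (a hypothesis you also need, and which survives the iteration because no step loses revenue) together with the inductive hypothesis --- which prices every $k\neq j$ out of the instances $(z,v_{-j})$, $z>v_j$ --- regularity of $\mathcal F_j$ forces $p_{-j}(v_{-j})=v_j+\eps$ exactly. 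Only then is the swap an $\eps$-for-$\eps$ exchange: a loss of $\eps$ on mass proportional to $f_i(v_i)(1-F_j(v_j))$ against a gain of $\eps$ on mass proportional to $f_j(v_j)(1-F_i(v_i))$ plus $v_j-v_i$ on the borderline instance, and the balance reduces (Lemma \ref{lem-sim-change-truthful}) to $\vv_j(v_j)\geq \vv_i(v_i)$, which holds because $v_i<q_{-i}(v_{-i})=\vv_i^{-1}(\hv)$ while $j$ attains $\hv$. Finally, the zero-profit ties you defer are not cosmetic: when some player $j'$ has $v_{j'}=p_{-j'}(v_{-j'})$ at the borderline instance with $v_{j'}\leq v_i$, raising $p_{-i}(v_{-i})$ alone shifts the borderline sale from $v_i$ to the smaller $v_{j'}$, and the $+\eps$ gains need not cover that drop; the paper's Claim \ref{clm:smaller-price-truthful} handles this subcase by raising $p_{-j'}(v_{-j'})$ instead, a case your single primary move does not address.
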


Before proving the proposition we show that the proposition immediately implies a characterization of the payment functions for all values above the monopolist prices:

\begin{corollary}
There exists a revenue maximizing mechanism $M$ with payment functions $p_1,\ldots, p_n$ such that for every $v_{-i}$ with $\vv^{-1}(\hv)\geq x_i$ we have that $p_{-i}(v_{-i})\geq q_{-i}(v_{-i})$.
\end{corollary}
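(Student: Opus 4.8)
The plan is to derive the corollary from Proposition~\ref{prop-main-truthful} by a finite induction that peels off the relevant vectors one at a time, in decreasing order of the partial order $\succ$. First I would record that the problem is finite: since $V$ is finite there are only finitely many monotone allocation functions, each inducing a uniquely determined (up to the discrete critical-value convention) payment scheme and hence a well-defined expected revenue, so a revenue-maximizing truthful mechanism $M_0$ exists. Let $S$ be the finite set of all pairs $(i,v_{-i})$ with $v_{-i}\in V^{n-1}$ and $\vv_i^{-1}(\hv)\geq x_i$ (here $\hv$ denotes the largest virtual value appearing in $v_{-i}$), and fix an enumeration $(i_1,w_1),\dots,(i_m,w_m)$ of $S$ whose second coordinates form a linear extension of $\succ$, i.e.\ $w_a\succ w_b \Rightarrow a<b$. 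If necessary, I would also insert a preliminary step replacing $M_0$ by a revenue-equivalent (or better) mechanism in which $p_{-j}(v_{-j})\geq x_j$ whenever the top virtual value in $v_{-j}$ is below $x_j$, so that the hypothesis of the proposition, which quantifies over all $\hv_{-j}\succ w_k$, is effectively a statement about vectors in $S$ together with these easy cases.

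Then I would build mechanisms $M_0,M_1,\dots,M_m$ inductively, maintaining the invariant that $M_k$ has revenue no smaller than $M_0$ (hence is itself revenue-maximizing) and satisfies $p_{-i_\ell}(w_\ell)\geq q_{-i_\ell}(w_\ell)$ for all $\ell\leq k$. For the step from $M_{k-1}$ to $M_k$, apply Proposition~\ref{prop-main-truthful} with $v_{-i}:=w_k$ and $i:=i_k$. Its hypothesis asks that $p_{-j}(\hv_{-j})\geq q_{-j}(\hv_{-j})$ for every $\hv_{-j}\succ w_k$; by the linear-extension property every such vector occurs among $w_1,\dots,w_{k-1}$, so the inductive hypothesis supplies precisely this. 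The proposition then returns $M_k$ with $p_{-i_k}(w_k)\geq q_{-i_k}(w_k)$, the bounds for all $\hv_{-j}\succ w_k$ still intact, and revenue not below that of $M_{k-1}$. Setting $M:=M_m$ yields a revenue-maximizing mechanism with $p_{-i}(v_{-i})\geq q_{-i}(v_{-i})$ for every $i$ and every $v_{-i}$ with $\vv_i^{-1}(\hv)\geq x_i$, which is exactly the claim.

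The one place that needs care — and what I expect to be the main obstacle — is that Proposition~\ref{prop-main-truthful}, as stated, only promises to preserve the lower bounds for vectors \emph{strictly} above $w_k$ in $\succ$, whereas the induction also needs them preserved for those $w_\ell$ with $\ell<k$ that are $\succ$-equivalent to $w_k$ (same sorted tuple of virtual values). I would handle this either by verifying, from the proof of the proposition, that the payment manipulation touches only $p_{-i_k}(w_k)$ and payments at vectors $\succ w_k$, leaving $\succ$-equivalent vectors unchanged; or, if that fails, by restructuring the induction so that an entire $\succ$-equivalence class is processed in one outer round, reapplying the proposition within the class and arguing that a bound of the form $p_{-i_\ell}(w_\ell)\geq q_{-i_\ell}(w_\ell)$ already established earlier in the class cannot be destroyed — such an inequality is exactly the sort the proposition's manipulation is designed to push upward, never downward. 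Everything else is routine bookkeeping.
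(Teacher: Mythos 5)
Your proposal is correct and takes essentially the same route as the paper: the paper also derives the corollary by repeatedly applying Proposition~\ref{prop-main-truthful} to a maximal (with respect to $\succ$) violating vector $v_{-i}$, obtaining each time a mechanism with no less revenue in which $p_{-i}(v_{-i})\geq q_{-i}(v_{-i})$, and iterating until no violations remain. Your explicit linear extension, the revenue/bound invariant, and the cautionary remarks about $\succ$-ties and about vectors above the monopolist-price threshold are just a more careful bookkeeping of the same greedy top-down induction that the paper states in a few sentences.
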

\begin{proof}
If the claim does not hold, for some $v_{-i}$ that satisfies the condition of the proposition we have that $p_{-i}(v_{-i})<  q_{-i}(v_{-i})$. If there are several such $v_{-i}$'s, we assume that $v_{-i}$ is a maximal such vector according to $\succ$. Note that by the maximality of $v_{-i}$, we have that for every $\hv_{-j}\succ v_{-i}$, $p_{-j}(\hv_{-j})\geq q_{-i}(\hv_{-j})$. The conditions of the proposition thus hold and thus there is another mechanism $M'$ with payment functions $p'_1,\ldots, p'_n$ such that $p'_{-i}(v_{-i}) \geq q_{-i}(v_{-i})$. We continue similarly to the next vector $v'_{-i}$ that violates the corollary, fixing it by obtaining a new mechanism, and so on.
\end{proof}


We are now ready to prove Proposition \ref{prop-main-truthful}:

\begin{proof}
Let $v_{-i}$ be such that the conditions of the proposition hold but $p_{-i}(v_{-i})<q_{-i}(v_{-i})$. Let $v_i=p_{-i}(v_{-i})$. 

\begin{claim} \label{clm:smaller-price-truthful}
Fix some mechanism $M$ with payment functions $p_{-1},\ldots,p_{-n}$. Let $v_{-i}$ be such that the conditions of Proposition \ref{prop-main-truthful} hold in $M$ and $p_{-i}(v_{-i})<q_{-i}(v_{-i})$. Let $v_i=p_{-i}(v_{-i})$ and suppose that in the instance $(v_i,v_{-i})$ there exists some player $j\neq i$ such that $v_j\geq p_{-j}(v_{-j})$. Then, there exists a mechanism $M'$ with payment functions $p'_{-1}, \ldots, p'_{-n}$ that generates at least as much revenue as $M$. Furthermore, all the conditions of Proposition \ref{prop-main-truthful} hold for $v_{-i}$ in $M'$, but in the instance $(v_i,v_{-i})$ it holds that for every player $j\neq i$, $v_j< p'_{-j}(v_{-j})$.
\end{claim}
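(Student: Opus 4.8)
The plan is to follow the template of Case I from the two-player sketch, but iterated and adapted so that it also evacuates every profitable competitor from the instance $(v_i,v_{-i})$. The starting observations are: (i) truthful mechanisms are payment independent (Observation~\ref{obs-payment-independent} with $\alpha=0$), so $M$ and $M'$ are pinned down by the potential-payment functions $p_{-k}$; and (ii) in any instance of a truthful single-item mechanism at most one player has strictly positive potential profit, and if some player is strictly positive then every other player has non-positive potential profit. Since $v_i=p_{-i}(v_{-i})$, player $i$'s potential profit at $(v_i,v_{-i})$ is exactly $0$; hence every competitor $j\neq i$ with $v_j\geq p_{-j}(v_{-j})$ is either sitting exactly on its boundary ($v_j=p_{-j}(v_{-j})$) or is the unique strictly-positive-profit winner, and in either case the item at $(v_i,v_{-i})$ goes by tie-breaking to one of these ``boundary'' players. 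First I would use the hypothesis $p_{-j}(\hv_{-j})\geq q_{-j}(\hv_{-j})$ for all $\hv_{-j}\succ v_{-i}$, together with $\vv_i\equiv\vv_j$ in the i.i.d. setting, to rule out any profitable competitor with $v_j<v_i$: such a $j$ would have $v_{-j}\succ v_{-i}$, hence $p_{-j}(v_{-j})\geq q_{-j}(v_{-j})>v_j$, contradicting profitability. So every competitor still in play has $v_j\geq v_i$.

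Next comes the surgery. I would raise $p_{-i}(v_{-i})$ by $\eps$ (keeping $p_{-k}$ untouched for $k\neq i$ at every vector $\succ v_{-i}$, and $p_{-i}$ untouched away from $v_{-i}$), which alters the allocation only on the one-parameter family $\{(v,v_{-i})\}$ and there only at the single point $v=v_i$, where player $i$ drops out; the item at $(v_i,v_{-i})$ is then re-awarded to a boundary competitor. Repeating this $\eps$-step, and at the appropriate moment also nudging $p_{-j}(v_{-j})$ up by $\eps$ to retire a competitor whose value has just been reached, drives $p_{-i}(v_{-i})$ upward until either it hits $q_{-i}(v_{-i})$ (and $v_{-i}$ is settled) or no $j\neq i$ is profitable at $(v_i,v_{-i})$ — which is exactly the conclusion of the claim. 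Because every modification touches only $p_{-i}$ at $v_{-i}$ and some $p_{-j}$ at vectors $\preceq v_{-i}$, no condition of Proposition~\ref{prop-main-truthful} attached to a vector $\succ v_{-i}$ is ever disturbed, so that part of the conclusion comes for free, and monotonicity of the underlying allocation is easily checked because the only reassigned instance is a single tied instance.

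The heart of the argument — and the step I expect to be the main obstacle — is showing that no single step lowers the expected revenue. Following the Case I bookkeeping, I would freeze the competitors at $v_{-i}$ and compare the two mechanisms instance by instance over $(v,v_{-i})$: for $v$ strictly below the active boundary nothing changes; for $v>v_i$ player $i$ was already winning and still wins, now paying $\eps$ more — a gain; and at $v=v_i$ the item migrates from a boundary player paying at most $v_i=p_{-i}(v_{-i})$ to a boundary competitor paying $v_j\geq v_i$ — at worst revenue-neutral, and strictly a gain when $v_j>v_i$, which is where regularity, here in the form $p_{-i}(v_{-i})=v_i<q_{-i}(v_{-i})$, really bites. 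The genuinely new difficulty relative to two players is that several competitors can be simultaneously tied at $(v_i,v_{-i})$, so the tie-break must be shown always to admit a weakly higher-revenue resolution; this is handled as in the sketch, by noting that equal potential profits at comparable values force the larger value to carry the larger payment, applied to each competitor in turn. Weighting the per-$v$ comparisons by $\mathcal F$ yields $rev(M')\geq rev(M)$, and iterating yields the claim.
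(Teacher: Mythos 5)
There is a genuine gap, and it sits exactly where you predicted the main obstacle would be. This claim lives in the truthful appendix, where the values are drawn from independent but \emph{not necessarily identical} regular distributions $\mathcal F_1,\ldots,\mathcal F_n$; the order $\succ$ and the thresholds $q_{-i}$ are defined in terms of \emph{virtual} values. Your preliminary reduction --- ``use $\vv_i\equiv\vv_j$ in the i.i.d.\ setting to rule out any profitable competitor with $v_j<v_i$'' --- invokes an assumption that is simply not available here (it belongs to Theorem \ref{thm-standard}, and its analogue in that proof, Claim \ref{claim-only-small-are-profitable}, works because there $\succ$ is an order on values). In the heterogeneous setting the hypothesis of Proposition \ref{prop-main-truthful} at best excludes competitors with strictly smaller \emph{virtual} value, so a weakly profitable competitor with $v_j<v_i$ but $\vv_j(v_j)\geq\vv_i(v_i)$ cannot be excluded. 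This breaks your tie-point accounting: when you raise $p_{-i}(v_{-i})$ by $\eps$ and the item at $(v_i,v_{-i})$ migrates to a boundary competitor, that competitor may pay $v_j<v_i$, a loss of $v_i-v_j$ that nothing in your per-step comparison compensates for. The paper's proof is structured precisely to avoid this: it lets $j$ be the \emph{highest-value} boundary competitor and splits on $v_j>v_i$ versus $v_j\leq v_i$; in the first case it raises $p_{-i}(v_{-i})$ (and the tie instance's winner and payment do not change at all --- no migration argument is needed), while in the second case it raises $p_{-j}(v_{-j})$ instead and redoes the revenue accounting over instances $(z,v_{-j})$. You gesture at ``nudging $p_{-j}$ up by $\eps$,'' but for a different purpose and with no revenue analysis, which is the missing half of the proof.

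A second, related confusion: raising $p_{-i}(v_{-i})$ never changes the competitors' profitability at the fixed instance $(v_i,v_{-i})$, since their prices $p_{-j}(v_{-j})$ are untouched; so the claim's stated conclusion (every $j\neq i$ has $v_j<p'_{-j}(v_{-j})$ there) can only be delivered by the increases to the $p_{-j}$'s --- exactly the unanalyzed moves. Moreover, your plan to keep pushing $p_{-i}(v_{-i})$ by lone $\eps$-steps ``until it hits $q_{-i}(v_{-i})$'' overreaches: once no competitor is weakly profitable you are in the regime of Claim \ref{clm:higher-price-truthful}, where a lone increase can strictly lose revenue and the paper must pair it with a simultaneous decrease of the top-virtual-value competitor's price and the regularity computation of Lemma \ref{lem-sim-change-truthful}. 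So the overall surgery-and-bookkeeping template matches the paper's, but the case that carries the real difficulty (a weakly profitable competitor with $v_j\leq v_i$) is handled by an argument that does not apply in this setting.
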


\begin{claim} \label{clm:higher-price-truthful}
Fix some mechanism $M$ with payment functions $p_{-1},\ldots,p_{-n}$. Let $v_{-i}$ be such that the conditions of Proposition \ref{prop-main-truthful} hold in $M$ and $p_{-i}(v_{-i})<q_{-i}(v_{-i})$. Let $v_i=p_{-i}(v_{-i})$ and suppose that in the instance $(v_i,v_{-i})$, for every $k\neq i$, $p_{-k}(v_{-k})>v_k$. Let $M'$ be a mechanism with payment functions $p'_{-1}, \ldots, p'_{-n}$ which are identical to the payment functions $p_{-1},\ldots, ,p_{-n}$ of $M$ except that $p'_{-i}(v_{-i})= p_{-i}(v_{-i})+\eps$ and $p'_{-j}(v_{-j})=v_j$, where $j$ is the player with the highest virtual value in $v_{-i}$. Then, the revenue of $M'$ is not lower than the revenue of $M$. 
\end{claim}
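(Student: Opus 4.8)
The plan is to prove Claim~\ref{clm:higher-price-truthful} by a direct instance-by-instance comparison of the expected revenues of $M$ and $M'$, in the spirit of ``Case~II'' of the two-player sketch, with player $i$ playing the role of player~$2$ and the maximal-virtual-value coordinate $j$ of $v_{-i}$ playing the role of player~$1$. The only entries in which the payment functions differ are $p_{-i}(\cdot)$ at $v_{-i}$ (raised by $\eps$) and $p_{-j}(\cdot)$ at $v_{-j}$ (lowered to $v_j$), so $M$ and $M'$ can behave differently only on instances in the two families $A=\{(z,v_{-i}):z\in V\}$ and $B=\{(v_{-j},w):w\in V\}$, which overlap in the single instance $(v_i,v_{-i})=(v_{-j},v_j)$. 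It is worth noticing at the outset that the two modifications are coupled exactly so that $M'$ is again a payment-independent, single-profitable-player (i.e.\ truthful) mechanism: player $i$ raises its critical value and cedes precisely the instance $(v_i,v_{-i})$, and player $j$ lowers its critical value and picks that instance up. Hence we may keep using the ``at most one profitable player'' property for $M'$ as well.

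First I would normalize and argue that we may assume $p_{-j}(v_{-j})=v_j+\eps$: by the hypothesis of the claim $p_{-j}(v_{-j})>v_j$, and since in every instance $(v_{-j},w)$ with $w>v_j$ no player other than $j$ can be profitable (by the hypothesis of Proposition~\ref{prop-main-truthful} together with regularity), this entry behaves like a single-player take-it-or-leave-it price for $\mathcal F_j$, so a monopoly-price argument for the regular distribution $\mathcal F_j$ shows that lowering it down to $v_j+\eps$ does not decrease the revenue; this step preserves truthfulness and all the hypotheses because $v_{-j}$ is never strictly above $v_{-i}$ in the order $\succ$. With this in hand I would track the three contributions to $\mathrm{rev}(M')-\mathrm{rev}(M)$. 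On family $A$: instances with $z<v_i$ are unchanged, and for each $z\ge v_i+\eps$ player $i$ wins at price $v_i$ in $M$ and at price $v_i+\eps$ in $M'$, a gain of $\eps$. Symmetrically on family $B$: instances with $w<v_j$ are unchanged, and for each $w\ge v_j+\eps$ player $j$ wins at price $v_j+\eps$ in $M$ and at price $v_j$ in $M'$, a loss of $\eps$. At the overlap instance $(v_i,v_{-i})$, every player other than $i$ is non-profitable in $M$ (this is exactly the hypothesis of the claim), so $M$ gives the item to $i$ at price $v_i$, while in $M'$ player $i$ has become non-profitable and player $j$ has potential profit $0$, so $M'$ gives the item to $j$ at price $v_j$; the net change there is $v_j-v_i$.

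Summing these with the respective instance probabilities and pulling out the common positive factor $\big(\prod_{k\ne i,j}f_k(v_k)\big)f_i(v_i)f_j(v_j)$, one gets that the sign of $\mathrm{rev}(M')-\mathrm{rev}(M)$ is the sign of $\big(v_j-\eps\tfrac{1-F_j(v_j)}{f_j(v_j)}\big)-\big(v_i-\eps\tfrac{1-F_i(v_i)}{f_i(v_i)}\big)=\varphi_j(v_j)-\varphi_i(v_i)$. This is nonnegative: $\varphi_j(v_j)=\bar v$ by the choice of $j$, while $\varphi_i(v_i)\le\bar v$ because $v_i=p_{-i}(v_{-i})<q_{-i}(v_{-i})=\varphi_i^{-1}(\bar v)$ and $\varphi_i$ is non-decreasing. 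Hence $\mathrm{rev}(M')\ge\mathrm{rev}(M)$; this is also the point where plain regularity suffices, in contrast with the moral case of Theorem~\ref{thm-standard}, where the presence of several profitable players forces the stronger ``standard'' condition. The step I expect to be the main obstacle is the bookkeeping that pins down the winner of every affected instance of $M$ and of $M'$ — concretely, the sub-claim that in each $(v_{-j},w)$ with $w>v_j$ the only player with nonnegative potential profit is $j$. That is where the Proposition~\ref{prop-main-truthful} hypothesis and regularity are genuinely used, and it requires care around the degenerate case in which $\varphi_j(w)=\bar v$ for some $w>v_j$ (so the others' profile at $(v_{-j},w)$ is only weakly, not strictly, above $v_{-i}$ in $\succ$) and around the tie-breaking convention at potential profit exactly $0$; once those are handled, the revenue arithmetic above is immediate.
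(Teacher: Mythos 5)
Your proof is correct and follows essentially the same route as the paper's: first pin down $p_{-j}(v_{-j})=v_j+\eps$ via the single-player monopoly-price argument for the regular $\mathcal F_j$ (the paper's Lemma~\ref{lem-rev-max-truthful}), then do the instance-by-instance accounting over the two affected slices plus the overlap instance ($+\eps$ per instance on one slice, $-\eps$ on the other, $v_j-v_i$ at the overlap), reducing everything to $\varphi_j(v_j)\ge\varphi_i(v_i)$, which holds by the choice of $j$ and regularity. This is precisely the paper's Lemma~\ref{lem-sim-change-truthful} and its proof, including the bookkeeping sub-claim you flag about player $j$ being the only possibly-profitable player in $(z,v_{-j})$ for $z>v_j$.
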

Proofs of these claims can be found in Subsections \ref{subsec-proof-smaller-truthful} and \ref{subsec-proof-higher-truthful}. As in the outline, the two claims imply the proposition: find the first vector $v_{-i}$ that violates the proposition, use one of the claims to obtain another mechanism where $p_{-i}(v_{-i})$ is increased by $\eps$, and repeat until $p_{-i}(v_{-i})\geq q_{-i}(v_{-i})$. Then continue similarly to fix the next vector $v'_{-i'}$ that its price is too small.
\end{proof}

We now characterize the payment functions for vectors $v_{-i}$ such that $\vv_i^{-1}(\hv)< x_i$, i.e., the highest virtual value in $v_{-i}$ is negative. 

\begin{proposition}\label{prop-less-than-mono-truthful}
Consider a revenue maximizing mechanism $M$ with payment functions $p_{-1},\ldots, ,p_{-n}$ where values are sampled independently from regular distributions $\mathcal F_1,\ldots, \mathcal F_n$. Denote by $x_i$ the monopolist price of $\mathcal F_i$. If for every $v_{-i}$ such that $\vv_i^{-1}(\hv)\geq x_i$ it holds that $p_{-i}(v_{-i})\geq q_{-i}(v_{-i})$, then for every $v_{-i}$ such that $q_{-i}(v_{-i})= x_i$ it holds that $p_{-i}(v_{-i})=x_i$.
\end{proposition}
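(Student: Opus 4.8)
The plan is to deduce the statement from two facts: (i) a \emph{no-competition} observation, drawn from the hypothesis of the proposition, saying that on a vector $v_{-i}$ with $q_{-i}(v_{-i})=x_i$ the other players cannot win in $M$ once player $i$'s value exceeds $x_i$; and (ii) the defining property of the monopolist price $x_i$ as the reserve maximizing the revenue extractable from a single bidder whose value is drawn from $\mathcal F_i$.

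First I would fix a player $i$ and a vector $v_{-i}$ with $q_{-i}(v_{-i})=x_i$, which by definition of $q_{-i}$ means $\vv_j(v_j)\le \vv_i(x_i)\le 0$ for every $j\ne i$ (so in particular $v_j\le x_j$ for all $j\ne i$). For the no-competition claim, consider any instance $(v_i',v_{-i})$ with $v_i'>x_i$, so $\vv_i(v_i')>0$. For $j\ne i$ the vector $v_{-j}$ contains the coordinate $v_i'$, whose virtual value is positive and hence strictly exceeds $\vv_j(v_j)$ and is at least $\vv_j(x_j)$; therefore $\vv_j^{-1}(\hv)\ge x_j$ holds for $v_{-j}$ and $q_{-j}(v_{-j})>v_j$. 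The hypothesis of the proposition then yields $p_{-j}(v_{-j})\ge q_{-j}(v_{-j})>v_j$, so $j$ does not win the item at $(v_i',v_{-i})$. Thus on the whole column $\{(v_i',v_{-i}):v_i'\in\mathrm{supp}(\mathcal F_i)\}$ the item is allocated, if at all, only to player $i$, and since truthful mechanisms are payment independent (cf.\ Observation~\ref{obs-payment-independent}) player $i$ always pays exactly $p_{-i}(v_{-i})$ when he wins.

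Next I would compare $M$ with the mechanism $M'$ identical to $M$ except that $p'_{-i}(v_{-i})=x_i$. Since only this one table entry changes (and no $p_{-j}$ changes), the allocation of $M'$ is still monotone and the revenues of $M$ and $M'$ can differ only on the above column. On the part $v_i'>x_i$ the no-competition claim applies to $M'$ as well, so both mechanisms are single-bidder posted-price mechanisms for player $i$ with reserves $p_{-i}(v_{-i})$ and $x_i$, contributing $\Pr[v_{-i}]$ times $p_{-i}(v_{-i})\cdot\Pr[v_i\ge p_{-i}(v_{-i})]$ and $x_i\cdot\Pr[v_i\ge x_i]$ respectively; by the definition of $x_i$ as the monopolist price of $\mathcal F_i$, the second quantity is at least the first. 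On the part $v_i'\le x_i$ every player's virtual value is nonpositive, so a revenue-maximizing truthful mechanism sells nothing there --- this is where I would invoke that the revenue of any truthful mechanism equals $\mathbb E[\vv_{w(v)}(v_{w(v)})]$ (the term read as $0$ when the item is unsold), or alternatively the structure of $M$ on the high-virtual-value instances from Proposition~\ref{prop-main-truthful} --- so $M$ is unchanged there and $M'$ also sells nothing to player $i$ there because $p'_{-i}(v_{-i})=x_i\ge v_i'$. Hence $\mathrm{rev}(M')\ge\mathrm{rev}(M)$; since $M$ is revenue maximizing this is an equality, which together with regularity of $\mathcal F_i$ (so that $r\mapsto r\Pr[v_i\ge r]$ attains its maximum exactly at $x_i$) forces $p_{-i}(v_{-i})=x_i$. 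Repeating this one table entry at a time over all $i$ and all $v_{-i}$ with $q_{-i}(v_{-i})=x_i$ gives the proposition.

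The main obstacle is the part of the column with $v_i'\le x_i$: one must know that the revenue-maximizing $M$ genuinely allocates nothing there (and carefully treat the boundary instance $v_i'=x_i$), and this cannot be argued locally --- it needs the global revenue-maximality of $M$, packaged either as the revenue identity for truthful mechanisms or via Proposition~\ref{prop-main-truthful}. A secondary but real delicacy is the discrete $\pm\eps$ bookkeeping around $x_i$: the implications ``$\vv_i(v_i')>0\Rightarrow q_{-j}(v_{-j})>v_j$'' and ``$r\mapsto r\Pr[v_i\ge r]$ is maximized exactly at $x_i$'' depend on the precise conventions of Definition~\ref{def-virtual-value}, and this is where the formal argument must be most careful.
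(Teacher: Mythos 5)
Your overall architecture is the paper's own: fix the column $\{(v_i',v_{-i})\}$, use the hypothesis to show that once $v_i'$ exceeds $x_i$ no player other than $i$ can win (the paper's proof of this proposition makes exactly this no-competition observation, for all $z\geq x_i$), and then compare the single-entry modification $p'_{-i}(v_{-i})=x_i$ against the original reserve via the monopolist-price inequality $x_i\cdot\Pr[v_i\geq x_i]\geq r\cdot\Pr[v_i\geq r]$.

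The place where your write-up goes astray is precisely what you call the ``main obstacle,'' the sub-column $v_i'\leq x_i$. The claim that a revenue-maximizing truthful mechanism ``sells nothing there'' is both stronger than what is true (an optimal mechanism may sell on such instances, e.g.\ to some $j\neq i$ at a price at most $v_j$, or at zero virtual value) and stronger than what you need; and deriving a pointwise no-sale statement from the aggregate identity $\mathrm{rev}=\mathbb{E}[\vv_{w(v)}(v_{w(v)})]$ is essentially the Myerson characterization this appendix is re-proving, so leaning on it defeats the purpose even if it is not formally circular. No global input is needed: on instances where $v_i'$ is below both the old and the new reserve of player $i$, nothing has changed, so $M$ and $M'$ collect identical revenue; on instances with $v_i'$ between the two reserves, any sale $M$ makes to a player $j\neq i$ is still available to $M'$ (the functions $p_{-j}$ are untouched and player $i$ is not strictly profitable there under $M'$), while the revenue $M$ extracts from player $i$ on this range plus on $v_i'>x_i$ is exactly what the monopolist-price inequality dominates; finally, a reserve below $x_i$ can in addition create two strictly profitable players, i.e.\ break feasibility, which is the paper's other remark for the case $\underline{x}<x_i$. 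So the low part of the column is handled locally, and your detour should simply be deleted. Two smaller points: your no-competition claim is stated only for $v_i'>x_i$, so the boundary instance $v_i'=x_i$ (where player $i$ has zero profit under the new reserve) needs a one-line check; and concluding the exact equality $p_{-i}(v_{-i})=x_i$, rather than ``there is an optimal mechanism with this reserve,'' implicitly uses uniqueness of the maximizer of $r\cdot\Pr[v_i\geq r]$ --- a looseness you share with the paper.
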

\begin{proof}
If our goal is only to maximize the revenue extracted from player $i$ (ignoring the contribution of the other players) we would set the threshold price of player $i$ to $x_i$, as $x_i\in\arg\max_p p(1-F_i(p-\eps))$. 

Now observe that when $p_{-i}(v_{-i})= x_i$  the revenue of the mechanism (conditioned on the values of all players except player $i$ are according to $v_{-i})$ is indeed $\max_p(p(1-F_i(p-\eps)))$, all of it due to payments made by player $i$. To see this, let $v^z_{-j}$ be the vector obtained from $(z, v_{-i})$ by removing the $j$'th coordinate, for $z\geq x_i$. Since $q_{-i}(v^z_{-j})> x_i$, for every $j\neq i$ (by the conditions of the proposition), we have that only player $i$ is profitable in those instances. Thus, when $p_{-i}(v_{-i})= x_i$ we extract as much revenue as possible from player $i$ while preserving feasibility. This proves that setting $p_{-i}(v_{-i})=x_i$ generates more revenue than setting $p_{-i}(v_{-i})=\bar x$, for $\bar x \geq x_i$.  Note that if we set $p_{-i}(v_{-i})=\underline x$, for $\underline x_i< x_i$, we extract less revenue from player $i$ and may hurt feasibility as there might be more than one player with strictly positive profit. We conclude that setting $p_{-i}(v_{-i})= x_i$ maximizes the revenue.
%
%
%
%
%
%
%
 \end{proof}

Up until now we have established that for every player $i$ and $v_{-i}$, $p_{-i}(v_{-i})\geq q_{-i}(v_{-i})$. To provide the exact form of the payment functions we will need the following lemma:

\begin{lemma} \label{lem-rev-max-truthful}
Consider a revenue maximizing mechanism $M$ with payment functions $p_{-1},\ldots, ,p_{-n}$ where values are independently sampled from regular distributions $\mathcal F_1,\ldots, \mathcal F_n$. Denote by $x_i$ the monopolist price of $\mathcal F_i$. Suppose that for some $v_{-i}$ and $t\geq x_i$ it holds that all players except player $i$ have negative profit in every instance $(z,v_{-i})$, $z\geq t$. Then, $p_{-i}(v_{-i})\leq t$.
%
%
%
\end{lemma}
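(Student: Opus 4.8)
The plan is to prove the statement by a single local modification of the payment functions. Write $s := p_{-i}(v_{-i})$; if $s \le t$ there is nothing to prove, so assume $s > t$. I would then define the mechanism $M'$ that is identical to $M$ except that $p'_{-i}(v_{-i}) := t$, and argue (i) that $M'$ is again a legitimate truthful mechanism, and (ii) that the revenue of $M'$ is at least that of $M$; since $M$ is revenue-maximizing, $M'$ is then optimal as well and has $p'_{-i}(v_{-i}) = t$, so we may assume $p_{-i}(v_{-i}) \le t$.

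For step (i): lowering player $i$'s critical value preserves monotonicity of his allocation, and no other player's payment function --- hence no other player's potential profit in any instance --- is changed. The only instances whose outcome differs between $M$ and $M'$ are those of the form $(z, v_{-i})$ with $t \le z < s$: there player $i$'s potential profit rises from $z - s < 0$ to $z - t \ge 0$, while by hypothesis every player $j \neq i$ has negative potential profit in $(z, v_{-i})$, so awarding the item to player $i$ (at price $t$) is exactly what the profit-maximizer rule dictates, and feasibility is maintained.

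For step (ii): since $M$ and $M'$ agree outside the column $\{(z, v_{-i}) : z \in V\}$, it suffices to compare the revenue conditioned on $V_{-i} = v_{-i}$. Under $M$ the item is sold only when $v_i \ge s$, always to player $i$ at price $s$, contributing $r(s)$ where $r(p) := p\,(1 - F_i(p - \eps))$ is the monopoly revenue at threshold $p$; under $M'$ it is sold when $v_i \ge t$, to player $i$ at price $t$, contributing $r(t)$; the instances $v_i < t$ are untouched. So the revenue changes by $\Pr[V_{-i} = v_{-i}]\cdot(r(t) - r(s))$, and it remains to show $r(t) \ge r(s)$ for $x_i \le t < s$. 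Here I would use the identity $r(p + \eps) - r(p) = -f_i(p)\,\vv_i(p)$ (a one-line discrete computation): by definition of the monopolist price $\vv_i(p) > 0$ for every $p > x_i$ and $\vv_i(x_i) \le 0$, while regularity makes $\vv_i$ non-decreasing so that $r$ is maximized at $x_i$. Telescoping, $r(s) - r(t) = -\sum_{p \in \{t, t+\eps, \ldots, s - \eps\}} f_i(p)\,\vv_i(p) \le 0$ whenever $t > x_i$ (all summands non-positive); and if $t = x_i$ then $r(t) = \max_p r(p) \ge r(s)$. Either way $r(t) \ge r(s)$.

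The step I expect to be the main obstacle is disposing of the tie case $r(t) = r(s)$, where lowering the price does not strictly increase revenue and so does not by itself contradict optimality of $M$: the right reading is that $M'$ is then an equally good revenue-maximizer with the desired property, and in the degenerate subcase $\Pr[V_{-i} = v_{-i}] = 0$ the value $p_{-i}(v_{-i})$ does not affect revenue at all, so the conclusion is vacuous there. Everything else --- verifying that $M'$ is a valid mechanism and the telescoping identity for $r$ --- is routine.
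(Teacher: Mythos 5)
Your proof is correct and follows essentially the same route as the paper's: lower player $i$'s critical value at $v_{-i}$, use the hypothesis that no other player is profitable in the instances $(z,v_{-i})$ with $z\geq t$ to reduce everything to a single-bidder monopoly-revenue comparison, and invoke regularity through the identity $r(p+\eps)-r(p)=-f_i(p)\varphi_i(p)$ (the paper performs a single $\eps$-decrement and argues by contradiction with optimality, whereas you telescope all the way down to $t$ in one step). Your explicit handling of the non-strict-improvement tie case is, if anything, slightly more careful than the paper's phrasing.
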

\begin{proof}
Let $F_i, f_i$ be the c.d.f. and p.d.f. of $\mathcal F_i$, respectively. Towards contradiction, let $v_{-i}$, $t$ be such that all the conditions hold but $p_{-i}(v_{-i})>t$. We will show that the mechanism $M'$ with identical payment functions $p'_{-1}, \ldots ,p'_{-n}$ except that $p'_{-i}(v_{-i})=p_{-i}(v_{-i})-\eps$ generates more revenue.
	 
The only difference in the revenue of $M$ and $M'$ is in instances $(z,v_{-i})$, $z\geq t$. In these instances, by the conditions of the lemma, only player $i$ might be profitable in the instance $(z,v_{-i})$ in both mechanisms. Thus, we only need to compare the difference in the revenue in these instances. Denote $v_i=p_{-i}(v_{-i})$. Let $Rev^M(z,v_{-i})$ be the revenue of $M$ in the instance $(z,v_{-i})$.

	\begin{itemize}	
		\item  The revenue of $M$ in these instances is: $$\int_t^1 f_i(z) \cdot Rev^M(z,v_{-i})dz =v_i\cdot (1-F_i(v_i-\eps))$$

		\item The revenue of $M'$ in these instances is: $$\int_t^1 f_i(z) \cdot Rev^{M'}(z,v_{-i})dz =(v_i-\eps)\cdot (1-F_i(v_i-\eps)+f_i(v_i-\eps))$$

%
	\end{itemize}

Therefore, the revenue of $M'$ is at least that of $M$ if: 
\begin{align*}
(v_i-\eps)\cdot (1-F_i(v_i-\eps)+f_i(v_i-\eps)) &\geq v_i\cdot (1-F_i(v_i-\eps))\\
-\eps (1-F_i(v_i-\eps))+(v_i-\eps)f_i(v_i-\eps) &\geq 0 \\
v_i-\eps - \frac {\eps \cdot (1-F_i(v_i-\eps))} {f_i(v_i-\eps)} &\geq 0\\
\end{align*}

I.e., $\vv(v_i-\eps)\geq 0$, which is true since $\mathcal F_i$ is a regular distribution and $v_i> x_i$.
%
%
%
\end{proof}
	
We can now complete the proof:

\begin{claim}\label{claim-truthful-characterization}
Let $M$ be an optimal mechanism with payment functions $p_{-1}, \ldots, p_{-n}$. Suppose that for every player $i$ and $v_{-i}$, $p_{-i}(v_{-i})\geq q_{-i}(v_{-i})$. Consider an instance $(v_1,\ldots, v_n)$. For every $i$ such that $q_{-i}(v_{-i})=x_i$ we have that $p_{-i}(v_{-i})=x_i$. For the other players with $q_{-i}(v_{-i})>x_i$, it holds there is exactly one such player such that $p_{-i}(v_{-i})=q_{-i}(v_{-i})$ and for every other such player $j\neq i$, $p_{-j}(v_{-j})=q_{-j}(v_{-j})+\eps$.
\end{claim}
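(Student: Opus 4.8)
The plan is to assemble the characterization from three facts that are already available: the standing hypothesis $p_{-i}(v_{-i})\ge q_{-i}(v_{-i})$ for all $i$ and $v_{-i}$; Proposition~\ref{prop-less-than-mono-truthful}, which forces $p_{-i}(v_{-i})=x_i$ whenever $q_{-i}(v_{-i})=x_i$; and Lemma~\ref{lem-rev-max-truthful}, which bounds $p_{-i}(v_{-i})$ from above once one knows that no other player is profitable above a threshold. Two elementary facts about truthful mechanisms are also used: in any instance at most one player has strictly positive potential profit, and a revenue-maximizing mechanism does not leave the item unsold when it could sell it at a positive price.

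The first sentence of the claim is exactly Proposition~\ref{prop-less-than-mono-truthful}, so fix an instance $(v_1,\dots,v_n)$ and let $S=\{i:q_{-i}(v_{-i})>x_i\}$; the task reduces to pinning down $p_{-i}(v_{-i})$ for $i\in S$. First I would establish $p_{-i}(v_{-i})\le q_{-i}(v_{-i})+\eps$ for every $i\in S$: take any instance $(z,v_{-i})$ in which $\vv_i(z)$ strictly exceeds $\hv:=\max_{k\ne i}\vv_k(v_k)$ (this holds for $z$ one tick above the threshold $q_{-i}(v_{-i})$, at least generically); there player $i$'s virtual value strictly dominates, and invoking the lower bound $p_{-k}\ge q_{-k}$ in that instance shows every $k\ne i$ has strictly negative potential profit. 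Lemma~\ref{lem-rev-max-truthful}, applied with that threshold as $t$, gives $p_{-i}(v_{-i})\le q_{-i}(v_{-i})+\eps$. Together with the standing lower bound this yields $p_{-i}(v_{-i})\in\{q_{-i}(v_{-i}),\,q_{-i}(v_{-i})+\eps\}$ for every $i\in S$, which already delivers the ``$+\eps$ for the other players'' half of the statement.

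It remains to show that exactly one $i\in S$ attains $p_{-i}(v_{-i})=q_{-i}(v_{-i})$. For ``at most one'', if two distinct $i,j\in S$ both attained equality I would move each of $i$ and $j$ to its own threshold value (keeping the remaining coordinates fixed), obtaining an instance with two simultaneously profitable players, contradicting that a truthful mechanism has at most one profitable player; the inequalities $\vv_i(q_{-i}(v_{-i}))\ge\hv\ge\vv_j(v_j)$ and their symmetric counterpart are what keep both players weakly on top. For ``at least one'', if every $i\in S$ had $p_{-i}(v_{-i})=q_{-i}(v_{-i})+\eps$, pick a player $w$ whose virtual value at its threshold is weakly maximal; since $\vv_w(q_{-w}(v_{-w}))>0$, the instance obtained by setting $w$'s value to $q_{-w}(v_{-w})$ has no profitable player (everyone's potential profit is $\le 0$, and $w$'s is $-\eps$), so the item goes unsold there; but decreasing $p_{-w}(v_{-w})$ by $\eps$ back to $q_{-w}(v_{-w})$ lets the item be sold in that instance at the positive price $q_{-w}(v_{-w})$ while costing only $\eps$ in each instance $(z,v_{-w})$ with $z>q_{-w}(v_{-w})$ --- a net change of $f_w(q_{-w}(v_{-w}))\cdot\vv_w(q_{-w}(v_{-w}))>0$, contradicting optimality of $M$.

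The main obstacle is the pervasive $\pm\eps$ bookkeeping forced by discreteness, together with tie-breaking. Two places demand care. First, in the upper-bound step the ``one tick above the threshold'' must actually push $\vv_i$ strictly past $\hv$; if $\vv_i$ is flat at the level $\hv$ this fails, and one must either rule it out from the precise form of $\vv_i$ or track the induced virtual-value ties --- which is exactly where the convention for $\vv_i^{-1}$, hence for $q_{-i}$, has to be fixed. Second, the ``at most one'' and ``at least one'' arguments compare $p_{-i}$ across instances whose $v_{-i}$-vectors differ in a single coordinate and can create ties, so they must be run along the partial order $\succ$ with a fixed tie-breaking rule, just as in the corollary following Proposition~\ref{prop-main-truthful}. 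I expect this to be the bulk of the work; the rest is routine assembly of the three cited facts.
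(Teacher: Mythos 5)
Your overall architecture matches the paper's: Proposition~\ref{prop-less-than-mono-truthful} for the players with $q_{-i}(v_{-i})=x_i$, Lemma~\ref{lem-rev-max-truthful} with a threshold one tick above $q_{-i}(v_{-i})$ to pin $p_{-i}(v_{-i})$ into $\{q_{-i}(v_{-i}),q_{-i}(v_{-i})+\eps\}$, and local modifications for the rest. Your ``at least one'' argument (pull one payment down by $\eps$ and compute a net gain of $f_w(q_{-w})\cdot\vv_w(q_{-w})>0$) is a legitimate alternative to the paper's, which instead gets ``at least one'' by applying Lemma~\ref{lem-rev-max-truthful} with $t=q_{-i}(v_{-i})$ itself under the hypothesis that every player has $p_{-j}(v_{-j})>q_{-j}(v_{-j})$. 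Your version still needs the check that the item really is unsold in the instance where $w$ sits at its threshold: a player whose value exactly equals his critical price may still be allocated in a truthful mechanism, which would shrink your claimed gain from $q_{-w}(v_{-w})$ to $q_{-w}(v_{-w})-v_k$ and could flip its sign. But the idea is workable.

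The genuine gap is the ``at most one'' step. Moving $i$ and $j$ each to its threshold does not produce ``two simultaneously profitable players,'' for two reasons. First, at the threshold a player's potential profit is exactly $0$, and truthfulness tolerates any number of players whose values equal their critical prices --- the impossibility is only for two players with \emph{strictly positive} potential profit, which your construction never exhibits. Second, changing $v_i$ changes the vector at which $p_{-j}$ is evaluated; the equality $p_{-j}(v_{-j})=q_{-j}(v_{-j})$ you assumed holds only at the original $v_{-j}$, while in the perturbed instance you only know $p_{-j}\geq q_{-j}$ at the new vector, whose $q_{-j}$ may even have increased, making $j$'s profit negative there. So no contradiction is reachable along this route, and none should be: two players at $p=q$ is not inconsistent with truthfulness per se. The paper rules it out for an \emph{optimal} $M$ constructively: it raises one of the two payments to $q+\eps$ and shows the expected revenue does not decrease (and strictly increases whenever values above $v_i+\eps$ have positive mass). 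You need to replace your contradiction with a revenue-comparison argument of that kind.
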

\begin{proof}
Consider some instance $(v_1,\ldots, v_n)$. If there is one player $i$ with $q_{-i}(v_{-i})=x_i$ then $p_{-i}(v_{-i})=x_i$, by Proposition \ref{prop-less-than-mono-truthful}.

From now on we limit our attention to the other players. We start by showing that for at least one player $i$, $p_{-i}(v_{-i})=q_{-i}(v_{-i})$. Suppose that for every player $j$, $p_{-j}(v_{-j})>q_{-j}(v_{-j})$. Fix some player $i$ and let $v^z_{-k}$ the vector obtained from $(z,v_{-i})$ by removing the $k$'th coordinate. Observe that by the conditions of the claim and monotonicity of $q_{-k}$'s, for every $k\neq i$ it holds that $p_{-i}(v^z_{-k})>q_{-i}(v_{-i})$. Thus by Lemma \ref{lem-rev-max}, $p_{-i}(v_{-i})=q_{-i}(v_{-i})$.

By a similar argument, for every player $i$ and $v_{-i}$, $p_{-i}(v_{-i})\in \{q_{-i}(v_{-i}),q_{-i}(v_{-i})+\eps\}$. Suppose that there are at least two players $i,j$ such that $p_{-i}(v_{-i})=q_{-i}(v_{-i})$ and $p_{-j}(v_{-j})=q_{-j}(v_{-j})$. Notice that the profit of all such players is $0$ since by definition of the $q_{-i}$'s the players have the same virtual value. Without loss of generality, suppose that $j$ is such a player with a maximal value. We may assume that the mechanism allocates player $j$ the item and charges $v_j$. 

Consider $p_{-i}(v_{-i})=q_{-i}(v_{-i})+\eps$. Note that this change does not affect the feasibility, as it is always feasible to increase the critical values. It suffices to analyze the expected change in revenue given that the values of all players except $i$ are according to $v_{-i}$. For $z<v_i$ player $i$ is not profitable before and after the change. The revenue in the instance $(v_1,\ldots, v_n)$ remains the same since the profit of player $i$ is negative and player $j$ still has a non-negative profit. When $z>v_i$, player $i$ either has a zero profit (when his value is $v_i+\eps$, we can allocate to either $i$ or $j$, depending on whether $v_i+\eps\geq v_j$) or a positive profit. Note that if $i$ has positive profit after the change he had positive profit before the change. In that case, the payment of $i$ has increased by $\eps$. We conclude that after the change the mechanism generates in every instance no less revenue. 
\end{proof}

\subsection{Proof of Claim \ref{clm:smaller-price-truthful}}\label{subsec-proof-smaller-truthful}
The setting of the claim is illustrated in Figure \ref{fig-smaller-price}. We analyze the difference in the revenue of $M$ and in the revenue of $M'$. Suppose first that there is some player $j$ with positive profit in the instance $(v_i, v_{-i})$ (i.e., $v_j > p_{-j}(v_{-j})$). Let $M'$ be the mechanism with payment functions $p'_{-1}, \ldots, p'_{-n}$ which are identical to the payment functions $p_{-1},\ldots, ,p_{-n}$ of $M$ except that $p'_{-i}(v_{-i})= p_{-i}(v_{-i})+\eps=v_i+\eps$. Note that since the payment functions of the mechanisms are almost identical, it suffices to fix the values of all players except player $i$ according to $v_{-i}$ and analyze the difference in the expected revenue, where expectation is taken over the value of player $i$. 

Consider instances of the form $(z,v_{-i})$. If $z<v_i$, $z$ is less than the payment of $i$ in both $M$ and $M'$. Thus, the revenue of the mechanism depends only on the other payment functions which have not changed and is therefore identical. In the instance $(v_i,v_{-i})$ player $j$'s profit is positive in both $M$ and $M'$, thus the allocation and payment are the same. When $z>v_i$, we observe that player $i$ has a positive profit in $M$ and thus he is the player that is allocated. Hence, in every such instance $(z,v_{-i})$ all players except player $i$ have non-positive profit. Note that in all these instances player $i$ has either zero profit (when $z=v_i+\eps$) or positive profit (when $z>v_i+\eps$) since $p'_{-i}(v_{-i})=v_i+\eps$. We may therefore assume that player $i$ receives the item and pays $\eps$ more. We conclude that in this case the expected revenue of $M'$ is bigger than that of $M$.

The other case that is when there is some player $j$ ($j\neq i$) whose value is his critical price (i.e., $v_j = p_{-j}(v_{-j})$). If there are several such players (i.e., multiple players that their critical price equal their value), we let $j$ be the player with the highest value among those. The analysis of this case is similar to the previous case. We first consider the case where $v_j>v_i$. Note that to maximize revenue, $M$ will allocate player $j$ the item (and charge $v_j$) in the instance $(v_i,v_{-i})$.


Let $M'$ be the mechanism that was defined above. By tie-breaking (if there are several players with zero profit), we may assume that $M'$ allocates the item to player $j$ and charges $v_j$ in the instance $(v_i,v_{-i})$, so the revenue of $M$ and $M'$ is the same in the instance $(v_i,v_{-i})$. 

Consider an instance $(z,v_{-i})$. If $z<v_i$, $z$ is less than $i$'s critical price (i.e., payment) in both $M$ and $M'$, the revenue of the mechanism depends only on the other payment functions which have not changed and thus is identical. When $z>v_i$, we observe that player $i$ has a positive profit in $M$ and thus gets the item. Thus, in every such instance $(z,v_{-i})$ all players except player $i$ have non-positive profit. Note that in all these instances player $i$ has either zero profit (when $z=v_i+\eps$) or positive profit (when $z>v_i+\eps$) since $p'_{-i}(v_{-i})=v_i+\eps$. Thus we may assume that player $i$ receives the item and pays $\eps$ more. The expected revenue of $M'$ is bigger than that of $M$.

When $v_j\leq v_i$, the proof is almost identical except that we define $M'$ to have the same payment functions as $M$ except $p'_{-j}(v_{-j})= p_{-j}(v_{-j})+\eps$, fix $v_{-j}$ and consider instances of the form $(z,v_{-j})$.

\subsection{Proof of Claim \ref{clm:higher-price-truthful}}\label{subsec-proof-higher-truthful}


The setting of the claim is illustrated in Figure \ref{fig-higher-price}. By the assumptions of the claim, $p_{-j}(v_{-j})>v_j$. We first claim that $p_{-j}(v_{-j})= v_j+\eps$. To prove this, observe that for every $z>v_j$ all players $k\neq j$ have negative profit, because for every player $k\neq j$ it holds that $v^z_{-k} \succ v_{-i}$, where $v^z_{-k}$ denotes the vector obtained from $(z, v_{-i})$ by removing the $k$'th coordinate. Hence by the conditions of Proposition \ref{prop-main-truthful} we have that $p_{-k}(v^z_{-k})> q_{-k}(v^z_{-k})> q_{-k}(v_{-k}) \geq  v_k$. We can apply Lemma \ref{lem-rev-max-truthful} and conclude that $p_{-j}(v_{-j})= v_j+\eps$. The heart of the proof is the following lemma:

	\begin{lemma} \label{lem-sim-change-truthful}
		Let $M$ be a mechanism for players that their values are independently sampled from regular distributions $\mathcal F_1,\ldots, \mathcal F_n$ with payment functions $p_{-1},\ldots, p_{-n}$. Consider an instance $(v_1,\ldots, v_n)$. Let $j$ be the player with the highest virtual value in $v_{-i}$. Suppose that:
		\begin{enumerate}
			\item $p_{-i}(v_{-i})=v_i<q_{-i}(v_{-i})$.
			\item $p_{-j}(v_{-j})=v_j+\eps$.
			\item For every $z>v_j$, the only player that might have a non-negative profit in the instance $(z,v_{-j})$ is player $j$.
		\end{enumerate}
		Then, the mechanism $M'$ which has payment functions $p'_{-1},\ldots,  p'_{-n}$ that are identical to $p_{-1}, \ldots, p_{-n}$ except that $p'_{-i}(v_{-i})=v_i+\eps$ and $p'_{-j}(v_{-j})=v_j$ generates at least as much revenue as $M$.
	\end{lemma}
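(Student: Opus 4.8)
The plan is to prove $\mathrm{rev}(M')\ge\mathrm{rev}(M)$ by an instance‑by‑instance comparison. Relabel so that $i=1$, $j=2$, and write $B=(v_1,\dots,v_n)$ for the base instance. Since $M'$ and $M$ use the same payment functions except for the two values $p_{-1}(v_{-1})$ (raised from $v_1$ to $v_1+\eps$) and $p_{-2}(v_{-2})$ (lowered from $v_2+\eps$ to $v_2$), and since potential payments are value‑independent, the only instances on which the two mechanisms can behave differently are $(z,v_2,v_3,\dots,v_n)$ with $z\in\mathrm{supp}(\mathcal F_1)$ (where $p_{-1}(v_{-1})$ is the potential payment of player~$1$) and $(v_1,z,v_3,\dots,v_n)$ with $z\in\mathrm{supp}(\mathcal F_2)$; these two families overlap only in $B$. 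Raising $p_{-1}(v_{-1})$ only shrinks player~$1$'s winning set, and by condition~$3$ lowering $p_{-2}(v_{-2})$ never produces a second player with positive potential profit, so $M'$ is again a truthful mechanism; I would fix its tie‑breaking to be revenue‑maximizing.

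Next I would go through the affected instances in three groups. Group (i): instances $(z,v_2,\dots,v_n)$. For $z<v_1$ player~$1$ is unprofitable in both mechanisms and no other potential payment changed, so the revenue is unchanged. For $z>v_1$, player~$1$ has strictly positive potential profit in $M$, hence by truthfulness he is the unique profitable player and wins at price $p_{-1}(v_{-1})=v_1$; in $M'$ his potential payment is $v_1+\eps\le z$ while all others are unchanged and non‑positive, so he still wins (break ties for him), now at price $v_1+\eps$ — the revenue rises by $\eps$. Group (ii): instances $(v_1,z,\dots,v_n)$. For $z<v_2$ nothing changes; for $z>v_2$, condition~$3$ says player~$2$ is the only possibly‑profitable player, so in $M$ he wins at price $v_2+\eps$ (or the item is unsold) and in $M'$ he wins at price $v_2$ — the revenue drops by at most $\eps$. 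Group (iii): the base instance $B$. By the hypothesis inherited from Claim~\ref{clm:higher-price-truthful} that $p_{-k}(v_{-k})>v_k$ for all $k\neq 1$, every player other than~$1$ is strictly unprofitable at $B$ in $M$ while player~$1$ has potential profit $0$, so $\mathrm{rev}_M(B)\le v_1$; in $M'$ player~$1$ becomes strictly unprofitable and player~$2$ acquires potential profit $0$ with all others still negative, so allocating to player~$2$ gives $\mathrm{rev}_{M'}(B)=v_2$ — a change of at least $v_2-v_1$.

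Summing the per‑instance bounds against the product‑of‑PDF weights and factoring out $\Pr[B]=\prod_k f_k(v_k)$, the total change is at least
\[
\Pr[B]\left(\eps\cdot\frac{1-F_1(v_1)}{f_1(v_1)}-\eps\cdot\frac{1-F_2(v_2)}{f_2(v_2)}+v_2-v_1\right)=\Pr[B]\bigl(\vv_2(v_2)-\vv_1(v_1)\bigr),
\]
using $\eps\frac{1-F_k(v_k)}{f_k(v_k)}=v_k-\vv_k(v_k)$. It remains to note that $\vv_2(v_2)\ge\vv_1(v_1)$: since $j=2$ has the highest virtual value among $v_{-1}$, in the regime $\vv_1^{-1}(\hv)\ge x_1$ of Proposition~\ref{prop-main-truthful} we have $q_{-1}(v_{-1})=\vv_1^{-1}(\vv_2(v_2))$, so condition~$1$ gives $v_1<\vv_1^{-1}(\vv_2(v_2))$, and monotonicity of $\vv_1$ (regularity of $\mathcal F_1$) yields $\vv_1(v_1)\le\vv_2(v_2)$. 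Hence $\mathrm{rev}(M')\ge\mathrm{rev}(M)$.

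The delicate point is entirely group (iii): this is the only place where the mechanism genuinely loses revenue — it trades player~$1$'s payment $v_1$ at $B$ for player~$2$'s possibly smaller payment $v_2$ — and this loss must be exactly repaid by the two first‑order ($\eps$‑sized) effects of the simultaneous manipulations. The hard part will be getting the bookkeeping right (who wins at which price in $M$ versus $M'$, and that $M'$ remains truthful and feasible under both changes) so that the net condition collapses precisely to $\vv_2(v_2)\ge\vv_1(v_1)$, i.e., to regularity of $\mathcal F_1$ together with the structural fact that $j$ carries the top virtual value in $v_{-i}$; everything else is routine.
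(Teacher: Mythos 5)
Your proof is correct and takes essentially the same route as the paper's: it compares the two mechanisms only on the two affected families of instances plus the base instance, records a gain of $\eps$ on the $(z,v_{-i})$ family, a loss of at most $\eps$ on the $(z,v_{-j})$ family, and a gain of $v_j-v_i$ at the base instance, and then collapses the density-weighted sum to the condition $\vv_j(v_j)\ge\vv_i(v_i)$. Your explicit appeal to the hypothesis of Claim~\ref{clm:higher-price-truthful} at the base instance and your justification of $\vv_j(v_j)\ge\vv_i(v_i)$ via $q_{-i}$ merely spell out what the paper's proof leaves implicit.
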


Before proving the lemma notice that all of its conditions hold. Thus, by Lemma \ref{lem-sim-change-truthful} we have that the revenue of $M'$ is at least as large as that of $M$.

\begin{proof}(of Lemma \ref{lem-sim-change-truthful})
For each distribution $\mathcal F_k$, let $F_k,f_k$ be the c.d.f. and p.d.f of $\mathcal F_k$, respectively. The only instances that the revenue of $M$ and $M'$ might differ in are those in which the values of the players are according to $v_{-i}$ or according to $v_{-j}$. The expected revenue of $M$ in those instances is:
	\begin{align} \label{eq-swap-p-truthful}
	\prod_{k\neq i}f_k(v_{k})&\int_{0}^{1} \delta(z\neq v_i) \cdot f_i(z) \cdot  Rev^M(z,v_{-i})dz + \\
	&\prod_{k\neq j}f_k(v_{k})\int_{0}^{1} \delta(z\neq v_j) \cdot f_j(z) \cdot Rev^M(z,v_{-j})dz + \prod_{k}f_k(v_{k})\cdot v_i\nonumber
	\end{align}
	where by $Rev^M(z,v_{-i}) $ we mean the revenue of $M$ in the instance $(z,v_{-i})$. Recall that $\delta(\cdot)$ is an indicator function that gets a boolean condition and returns $1$ if the condition is true and $0$ otherwise. Similarly, the expected revenue of $M'$ is:
	\begin{align}  \label{eq-swap-p'-truthful}
		\prod_{k\neq i}f_k(v_{k})&\int_{0}^{1} \delta(z\neq v_i) \cdot f_i(z) \cdot  Rev^{M'}(z,v_{-i})dz +\\ 
		&\prod_{k\neq j}f_k(v_{k})\int_{0}^{1} \delta(z\neq v_j) \cdot f_j(z) \cdot Rev^{M'}(z,v_{-j})dz + \prod_{k} f_k(v_{k})\cdot v_j\nonumber
	\end{align}
	We would like to show that the expected revenue of $M'$ is at least as high as that of $M$  ($(\ref{eq-swap-p-truthful}) \leq (\ref{eq-swap-p'-truthful})$). By rearranging we get that it is sufficient to show that:
	\begin{align} \label{eq-3-truthful}
	f_j(v_{j})&\int_{0}^{1} \delta(z\neq v_i) \cdot f_i(z) \cdot \Big(  Rev^M(z,v_{-i})-Rev^{M'}(z,v_{-i}) \Big)dz  \leq \\  
	&f_i(v_{i})\int_{0}^{1} \delta(z\neq v_j) \cdot f_j(z) \cdot \Big( Rev^{M'}(z,v_{-j}) - Rev^M(z,v_{-j})\Big)dz 
	+ f_j(v_j)\cdot f_i(v_i) \cdot (v_j-v_i) \nonumber
	\end{align}
		Notice that	in every instance ($z, v_{-i})$, $z>v_i$, only player $i$ is allocated in $M$ since his profit is positive. In these instances, the revenue of $M'$ will be bigger by $\eps$ than that of $M$. When $z<v_i$ player $i$ is not profitable in both mechanisms and thus the revenue of $M$ and $M'$ is the same. It therefore holds that: 
	\begin{align*}
	f_j(v_{j})\int_{0}^{1} \delta(z\neq v_i)  \cdot f_i(z) \cdot  \Big(  Rev^M(z,v_{-i})-Rev^{M'}(z,v_{-i}) \Big)dz 
	=	f_j(v_j) \cdot \eps \cdot (1-F_i(v_i))
	\end{align*}
	We now consider instances of the form $(z,v_{-j})$. For every $z> v_j$ we have that the only profitable player in these instances is player $j$. Since $p_{-j}(v_{-j})-p'_{-j}(v_{-j})=\eps$, $M'$ loses an extra revenue of $\eps$ comparing to $M$. When $z<v_j$, player $j$ is not profitable in both mechanisms and thus their revenue is the same. We therefore have that:
	\begin{align*}
	f_i(v_i)\int_{0}^{1} \delta(z\neq v_i) \Big( Rev^{M'}(z,v_{-j}) - Rev^M(z,v_{-j})\Big)dz   =  f_i(v_i) \cdot \eps  \cdot(1-F_j(v_j))
	\end{align*}
	Putting this together, we get that  a sufficient condition for inequality (\ref{eq-3-truthful}) to hold is:
	\begin{align*}
	f_j(v_j) \cdot \eps \cdot (1-F_i(v_i)) \leq f_i(v_i) \cdot \eps  \cdot(1-F_j(v_j)) + f_j(v_j)\cdot f_i(v_i) \cdot (v_j-v_i)
	\end{align*}
	By dividing by $ f_i(v_i) \cdot  f_j(v_j)$, and rearranging we get that:
	\begin{align*}
	v_j-\eps\cdot \dfrac{1-F_j(v_j)}{f_j(v_j)}  \geq v_i-\eps\cdot \dfrac{1-F_i(v_i)}{f_i(v_i)}
	\end{align*}
	I.e., a sufficient condition for inequality (\ref{eq-3-truthful}) is  $\vv_j(v_j) \geq \vv_i(v_i)$, which holds by assumption.
\end{proof}

\section{Missing Proofs from Section \ref{sec-properties}} \label{app-sec-3}

\begin{proposition} \label{prop-lattice}
	Consider an allocation function $f$ for two players. Let $M_1=(f,p_1)$ and $M_2=(f,p_2)$ be two $1$-moral mechanisms that implement the same allocation function but with different payment functions. Define new payment functions $p_{max}$ and $p_{min}$ as follows:
	\begin{align*}
	p_{max}(v_1,\ldots, v_n)_i &= \max\{p_1(v_1,\ldots, v_n)_i, p_2(v_1,\ldots, v_n)_i\} \\
	p_{min}(v_1,\ldots, v_n)_i &= \min\{p_1(v_1,\ldots, v_n)_i, p_2(v_1,\ldots, v_n)_i\}
	\end{align*} 
	Then, the mechanisms $M_{max}=(f,p_{max})$ and $M_{min}=(f,p_{min})$ are also moral.
\end{proposition}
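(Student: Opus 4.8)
The plan is to route everything through the characterization of $\alpha$-moral mechanisms as $\alpha$-profit maximizers (proved just above) together with payment independence. Since $M_1$ and $M_2$ are $1$-moral they are payment independent (Observation~\ref{obs-payment-independent}), so each has well-defined potential payments $p^{1}_{-i}(v_{-i})$ and $p^{2}_{-i}(v_{-i})$; and since they realize the \emph{same} allocation function $f$, player $i$ can win against $v_{-i}$ in $M_1$ exactly when he can against $v_{-i}$ in $M_2$, so $p^{1}_{-i}(v_{-i})$ and $p^{2}_{-i}(v_{-i})$ are finite (or both equal $\infty$) together. The first step is the routine observation that $M_{max}=(f,p_{max})$ is again payment independent, with potential payment $p^{\max}_{-i}(v_{-i})=\max\{p^{1}_{-i}(v_{-i}),p^{2}_{-i}(v_{-i})\}$, and symmetrically $M_{min}=(f,p_{min})$ is payment independent with $p^{\min}_{-i}(v_{-i})=\min\{p^{1}_{-i}(v_{-i}),p^{2}_{-i}(v_{-i})\}$ (this uses that both mechanisms have the same ``can-win'' predicate, which is what keeps the $\infty$ bookkeeping consistent). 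Individual rationality and no positive transfers are immediate: a loser pays $0$ in $M_1$ and $M_2$, hence in $M_{max}$ and $M_{min}$, and a winner in $M_{max}$ (resp.\ $M_{min}$) pays the $\max$ (resp.\ $\min$) of two non-negative numbers, each bounded by his value.

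Given this, by the equivalence it remains only to verify that $M_{max}$ and $M_{min}$ are $1$-profit maximizers, i.e.\ that in every instance the item goes to a player of maximal, non-negative potential profit, and is allocated whenever some potential profit is positive. Fix an instance $v=(v_1,\dots,v_n)$ and abbreviate $\pi^{m}_{k}:=v_k-p^{m}_{-k}(v_{-k})$ for $m\in\{1,2\}$; then the potential profit of player $k$ in $M_{max}$ equals $\min\{\pi^{1}_{k},\pi^{2}_{k}\}$ and in $M_{min}$ equals $\max\{\pi^{1}_{k},\pi^{2}_{k}\}$. If $f(v)=0$, then because $M_1,M_2$ are profit maximizers we have $\pi^{1}_{k}\le 0$ and $\pi^{2}_{k}\le 0$ for every $k$, so every potential profit in $M_{max}$ and in $M_{min}$ is $\le 0$, and ``not allocating'' is a valid decision. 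If $f(v)=i$, then $M_1$ being a profit maximizer gives $\pi^{1}_{i}\ge 0$ and $\pi^{1}_{k}\le\pi^{1}_{i}$ for all $k$, and likewise with superscript $2$.

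The crux is then the pair of elementary monotonicity facts that make \emph{both} the coordinatewise max and the coordinatewise min work: from $\pi^{1}_{k}\le\pi^{1}_{i}$ and $\pi^{2}_{k}\le\pi^{2}_{i}$ one gets $\min\{\pi^{1}_{k},\pi^{2}_{k}\}\le\min\{\pi^{1}_{i},\pi^{2}_{i}\}$ as well as $\max\{\pi^{1}_{k},\pi^{2}_{k}\}\le\max\{\pi^{1}_{i},\pi^{2}_{i}\}$, while $\min\{\pi^{1}_{i},\pi^{2}_{i}\}\ge 0$ and $\max\{\pi^{1}_{i},\pi^{2}_{i}\}\ge 0$. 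Hence in $M_{max}$ (resp.\ $M_{min}$) player $i$ still has a non-negative potential profit that weakly dominates everyone else's, so allocating the item to $i$, exactly as $f$ prescribes, is consistent with being a profit maximizer. This shows $M_{max}$ and $M_{min}$ are $1$-profit maximizers, hence $1$-moral. I expect no real obstacle here: the only things to be careful about are the $\infty$ convention for potential payments (handled by the shared ``can-win'' predicate) and the fact that for $\alpha=1$ the profit-maximizer condition is precisely ``winner has maximal non-negative potential profit'' with nothing extra to check; the rest is the two one-line $\min$/$\max$ inequalities above.
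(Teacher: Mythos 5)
Your proof is correct and follows essentially the same route as the paper: both reduce to the profit-maximizer characterization and check that the $f$-winner's potential profit remains non-negative and maximal under the combined payments. The only cosmetic difference is that you close the argument with the monotonicity of $\min$ and $\max$ (from $\pi^{1}_{k}\le\pi^{1}_{i}$ and $\pi^{2}_{k}\le\pi^{2}_{i}$), which cleanly packages the case analysis the paper performs by splitting on which of $p^{1}_{-j},p^{2}_{-j}$ is larger.
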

\begin{proof}
	We will use the equivalence of moral mechanisms and profit maximizers. Thus, we need to show that the profit maximizers with the payment functions $p_{max}$ and $p_{min}$ also implement $f$. Consider an instance $(v_1,\ldots, v_n)$. We distinguish between the case where the item is allocated ($f(v_1,\ldots, v_n)=i$, for some $i\in N$) and the case where it is not ($f(v_1,\ldots, v_n)=0$). Let $p^1_{-i}(v_{-i}), p_{-i}^2(v_{-i})$ be the potential payments of player $i$ (possibly $\infty)$ given $v_{-i}$ in $M_1,M_2$, respectively. If $f(v_1,\ldots, v_n)=0$ then the potential profit of every player $i$ is non-positive:
	\begin{align*}
	v_i - p^1_{-i}(v_{-i}) \leq 0, ~&~ v_i - p^2_{-i}(v_{-i}) \leq 0
	\end{align*}
	This clearly implies that:
	\begin{align*}
	v_i - p_{-i}^{max}(v_{-i}) \leq 0, ~&~ v_i - p_{-i}^{min}(v_{-i}) \leq 0
	\end{align*}
	Therefore, the potential profit of each player is non-positive in the mechanisms $M_{max},M_{min}$ and the item is not allocated.
	
	Now, consider the case where $f(v_1,\ldots,v_n)=i$, for some $i\in N$. Then, for every $j\neq i$:
	\begin{align*}
	v_i - p^1_{-i}(v_{-i}) \geq 0;~&~~v_i - p^1_{-i}(v_{-i}) \geq v_{j} - p_{-j}^1(v_{-j}) \\
	v_i - p^2_{-i}(v_{-i}) \geq 0;~&~~v_i - p^2_{-i}(v_{-i}) \geq v_{j} - p^2_{-j}(v_{-j})
	\end{align*}
	Note the potential profit of player $i$ is non-negative under both $p_{min}$ and $p_{max}$. It remains to show that this potential profit is at least as large as the potential profit of the other players.
	
	Without loss of generality, suppose that $p^1_{-i}(v_{-i}) \geq p^2_{-i}(v_{-i})$. Notice that for every $j$ for which $p_{-j}^1(v_{-j})\geq p_{-j}^2(v_{-j})$ 
	the potential payments of $M_{max}$ are the same as the potential payments of $M_1$ and the potential payments of $M_{min}$ are the same as the potential payments of $M_2$. 
	We are left with players $j$ for which $p_{-j}^1(v_{-j})<p_{-j}^2(v_{-j})$. In this case the potential payment of $i$ is identical in $M_1$ and $M_{max}$ and the potential payment of $j$ is the same in $M_2$ and $M_{max}$. Now we have that:
	\begin{align*}
	v_i - p^1_{-i}(v_{-i}) \geq v_j - p_{-j}^1(v_{-j}) > v_{j} - p_{-j}^2(v_{-j})
	\end{align*}
	%
	as required. A similar argument shows that $M_{min}$ is also moral.
\end{proof}

\begin{claim} \label{clm-no-char}
	Let $f$ be an allocation function for two players that has the following properties:
	\begin{itemize}
		\item When $v_1+ v_2\geq 1$ and $v_1,v_2\leq 1$, the item is allocated (i.e., $f(v_1,v_2) \in \{1,2\}$).
		\item When $v_1,v_2 >1$, the item is not allocated.
		\item When $v_1>1$, $v_2\leq 1$ the item is allocated to player $1$. Similarly, when $v_2>1$, $v_1\leq 1$ the item is allocated to player $2$.
	\end{itemize}
	There exists a payment function  that implements $f$ as a $1$-moral mechanism.
\end{claim}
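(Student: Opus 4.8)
The plan is to invoke the equivalence between $1$-moral mechanisms and $1$-profit maximizers established above, so that it suffices to exhibit a mechanism $M=(f,p)$ whose induced profit maximizer has $f$ as its allocation rule. The natural candidate is a ``budget line'' pricing: the payment of a winning player $i$ should be $1-v_{-i}$ whenever $v_{-i}\le 1$, and player $i$ should be unable to win against any report $v_{-i}>1$. Concretely, I would define $M$ by $p(v_1,v_2)_i=1-v_{-i}$ if $f(v_1,v_2)=i$ and $p(v_1,v_2)_i=0$ otherwise, and then check that this is a $1$-profit maximizer realizing $f$.

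First I would verify the basic feasibility properties. By the three hypotheses on $f$, player $i$ is allocated the item only when $v_{-i}\le 1$: if $v_{-i}>1$ then either $v_i\le 1$, in which case the other player wins, or $v_i>1$, in which case nothing is allocated. Hence every charge $1-v_{-i}$ is nonnegative (no positive transfers), and the mechanism is trivially payment independent since the price of a winning player does not depend on his own report. For individual rationality, observe that whenever player $i$ wins we also have $v_i+v_{-i}\ge 1$ --- immediate when $v_i>1$, and part of the hypothesis when $v_1,v_2\le 1$ --- so the charge $1-v_{-i}$ is at most $v_i$.

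Next I would check that $M$ is a $1$-profit maximizer by a case analysis on potential profits. Since player $i$ does win against any report $v_{-i}\le 1$ for a suitable own-value (for instance, any $v_i>1$), the potential payment of player $i$ against such a report equals $1-v_{-i}$; and since $f$ never allocates to player $i$ against a report $v_{-i}>1$, the potential payment there is $\infty$. Therefore the potential profit of player $i$ is $v_1+v_2-1$ when $v_{-i}\le 1$ and $-\infty$ when $v_{-i}>1$. Now: if $v_1,v_2>1$ both potential profits are $-\infty$, so nothing is allocated, matching $f$; if exactly one of the values exceeds $1$, that player's potential profit is $v_1+v_2-1>0$ while the other's is $-\infty$, so the item goes to the high player, matching $f$; and if $v_1,v_2\le 1$ the two potential profits are both equal to $v_1+v_2-1$, which is negative precisely when $v_1+v_2<1$ (so nothing is allocated, again consistent with the hypotheses) and nonnegative when $v_1+v_2\ge 1$, in which case \emph{both} players attain the common maximal nonnegative potential profit, so the profit-maximizer rule permits allocating to either one --- in particular to $f(v_1,v_2)\in\{1,2\}$ --- and, since the common value is strictly positive when $v_1+v_2>1$, the rule does force an allocation exactly where $f$ makes one. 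Hence the allocation of $M$ coincides with $f$, the $\alpha=1$ inequality holds everywhere (a loser's potential profit is either $-\infty$ or equal to the winner's), and $M$ is a $1$-moral mechanism.

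The feasibility verifications are routine; the one point that deserves care is the tie-breaking step in the contested triangle $\{v_1,v_2\le 1,\ v_1+v_2\ge 1\}$, where all players share the same potential profit and the profit-maximizer definition leaves the allocation free. This is exactly the slack that lets $f$ be essentially arbitrary there, and it is the heart of why no monotonicity-style characterization is available for $1$-moral mechanisms. (If one wished to allow $f$ to allocate also in the region $v_1,v_2\le 1$, $v_1+v_2<1$, this would be the place where the construction would need to be revisited; under the reading in which the item is not allocated outside the three listed regimes, the argument above goes through verbatim.)
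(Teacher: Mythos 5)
Your proposal is correct and follows essentially the same route as the paper: the same potential-payment scheme ($1-v_{-i}$ when the other's report is at most $1$, $\infty$ otherwise) combined with the equivalence of $1$-moral mechanisms and $1$-profit maximizers, plus the same case analysis of potential profits. The extra verifications (individual rationality, no positive transfers) and the remark about the region $v_1+v_2<1$ are fine but do not change the argument.
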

\begin{proof}
	We use the following payment function: if $v_i\leq 1$ the price that player $i$ presents to the other player (i.e., $p_{-i}$) is $1-v_i$. If $v_i>1$ the price that player $i$ presents is $\infty$. 
	
	We now show that the profit maximizer that uses this payment function implements $f$. When $v_1+ v_2\geq 1$ and $v_1,v_2\leq 1$, the profit of player $1$ is $v_1-(1-v_2)=v_1+v_2-1$ and the profit of player $2$ is $v_2-(1-v_1)=v_1+v_2-1$. Note that the profits of the players are identical and non-negative. Thus, the profit maximizer can allocate the item to either player $1$ or to player $2$ with no constraints.
	
	If $v_i>1$ the item is obviously not allocated to the other player, since the price is $\infty$. However, if the value of the other player is $v_{j}\leq 1$, then the price presented to player $i$ is $1-v_{j}\leq 1$, hence the profit of player $i$ is positive and player $i$ will be allocated the item. 
\end{proof}

When $\alpha < 1$ we are still far from having a characterization of the allocation functions of $\alpha$-moral mechanisms, but we do have a simple tool to rule out the implementability of some functions:

\begin{claim} \label{clm-rule}
	Let $M=(f,p)$ be an $\alpha$-moral mechanism for some $\alpha<1$. Let $v_{-i,-j}$ be a vector that specifies the values of all players except $i$ and $j$. Then, for every $v_i,v'_i,v_j,v'_j$ it cannot be that all the following equalities simultaneously hold:
	\begin{align*}
	f(v_i,v_j,v_{-i,-j})=i,~f(v'_i,v'_j,v_{-i,-j})=i,~f(v'_i,v_j,v_{-i,-j})=j,~f(v_i,v'_j,v_{-i,-j})=j
	\end{align*}
\end{claim}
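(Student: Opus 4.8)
The plan is to argue by contradiction: suppose all four equalities hold, and extract a profitable deviation for one of the players $i$ or $j$ that violates the $\alpha$-moral condition when $\alpha<1$. Recall that by Observation \ref{obs-payment-independent} every $\alpha$-moral mechanism is payment independent, so relative to the fixed background $v_{-i,-j}$ each of the players $i$ and $j$ faces a potential payment that depends only on the \emph{other} player's report (and on $v_{-i,-j}$, which is fixed throughout). Write $a = p_{-i}(v_j, v_{-i,-j})$ for the price player $i$ faces when $j$ reports $v_j$, and $a' = p_{-i}(v'_j, v_{-i,-j})$ when $j$ reports $v'_j$; symmetrically write $b = p_{-j}(v_i, v_{-i,-j})$ and $b' = p_{-j}(v'_i, v_{-i,-j})$ for the prices player $j$ faces.

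The four allocation equalities translate into inequalities on potential profits. From $f(v_i,v_j,\cdot)=i$ we get, since the mechanism is a profit maximizer (equivalently $\alpha$-moral), $v_i - a \geq 0$ and $v_i - a \geq \tfrac{1}{\alpha}(v_j - b)$ — or more carefully, the $\alpha$-profit-maximizer condition gives $v_j - b \leq \alpha(v_i - a)$. Likewise $f(v'_i,v'_j,\cdot)=i$ gives $v'_j - b' \leq \alpha(v'_i - a')$ and $v'_i - a' \geq 0$; $f(v'_i,v_j,\cdot)=j$ gives $v_i' - a \leq \alpha(v_j - b')$ and $v_j - b' \geq 0$; and $f(v_i,v'_j,\cdot)=j$ gives $v_i - a' \leq \alpha(v'_j - b)$ and $v'_j - b \geq 0$. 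The hard part — and the main obstacle — is to combine these six-or-so inequalities to force $\alpha \geq 1$, contradicting $\alpha<1$. The natural move is to add the two "cross" inequalities $v'_i - a \leq \alpha(v_j - b')$ and $v_i - a' \leq \alpha(v'_j - b)$ and compare with the sum of the two "diagonal" ones, $v_j - b \leq \alpha(v_i - a)$ and $v'_j - b' \leq \alpha(v'_i - a')$; after rearranging, the $a,a',b,b'$ terms should telescope and one is left with an inequality of the form $(\text{something}) \leq \alpha \cdot (\text{the same something})$ with that something strictly positive, forcing $\alpha \geq 1$.

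Concretely, I expect that summing appropriately yields $(v_i - v'_i) + (\text{price differences}) \leq \alpha\big((v_i - v'_i) + (\text{same price differences})\big)$ up to sign bookkeeping, so one needs the common quantity to be nonzero; if it happens to be zero, one of the non-negativity constraints (e.g. $v_i - a \geq 0$ together with one of the strict-or-equal profit comparisons) should be invoked to rule that out, or else it means some player is indifferent and a tie-breaking argument applies. The cleanest packaging is probably: from the diagonal allocations, $v_i - a$ and $v'_i - a'$ are both $\geq$ the corresponding $j$-profits divided by $\alpha$; from the anti-diagonal allocations the $j$-profits $v_j - b'$ and $v'_j - b$ dominate $v'_i - a$ and $v_i - a'$ respectively divided by... — the right grouping will make the four payment terms cancel in pairs ($a$ appears with opposite signs in the $(v_i,v_j)$ and $(v'_i,v_j)$ instances, etc.), leaving $v_i + v'_i \leq \alpha(v_i + v'_i) + (\text{leftover})$ or similar, and the leftover being controlled by the non-negativity of the losing players' potential profits. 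Once the payments are eliminated this is a short arithmetic contradiction with $\alpha < 1$, so I would structure the write-up as: (1) set up the four prices using payment independence, (2) write down the six inequalities from the $\alpha$-profit-maximizer characterization, (3) add them in the telescoping combination, (4) conclude $\alpha \geq 1$.
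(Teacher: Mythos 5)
Your plan is essentially the paper's proof: the paper writes down the four incentive inequalities implied by $\alpha$-morality with $\alpha<1$ (in each instance the losing player's potential profit is strictly below the winner's, using payment independence to define the four prices) and sums them so that the payment terms cancel identically, yielding $0<0$. Your $\alpha$-weighted version, summing to $S\le\alpha S$ with $S\ge 0$ the total of the four winners' potential profits, is the same telescoping cancellation; your flagged $S=0$ edge case (all potential profits zero, which forces $v_i=v'_i$, $v_j=v'_j$ and hence a contradiction with $f$ being a function) is the one degenerate situation the paper silently absorbs by asserting strict inequalities.
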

\begin{proof}
	Since $M$ is $\alpha$-moral with $\alpha<1$ the following set of inequalities should hold:
	\begin{itemize}
		\item In the instance $(v_i,v_j,v_{-i,-j})$ player $j$ prefers truth-telling: $v_j-p_{-j}(v_i,v_{-i,-j}) <  v_i-p_{-i}(v_j,v_{-i,-j})$.
		\item In the instance $(v_i',v'_j,v_{-i,-j})$ player $j$ prefers truth-telling: $v'_j-p_{-j}(v'_i,v_{-i,-j}) <  v'_i-p_{-i}(v'_j,v_{-i,-j})$.
		\item In the instance $(v_i',v_j,v_{-i,-j})$ player $i$ prefers truth-telling: $v'_i-p_{-i}(v_j,v_{-i,-j}) < v_j-p_{-j}(v'_i,v_{-i,-j}) $.
		\item In the instance $(v_i,v'_j,v_{-i,-j})$ player $i$ prefers truth-telling: $v_i-p_{-i}(v'_j,v_{-i,-j}) < v'_j-p_{-j}(v_i,v_{-i,-j}) $.
	\end{itemize}
	By summing up all inequalities we get that $0<0$, a contradiction.
\end{proof}

\section{Proof of Theorem \ref{thm-standard}}\label{sec-thm-standard}

This section is devoted to proving Theorem \ref{thm-standard}. We gradually characterize the payment functions of a revenue-maximizing moral mechanism. We will show that there exists a revenue-maximizing moral mechanism in which in every instance there is at most one player with positive profit, hence this mechanism is truthful.

Let $V$ denote all the values that are in the support of $\mathcal F$. For a vector $v_{-i}$ that specifies the values of all players except player $i$, we let $p_{-i}(v_{-i})$ denote the potential payment presented to player $i$ when the values of the other players are according to $v_{-i}$. We let $\max(v_{-i})$ denote the maximum value in $v_{-i}$.

We define a partial order $\succ$ over all the  vectors in $V^{n-1}$. For $v \in  V^{n-1}$ we let $sort(v)$ denote the $n-1$ entries of the vector $v$ sorted from largest to smallest. For $v, v' \in  V^{n-1}$ we say that $v \succ v'$ if $sort(v)$ is lexicographically bigger than $sort(v')$. For example, $v \succ v'$ if the highest value in $v$ is greater than the highest value in $v'$.

Next, we state the key proposition in the proof of the theorem. The proposition considers the first (according to $\succ$) $v_{-i}$ such that the potential payment of player $i$ is less than $\max(v_{-i})$. It says that there is another mechanism where $p_{-i}(v_{-i})\geq \max(v_{-i})$ and the payment functions of all vectors before $v_{-i}$ in $\succ$ are the same. Moreover, the revenue of the new mechanism is not lower than that of the original mechanism. Note that the proof uses some useful lemmas that can be found in Subsection \ref{subsubsec-auxillary}.

\begin{proposition} \label{prop-main}
Let $x$ be the monopolist price of $\mathcal F$. Fix some mechanism $M$ with payment functions $p_{-1},\ldots, p_{-n}$. Suppose that for some $v_{-i}\in V^{n-1}$ with $\max(v_{-i})\geq x$ we have that for every $\hv_{-j}\succ v_{-i}$ it holds that $p_{-j}(\hv_{-j})\geq \max(\hv_{-j})$. Then, there exists a mechanism $M'$ with payment functions $p'_{-1},\ldots, p'_{-n}$ such that for every $\hv_{-j}\succ v_{-i}$ it holds that $p'_{-j}(\hv_{-j})\geq \max(\hv_{-j})$. In addition, $p'_{-i}(v_{-i}) \geq \max(v_{-i})$. Moreover, the revenue of $M'$ is not lower than the revenue of $M$.
\end{proposition}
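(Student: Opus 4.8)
The plan is to follow the two‑player outline and the structure of the truthful counterpart, Proposition~\ref{prop-main-truthful}. If $p_{-i}(v_{-i})\ge\max(v_{-i})$ already, take $M'=M$. Otherwise set $v_i=p_{-i}(v_{-i})<\max(v_{-i})$ (after the routine reduction to potential payments lying in $V$) and let $j$ be a player with $v_j=\max(v_{-i})$. I will produce a mechanism that raises $p_{-i}(v_{-i})$ by exactly one grid step to $v_i+\eps$ without lowering the revenue, while modifying only payment functions evaluated at vectors $\preceq v_{-i}$, so that the hypothesis is preserved and the step can be repeated; iterating at most $1/\eps$ times yields $p_{-i}(v_{-i})\ge\max(v_{-i})$. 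Each single step is analyzed by fixing the values of all players other than $i$ to $v_{-i}$ and comparing the expected revenue over player $i$'s report $z$, and the whole difficulty is concentrated in that comparison. The step splits on the critical instance $(v_i,v_{-i})$.

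\textbf{Case I:} some player $k\neq i$ already has non‑negative potential profit in $(v_i,v_{-i})$. Here I only set $p'_{-i}(v_{-i})=v_i+\eps$. For $z<v_i$ nothing changes; for $z\ge v_i$ player $i$'s potential profit is still non‑negative but drops by $\eps$, so $M'$ may hand the item to someone else, and the key point is that this new winner never pays less than $v_i$. This is exactly where the $\succ$‑minimality of $v_{-i}$ is used: if a player $k$ that could steal the allocation had value $v_k<z$, then the vector obtained by deleting $v_k$ and inserting $z$ is $\succ v_{-i}$, so by hypothesis $k$'s potential payment is at least $z>v_k$, forcing $k$'s potential profit negative, so $k$ is not a threat at all; and if $v_k\ge z$ then $k$'s potential payment is at least $v_k-(z-v_i)\ge v_i$. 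Either way the revenue does not fall and strictly rises on the instances where $i$ still wins. This is the content of Claim~\ref{clm:smaller-price}, with the analogue of Observation~\ref{obs-small-is profitable} handling the middle range $v_i<z\le v_j$.

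\textbf{Case II:} every player $k\neq i$ has strictly negative potential profit in $(v_i,v_{-i})$; write $v_{-j}$ for the values of all players other than $j$ in that instance. First I argue $p_{-j}(v_{-j})=v_j+\eps$ after a harmless change: for every $z>v_j$ each player other than $j$ has negative potential profit in $(z,v_{-j})$ (the vector obtained by deleting that player's value and inserting $z$ has maximum $\ge z>v_j=\max(v_{-i})$, hence is $\succ v_{-i}$, and the hypothesis applies), so by the auxiliary lemma that is the moral counterpart of Lemma~\ref{lem-rev-max-truthful}, together with regularity of $\mathcal F$ and $v_j\ge x$, lowering $p_{-j}(v_{-j})$ one step at a time never hurts revenue down to $v_j+\eps$. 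Then I perform the simultaneous change $p'_{-i}(v_{-i})=v_i+\eps$ and $p'_{-j}(v_{-j})=v_j$: conditioning the other coordinates on $v_{-i}$ this gains essentially $\eps$ on the instances $(z,v_{-i})$ with $z>v_i$, while conditioning them on $v_{-j}$ it loses essentially $\eps$ on the instances $(z,v_{-j})$ with $z>v_j$ (this is the step where ``several players may be profitable at once'' makes the accounting looser than in the truthful case). Collecting the probabilities, the net change is non‑negative precisely when
\[
\eps\, f(v_i)\,\bigl(1-F(v_j)\bigr)\;\le\;\eps\, f(v_j)\,\bigl(1-F(v_j)\bigr)\;+\;(v_j-v_i)\, f(v_i)\, f(v_j),
\]
which, dividing by $f(v_i)f(v_j)$ and rearranging, is exactly the inequality defining a \emph{standard} distribution for the pair $v_j>v_i$. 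This is the content of Claim~\ref{clm:higher-price}, and it is the only place where ``standard'' rather than merely ``regular'' is needed.

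After the step the hypothesis still holds (the only auxiliary modifications, at $v_{-j}$, are at a vector $\prec v_{-i}$), $p_{-i}(v_{-i})$ has grown by $\eps$, and the revenue has not dropped; iterating finishes the proof, and Proposition~\ref{prop-less-than-mono} then handles the vectors with $\max(v_{-i})<x$. The main obstacle is not any single inequality but the $n$‑player bookkeeping: unlike the two‑player case, once the mechanism is altered its behaviour on vectors below $v_{-i}$ is no longer under control, so the induction over $\succ$ must be organised so that every instance entering a revenue comparison has the form $(z,\text{good vector})$ with a known set of players guaranteed unprofitable — this is the ``more subtle induction hypothesis'' flagged after Theorem~\ref{thm-standard}, and both the Case~I tie‑breaking and the Case~II simultaneous‑change accounting rely on getting it exactly right.
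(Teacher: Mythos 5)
Your proposal is correct and follows essentially the same route as the paper's proof: the same $\eps$-step iteration over $\succ$-maximal violating vectors, the same case split on the instance $(v_i,v_{-i})$ (your Case~I is Claim~\ref{clm:smaller-price} with Observation~\ref{obs-small-is profitable}, your Case~II is Claim~\ref{clm:higher-price} via Lemma~\ref{lem-rev-max} and Lemma~\ref{lem-sim-change}), and the identical final inequality that is exactly the standardness condition. The only (harmless) deviations are cosmetic: you lower $p_{-j}(v_{-j})$ to $v_j+\eps$ as an explicit revenue-preserving modification rather than deducing it from revenue maximality, and you argue unprofitability directly for all players with $v_k<z$ (which in fact is the form needed to apply Observation~\ref{obs-small-is profitable}), slightly strengthening the paper's Claim~\ref{claim-only-small-are-profitable}.
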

Before proving the proposition we show that the proposition immediately implies a characterization of the payment functions for all vectors $v_{-i}$ with $\max_i(v_{-i})\geq x$:
\begin{corollary}
There exists a revenue maximizing mechanism $M$ with payment functions $p_{-1},\ldots, p_{-n}$ such that for every $v_{-i}$ with $\max(v_{-i})\geq x$ we have that $p_{-i}(v_{-i})\geq \max(v_{-i})$.
\end{corollary}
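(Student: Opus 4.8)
The plan is to derive the corollary from Proposition~\ref{prop-main} by a ``repair the $\succ$-largest violation'' induction. I would start from an arbitrary revenue-maximizing mechanism $M$ and call a vector $v_{-i}$ with $\max(v_{-i})\geq x$ \emph{violating} if $p_{-i}(v_{-i})<\max(v_{-i})$. If no vector is violating, $M$ already witnesses the corollary. Otherwise, pick a violating vector $v_{-i}$ (over all players $i$) that is $\succ$-maximal among violating vectors. First I would check that Proposition~\ref{prop-main} applies at this $v_{-i}$: any $\hv_{-j}\succ v_{-i}$ has $\max(\hv_{-j})\geq\max(v_{-i})\geq x$, so it is either violating or not, and by $\succ$-maximality of $v_{-i}$ it is not, i.e.\ $p_{-j}(\hv_{-j})\geq\max(\hv_{-j})$ — exactly the hypothesis. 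Proposition~\ref{prop-main} then produces a mechanism $M'$ with $p'_{-i}(v_{-i})\geq\max(v_{-i})$, with $p'_{-j}(\hv_{-j})\geq\max(\hv_{-j})$ still holding for every $\hv_{-j}\succ v_{-i}$, and with revenue at least that of $M$; hence $M'$ is again revenue-maximizing and $v_{-i}$ is no longer violating in it.

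Then I would iterate. To see that the process terminates, recall that $V$ is finite, so $V^{n-1}$ is finite and $\succ$ totally orders the equivalence classes of vectors sharing a common sorted tuple. Each step turns the current $\succ$-maximal violating vector into a non-violating one while keeping every strictly-$\succ$-larger vector non-violating, and — reading off the construction behind Proposition~\ref{prop-main}, which only raises the single payment $p_{-i}(v_{-i})$ and otherwise modifies payments at vectors strictly smaller than $v_{-i}$ in $\succ$ — it also leaves every $\succ$-equivalent vector untouched. So one may process the equivalence classes from the top down: once a class is fully repaired it stays repaired, and after finitely many steps no violating vector remains. The resulting mechanism is revenue-maximizing (revenue never decreased along the way) and satisfies $p_{-i}(v_{-i})\geq\max(v_{-i})$ whenever $\max(v_{-i})\geq x$.

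The genuinely delicate point is not the inductive shell but ruling out that a repair resurrects a violation at a $\succ$-equal or $\succ$-larger vector, which could make the process cycle. The $\succ$-larger case is handled directly by the conclusion of Proposition~\ref{prop-main}; for the $\succ$-equal case I would explicitly appeal to the fact that the manipulations proving the proposition only increase $p_{-i}(v_{-i})$ and only decrease payments at strictly-$\succ$-smaller vectors (as one sees already in Cases I and II of the two-player outline), so two vectors in the same $\succ$-class never interfere. This mirrors exactly the corresponding step in the truthful setting, namely the corollary following Proposition~\ref{prop-main-truthful}. Everything else is routine bookkeeping.
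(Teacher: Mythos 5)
Your proposal is correct and follows essentially the same route as the paper: take a $\succ$-maximal violating vector, observe that maximality supplies the hypothesis of Proposition~\ref{prop-main}, apply it to repair that vector without losing revenue, and iterate. You are merely more explicit than the paper about termination and about non-interference within a $\succ$-equivalence class (justified, as you note, by the fact that the repairs only raise $p_{-i}(v_{-i})$ and otherwise touch payments at strictly $\succ$-smaller vectors), details the paper leaves implicit in its ``continue similarly'' step.
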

\begin{proof}
If the corollary does not hold, then for some $v_{-i}$ that satisfies the conditions of the proposition we have that $p_{-i}(v_{-i})< \max(v_{-i})$. If there are several such $v_{-i}$'s, we assume that $v_{-i}$ is a maximal such vector according to $\succ$. Note that by the maximality of $v_{-i}$, we have that for every $\hv_{-j}\succ v_{-i}$, $p_{-j}(\hv_{-j})\geq \max(\hv_{-j})$. The conditions of the proposition thus hold and thus there is another mechanism $M'$ with payment functions $p'_{-1},\ldots, p'_{-n}$ such that $p'_{-i}(v_{-i}) \geq \max(v_{-i})$. We continue similarly to the next vector $v'_{-i}$ that violates the corollary, fixing it by obtaining a new mechanism, and so on.
\end{proof}

We are now ready to prove Proposition \ref{prop-main}:

\begin{proof} 
Let $v_{-i}$ be such that the conditions of the proposition hold but $p_{-i}(v_{-i})<\max(v_{-i})$. Let $v_i=p_{-i}(v_{-i})$. The following claim will be useful:
\begin{claim}\label{claim-only-small-are-profitable}
Let $v_{-i}$ and $v_i$ be as above. In any instance $(z,v_{-i})$, $z\geq v_i$, it holds that if $k$ is a player with non-negative profit then $v_k \geq v_i$.
\end{claim}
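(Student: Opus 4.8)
The plan is to prove the contrapositive for the players other than $i$: I will show that if a player $k\neq i$ has value strictly below $v_i$ in the instance $(z,v_{-i})$, then his potential profit there is strictly negative, so he cannot be one of the players with non-negative profit. (For $k=i$ the assertion is immediate, since the value player $i$ reports in $(z,v_{-i})$ is $z$, and $z\geq v_i$ by hypothesis.) So fix a player $k\neq i$, write $v_k$ for the corresponding coordinate of $v_{-i}$, suppose toward a contradiction that $v_k<v_i$, and let $w$ be the vector of the values of all players other than $k$ in the instance $(z,v_{-i})$---equivalently, $w$ is obtained from $v_{-i}$ by deleting the entry $v_k$ and inserting $z$. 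The potential profit of player $k$ in this instance equals $v_k-p_{-k}(w)$, so it suffices to show $p_{-k}(w)>v_k$.

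The crux is the combinatorial observation that $w\succ v_{-i}$. Since $z\geq v_i>v_k$, the vector $w$ is obtained from $v_{-i}$ by replacing one entry with a strictly larger value; writing both vectors in weakly decreasing order they agree in every position above the slot into which $z$ is inserted, and $w$ is strictly larger in that slot, so $sort(w)$ is lexicographically larger than $sort(v_{-i})$. Having established $w\succ v_{-i}$, I would invoke the standing hypothesis of Proposition \ref{prop-main}, which gives $p_{-k}(w)\geq\max(w)$. Since $z$ is one of the coordinates of $w$, we get $\max(w)\geq z\geq v_i>v_k$, hence $p_{-k}(w)>v_k$, contradicting the assumed non-negativity of player $k$'s profit; the claim follows.

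The only place that needs genuine care is the combinatorial claim $w\succ v_{-i}$: one must be careful with ties (several coordinates of $v_{-i}$ equal to $v_k$, or $z$ coinciding with other coordinates) and with the boundary case $z=v_i$, which is still fine since then $z=v_i>v_k$. Apart from that there is no real obstacle; everything reduces to the hypothesis of Proposition \ref{prop-main} together with the trivial bound $\max(w)\geq z$. The role of the claim is that of a preliminary reduction: once $v_{-i}$ is the first vector (in the order $\succ$) whose potential payment $p_{-i}(v_{-i})$ is smaller than $\max(v_{-i})$, the only players who can be profitable in the instances $(z,v_{-i})$ with $z\geq v_i$ are those whose value is at least $v_i$---which is exactly the structure needed to drive the subsequent case analysis (the $n$-player analogues of ``Case I'' and ``Case II'' from the two-player sketch).
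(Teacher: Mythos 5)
Your argument is correct and follows essentially the same route as the paper: you show $v^z_{-k}\succ v_{-i}$ (the paper does this via the intermediate chain $v^z_{-k}\succ v_{-k}\succ v_{-i}$, you do it in one replacement step), invoke the standing hypothesis of Proposition \ref{prop-main} to get $p_{-k}(v^z_{-k})\geq\max(v^z_{-k})\geq z\geq v_i>v_k$, and conclude that player $k$'s potential profit is negative. Your explicit handling of the case $k=i$ and of ties in the lexicographic comparison is slightly more careful than the paper's, but the substance is identical.
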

\begin{proof}
Let $k$ be a player with $v_k < v_i$. Let $v^z_{-k}$ be the vector that is obtained from $(z,v_{-i})$ by removing the $k$'th coordinate. Observe that if $v_k<v_i$, then $v^z_{-k} \succ v_{-k} \succ v_{-i}$, where the first transition follows since we essentially ``replace'' the value $v_i$ with the higher value $z$ and the second transition follows since $v_k<v_i$. Thus, by the conditions of the proposition $p_{-k}(v^z_{-k})\geq\max(v^z_{-k})\geq v_i$, and player $k$'s profit is negative since $v_k<v_i$.
\end{proof}

We prove the proposition by dividing into cases. Similarly to the proof sketch, we first consider the case where some player $j\neq i$ in the instance $(v_i,v_{-i})$ has a non-negative potential profit, and then consider the case where all players $j\neq i$ have negative potential profit. These two cases are handled by the next two claims.

\begin{claim} \label{clm:smaller-price}
Fix some mechanism $M$ with payment functions $p_{-1},\ldots,p_{-n}$. Let $v_{-i}$ be such that the conditions of Proposition \ref{prop-main} hold in $M$. Let $v_i=p_{-i}(v_{-i})$ and suppose that in the instance $(v_i,v_{-i})$ there exists some player $j\neq i$ such that $v_j\geq p_{-j}(v_{-j})$. Then, there exists a mechanism $M'$ with payment functions $p'_{-1}, \ldots, p'_{-n}$ which are identical to the payment functions $p_{-1},\ldots, ,p_{-n}$ of $M$ except that $p'_{-i}(v_{-i})= p_{-i}(v_{-i})+\eps$ that generates at least as much revenue as $M$. 
\end{claim}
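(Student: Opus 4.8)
The plan is to take $M'$ with $p'_{-i}(v_{-i})=v_i+\eps$, where $v_i:=p_{-i}(v_{-i})$, and all other potential payments equal to those of $M$, and to let $M'$ break ties so as to agree with $M$ wherever a profit maximizer leaves a choice. Since the only modified potential payment is $p_{-i}(v_{-i})$, the outcomes of $M$ and $M'$ can differ only on instances of the form $(z,v_{-i})$ with $z\in V$. I would therefore fix $v_{-i}$ and compare, instance by instance, the revenue of $M$ and of $M'$ on $(z,v_{-i})$; showing that $M'$ is never worse on any such instance immediately gives $\mathrm{rev}(M')\ge \mathrm{rev}(M)$. That $M'$ is individually rational, has no positive transfers, and is again a profit maximizer (hence $1$-moral) is immediate from the construction.

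For $z<v_i$ player $i$'s potential profit is negative under both $v_i$ and $v_i+\eps$, so $i$ never wins in either mechanism and every other player's potential payment and potential profit is unchanged; thus $M'$ reproduces the outcome of $M$ verbatim. For $z=v_i$, player $i$'s potential profit is $0$ in $M$ and $-\eps<0$ in $M'$ while all other players are unchanged, so the revenue can change only if player $i$ was the winner in $M$ (necessarily at price $v_i$); but then $0$ is the maximal non-negative potential profit, so the player $j\ne i$ furnished by the hypothesis of the claim has potential profit exactly $0$, and Claim~\ref{claim-only-small-are-profitable} applied with $z=v_i$ forces every player with non-negative profit --- in particular every potential new winner --- to have value at least $v_i$; a player with potential profit $0$ pays its own value, which is $\ge v_i$, so letting $M'$ allocate to such a player recovers revenue at least $v_i$.

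The remaining range, $z>v_i$ (so $z\ge v_i+\eps$), is the crux. Here player $i$ has strictly positive potential profit $z-v_i$ in $M$ and non-negative potential profit $z-v_i-\eps$ in $M'$, all other players being unchanged. If player $i$ is not the winner in $M$, then since only $i$'s profit decreased, $M'$ can keep the same winner and the revenue is unchanged. If player $i$ is the $M$-winner (paying $v_i$), let $w$ be the winner of $M'$ (which exists: either $z>v_i+\eps$, so $i$ is strictly profitable, or $z=v_i+\eps$ and we let $M'$ allocate to $i$). If $w=i$ the revenue is $v_i+\eps>v_i$. Otherwise $w\ne i$ is profitable in $M'$; if $v_w<z$, then the vector of values of all players other than $w$ in $(z,v_{-i})$ is obtained from $v_{-i}$ by replacing the entry $v_w$ with the strictly larger $z$, hence is $\succ v_{-i}$, so by the hypothesis of Proposition~\ref{prop-main} its potential payment is at least the maximum entry of that vector, which is itself at least $z>v_w$, making $w$'s profit negative --- a contradiction. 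Thus $v_w\ge z$, and since $i$ was the $M$-winner the potential profit $v_w-p_{-w}(v_{-w})$ is at most $z-v_i$, so $p_{-w}(v_{-w})\ge v_w-(z-v_i)\ge z-(z-v_i)=v_i$. In every instance $M'$ therefore collects at least what $M$ does. (When $z>\max(v_{-i})$ no other player is even profitable, so $i$ wins in both and the revenue strictly increases by $\eps$.)

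I expect the third paragraph to be the real obstacle: raising $p_{-i}(v_{-i})$ by $\eps$ can cause a player $w$ who was weakly tied with the old player $i$ to overtake him and become the new winner, and one must rule out that $w$ pays less than the $v_i$ that $M$ was extracting. Two facts combine to settle it --- the ``local monotonicity'' encoded in the hypothesis of Proposition~\ref{prop-main} (already exploited in Claim~\ref{claim-only-small-are-profitable}) forces $v_w\ge z$, and the fact that $i$ was the $M$-winner caps $w$'s potential profit at $z-v_i$; together they pin $p_{-w}(v_{-w})$ down to at least $v_i$. The $z<v_i$ and $z=v_i$ cases, and the observation that the whole revenue comparison is genuinely instance-by-instance, are routine.
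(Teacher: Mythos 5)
Your proof is correct and takes essentially the same route as the paper's: the same mechanism $M'$ (raising only $p_{-i}(v_{-i})$ by $\eps$), the same instance-by-instance comparison over $(z,v_{-i})$ with the same case split on $z$, and the same use of the lexicographic hypothesis of Proposition~\ref{prop-main} to force any competing winner to have a high value and hence a potential payment of at least $v_i$. The only difference is presentational: you inline (and slightly strengthen, showing profitable players have value at least $z$ rather than at least $v_i$) the content of Claim~\ref{claim-only-small-are-profitable} and Observation~\ref{obs-small-is profitable}, which lets you handle the $z=v_i$ tie case without appealing to revenue-maximality of $M$.
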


\begin{claim} \label{clm:higher-price}
Fix some mechanism $M$ with payment functions $p_{-1},\ldots,p_{-n}$. Let $v_{-i}$ be such that the conditions of Proposition \ref{prop-main} hold in $M$. Let $v_i=p_{-i}(v_{-i})$ and suppose that in the instance $(v_i,v_{-i})$, for every $k\neq i$, $p_{-k}(v_{-k})>v_k$. Let $M'$ be a mechanism with payment functions $p'_{-1}, \ldots, p'_{-n}$ which are identical to the payment functions $p_{-1},\ldots, ,p_{-n}$ of $M$ except that $p'_{-i}(v_{-i})= p_{-i}(v_{-i})+\eps$ and $p'_{-j}(v_{-j})=v_j$, where $j$ is the player with the highest value in $v_{-i}$. Then, the revenue of $M'$ is not lower than the revenue of $M$. 
\end{claim}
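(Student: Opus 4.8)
We follow the structure of Case~II of the two-player sketch, carried out for the first (in $\succ$) bad vector as in the proof of Proposition~\ref{prop-main}, so we inherit the setup: $v_i=p_{-i}(v_{-i})<\max(v_{-i})$, $j$ is the player realizing $v_j:=\max(v_{-i})$ so $v_i<v_j$, and by hypothesis every $k\neq i$ has strictly negative potential profit in $(v_i,v_{-i})$, in particular $p_{-j}(v_{-j})>v_j$. \emph{Step 1} is to show $p_{-j}(v_{-j})=v_j+\eps$. The point is that for every $z>v_j$, in the instance $(z,v_{-j})$ the only player that can have non-negative potential profit is $j$: for $k\neq j$, deleting $k$'s coordinate from $(z,v_{-j})$ yields a vector $\succ v_{-i}$ (it contains $z>\max(v_{-i})$), so by the hypothesis of Proposition~\ref{prop-main} the payment presented to $k$ is at least that vector's maximum, which is $\geq z>v_j\geq v_k$. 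Plugging this into the auxiliary revenue-maximization lemma of Subsection~\ref{subsubsec-auxillary} (the analogue of Lemma~\ref{lem-rev-max-truthful}, which applies since standard implies regular) with threshold $v_j+\eps\geq x$, we get $p_{-j}(v_{-j})\leq v_j+\eps$; combined with $p_{-j}(v_{-j})>v_j$ and discreteness of the support this gives equality.

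\emph{Step 2} is the revenue comparison. Since $M'$ and $M$ differ only in $p_{-i}(v_{-i})$ and $p_{-j}(v_{-j})$, revenue can change only on the $v_{-i}$-family $\{(z,v_{-i})\}_z$ and the $v_{-j}$-family $\{(z,v_{-j})\}_z$, which overlap in the single instance $(v_i,v_{-i})=(v_j,v_{-j})$. On the overlap: in $M$ player $i$ has potential profit $0$ and everyone else is strictly negative (player $j$ has $-\eps$ by Step~1), so $M$ sells to $i$ at $v_i$; in $M'$ player $j$ has potential profit $0$ and everyone else is strictly negative, so $M'$ sells to $j$ at $v_j$, a gain of $v_j-v_i$. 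On the $v_{-j}$-family, for $z<v_j$ player $j$ is unprofitable in both mechanisms and nothing changes, while for $z>v_j$ only $j$ is profitable (Step~1), so $M$ sells to $j$ at $v_j+\eps$ and $M'$ sells at $v_j$: a loss of exactly $\eps$ per such instance.

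The delicate part is the $v_{-i}$-family for $z>v_i$ (for $z\le v_i$ player $i$ is unprofitable in both mechanisms, so nothing changes); here only player $i$'s potential profit moves, dropping by $\eps$. The plan is to show $M'$ never earns less than $M$ on such an instance, and earns exactly $\eps$ more whenever $z>v_j$. If in $M$ the item went to some $k\neq i$, its unchanged potential profit still weakly dominates $i$'s in $M'$ and the outcome is unchanged. If in $M$ it went to $i$ — so $M$'s revenue is $v_i$ and, under the tie-break convention favoring the higher potential payment (which we may impose WLOG, as it only helps revenue), every player tied with $i$ presents a payment $\leq v_i$ — then in $M'$ either $i$ still presents a weakly largest potential profit and pays $v_i+\eps$, or some $k$ ``steals'' the item with potential profit exactly $z-v_i$. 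In the latter case the Proposition~\ref{prop-main} hypothesis forces any profitable $k\neq i$ in $(z,v_{-i})$ to satisfy $v_k\geq z$ (a sharpening of Claim~\ref{claim-only-small-are-profitable}: otherwise the payment presented to $k$ would be $\geq z>v_k$), and combined with ``$k$'s profit is $z-v_i$'' and ``$k$'s payment $\leq v_i$'' this pins $k$'s payment at exactly $v_i$, so the revenue is again $v_i$. Finally, when $z>v_j$ no competitor $k$ can be profitable at all (it would need $v_k\geq z>v_j\geq v_k$), so $i$ stays the winner and $M'$ gains exactly $\eps$.

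Writing $\Pi_i=\prod_{k\neq i}f(v_k)$, $\Pi_j=\prod_{k\neq j}f(v_k)$, $\Pi=\prod_k f(v_k)$, the above bounds the revenue change $\Delta:=\mathrm{rev}(M')-\mathrm{rev}(M)$ below, keeping only the $z>v_j$ portion of the $v_{-i}$-gain, by
\[
\Delta \ \geq\ \eps\,\Pi_i\,(1-F(v_j)) \ -\ \eps\,\Pi_j\,(1-F(v_j)) \ +\ (v_j-v_i)\,\Pi .
\]
Dividing by $\Pi$ and using $\Pi_i/\Pi=1/f(v_i)$, $\Pi_j/\Pi=1/f(v_j)$, the inequality $\Delta\geq 0$ reduces to $v_j-\eps\frac{1-F(v_j)}{f(v_j)}\geq v_i-\eps\frac{1-F(v_j)}{f(v_i)}$, which is exactly the definition of $\mathcal F$ being \emph{standard} applied to $v_j>v_i$. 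The main obstacle is precisely the $v_{-i}$-family analysis for $v_i<z\leq v_j$: in the truthful case player $i$ is the unique profitable player and the gain is cleanly $\eps$ (and ``regular'' suffices), whereas in the moral case the item can be ``stolen'', so one must both rule out any revenue loss there and recognize that the gain that survives is only $\eps\,\Pi_i\,(1-F(v_j))$ rather than $\eps\,\Pi_i\,(1-F(v_i))$ — and it is this weakening of the gain that pushes the required hypothesis from ``regular'' up to ``standard''.
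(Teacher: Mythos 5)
Your proof is correct and takes essentially the same route as the paper's: you first pin $p_{-j}(v_{-j})=v_j+\eps$ via the auxiliary revenue lemma (Lemma \ref{lem-rev-max}), and then perform the same three-part accounting as Lemma \ref{lem-sim-change} --- gain of $\eps$ on the $v_{-i}$-family only for $z>v_j$, no loss for $v_i<z\leq v_j$, loss of $\eps$ on the $v_{-j}$-family for $z>v_j$, and the overlap gain $v_j-v_i$ --- which reduces exactly to the ``standard'' inequality (\ref{eq4}). The only differences are presentational: you inline Observation \ref{obs-small-is profitable} and explicitly use the sharpened form of Claim \ref{claim-only-small-are-profitable} (profitable competitors in $(z,v_{-i})$ have $v_k\geq z$), which is indeed what that step requires.
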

Proofs of these claims can be found in Subsections \ref{subsec-proof-smaller} and \ref{subsec-proof-higher}. As in the outline, the two claims imply the proposition: for $v_{-i}$ defined in the proposition, use one of the claims to obtain a new mechanism where $p_{-i}(v_{-i})$ is increased by $\eps$, and repeat until $p_{-i}(v_{-i})\geq \max(v_{-i})$.
\end{proof}

We have characterized all $p_{-i}(v_{-i})$ when $\max(v_{-i })\geq x$. It remains to characterize the payment functions for vectors $v_{-i}$ such that $\max(v_{-i})<x$. 

\begin{proposition}\label{prop-less-than-mono}
 	Consider a revenue maximizing mechanism for a distribution $\mathcal F$. If for every player $i$ and $v_{-i}$ such that $\max(v_{-i})\geq x$ it holds that $p_{-i}(v_{-i})\geq \max(v_{-i})$, then for every player $i$ and $v_{-i}$ such that $\max(v_{-i})<x$ it holds that $p_{-i}(v_{-i})=x$.
\end{proposition}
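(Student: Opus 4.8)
The plan is to prove this by essentially the same kind of argument used informally in the $2$-player sketch (the paragraph discussing the case $v<x$) and more formally in Proposition \ref{prop-less-than-mono-truthful} in the truthful setting. Fix a player $i$ and a vector $v_{-i}$ with $\max(v_{-i})<x$; let $x$ be the monopolist price of $\mathcal F$, i.e. $x\in\arg\max_p p\cdot(1-F(p-\eps))$, and denote $R_x=x\cdot(1-F(x-\eps))$ the corresponding optimal single-player revenue. I want to show that the revenue-maximizing choice of $p_{-i}(v_{-i})$ is exactly $x$. The key observation, exactly as in Claim \ref{claim-only-small-are-profitable} / the earlier characterization, is that once we condition on all players other than $i$ having values $v_{-i}$ with $\max(v_{-i})<x$, and we set $p_{-i}(v_{-i})=x$, then in every instance $(z,v_{-i})$ with $z\ge x$ player $i$ is the \emph{only} player with non-negative potential profit. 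This is because for any other player $k$, the vector $v^z_{-k}$ obtained from $(z,v_{-i})$ by deleting coordinate $k$ satisfies $\max(v^z_{-k})\ge z\ge x$, so by the hypothesis of the proposition $p_{-k}(v^z_{-k})\ge \max(v^z_{-k})\ge z > v_k$ (using $v_k\le\max(v_{-i})<x\le z$), so player $k$'s potential profit is negative. Hence, conditioned on $v_{-i}$, the mechanism behaves exactly like a single-player take-it-or-leave-it mechanism for player $i$ with threshold $x$, generating revenue $R_x$, all of it from player $i$.

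Next I compare with other candidate values of $p_{-i}(v_{-i})$. If $p_{-i}(v_{-i})=\bar x$ for some $\bar x>x$: this is still feasible (raising a threshold never creates infeasibility), the behavior conditioned on $v_{-i}$ is still single-player-like (the same deletion argument shows no other player is profitable in instances $(z,v_{-i})$ with $z\ge \bar x>x$), so the conditional revenue is $\bar x\cdot(1-F(\bar x-\eps))\le R_x$ by definition of the monopolist price. If $p_{-i}(v_{-i})=\underline x$ for some $\underline x<x$ (in the support): now the conditional revenue extracted from player $i$ alone is $\underline x\cdot(1-F(\underline x-\eps))\le R_x$, again by the monopolist-price definition; but additionally for $\underline x\le z<x$ there may now be another player $k$ who is also profitable (the deletion argument only gives $p_{-k}(v^z_{-k})\ge z$, which need not exceed $v_k$ when $z<x$), which can force the item to be allocated to the wrong player or otherwise only \emph{lower} revenue. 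In all cases the revenue conditioned on $v_{-i}$ is at most $R_x$, with equality achieved at $p_{-i}(v_{-i})=x$. Since the choice of $p_{-i}(v_{-i})$ affects the revenue only through instances of the form $(z,v_{-i})$, and the distribution is a product distribution so these conditional revenues are weighted by the (fixed) probability of $v_{-i}$, setting $p_{-i}(v_{-i})=x$ is revenue-optimal. Applying this to every player $i$ and every such $v_{-i}$ establishes the proposition.

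The main subtlety — and the step I would write most carefully — is the $\underline x<x$ case: I need to argue rigorously that a second profitable player never \emph{helps} revenue. Here I would lean on the fact that we are free to choose the revenue-maximizing mechanism, so it suffices to exhibit one choice ($p_{-i}(v_{-i})=x$) that is at least as good as any other; I do not need to show every alternative is strictly worse, only that none beats $x$. To bound the alternative with $p_{-i}(v_{-i})=\underline x$, I would note that in any instance $(z,v_{-i})$ the revenue is the potential payment of whichever player is allocated, and the item is allocated to a player of maximal non-negative potential profit; since $\max(v_{-i})<x$ and all potential payments presented to players other than $i$ in these instances are at least $z$ (when $z\ge\underline x$) — hence at least their own values whenever their profit would otherwise be non-negative requires care — the total revenue is still bounded by the single-player quantity plus nothing, i.e. by $\max_p p(1-F(p-\eps))=R_x$. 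I expect this bookkeeping, rather than any deep idea, to be the bulk of the work; everything else follows the template already laid down in the paper.
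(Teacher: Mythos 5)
Your proposal follows essentially the same argument as the paper's proof: you use the deletion argument ($v^z_{-k}\succ$-type reasoning from the hypothesis) to show that with $p_{-i}(v_{-i})=x$ only player $i$ can be profitable in instances $(z,v_{-i})$ with $z\ge x$, so the conditional revenue equals the single-player optimum $\max_p p(1-F(p-\eps))$, and then you compare against $\bar x>x$ and $\underline x<x$ exactly as the paper does (including leaving the $\underline x<x$ interference case at the same informal level the paper does). This is correct and matches the paper's route.
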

\begin{proof}
If our goal is only to maximize the revenue extracted from player $i$ (ignoring the contribution of the other players) the threshold price of player $i$ should be $x$, as $x\in\arg\max p(1-F(p-\eps))$. 

When $p_{-i}(v_{-i})= x$, the revenue of the mechanism (conditioned on the values of all players except player $i$ are according to $v_{-i})$ is indeed $\max_p(p(1-F(p-\eps)))$, all of it due to payments made by player $i$. To see this, let $v^z_{-j}$ be the vector obtained from $(z, v_{-i})$ by removing the $j$'th coordinate, for $z\geq x$. Since $\max(v^z_{-j})=z\geq x>v_{j}$, for every $j\neq i$ (by the conditions of the proposition), we have that only player $i$ is profitable in those instances. Thus, when $p_{-i}(v_{-i})= x$ we extract as much revenue as possible from player $i$, without affecting the payments of the other players. This proves that setting $p_{-i}(v_{-i})=x$ generates more revenue than setting $p_{-i}(v_{-i})=\bar x$, for $\bar x \geq x$.  Note that if we set $p_{-i}(v_{-i})=\underline x$, for $\underline x< x$, we extract less revenue from player $i$ and may charge less from the other players as there might be an additional player who has strictly positive profit which can potentially reduce the payment. We conclude that setting $p_{-i}(v_{-i})= x$ maximizes the revenue.
 \end{proof}

This establishes that for every $v_{_i}$, $p_{-i}(v_{-i})\geq \max(v_{-i})$. This already proves that $M$ is truthful, as needed, since in every instance there is at most one player with positive potential profit. For completeness, we note that Claim \ref{claim-truthful-characterization} provides a specific description of the payment functions.

\subsection{Proof of Claim \ref{clm:smaller-price}}\label{subsec-proof-smaller}

	The setting of this claim is illustrated in Figure \ref{fig-smaller-price} for two players. 
We analyze the difference between the revenue of $M$ and the revenue of $M'$. Since the values of the payment functions of $M$ and $M'$ differ only in $p_{-i}(v_{-i})$ (i.e., $p'_{-i}(v_{-i}) = p_{-i}(v_{-i})+\eps$), it suffices to fix the values of all players except player $i$ according to $v_{-i}$ and analyze the difference in the expected revenue, where expectation is taken over the value of player $i$. We now partition the set of instances of the form $(z,v_{-i})$, as a function of the value of $z$, and analyze the expected revenue of $M$ and $M'$.
\begin{itemize}

	\item \textbf{$z< p_{-i}(v_{-i})$.} Player $i$'s potential profit is negative in $M$ and in $M'$. For each $k\neq i$, $p_{-k}(v^z_{-k})$ is the same in $M$ and in $M'$ and thus there is no difference in the revenue of $M$ and $M'$ in those instances.

	\item \textbf{$z=v_i$.} The potential profit of player $i$ in $M$ is $0$, since $p_{-i}(v_{-i})= v_i$. The potential profit of player $i$ in $M'$ is negative. However, since $p_{-j}(v_{-j})\leq v_{j}$ the profit of player $j$ is nonnegative so the item will be allocated in both $M$ and $M'$. Let $k\neq i$ be a player with the highest value among all player with the highest non-negative potential profit in $M$. By Claim \ref{claim-only-small-are-profitable}, $v_k\geq v_i$. If player $k$'s potential profit is positive the allocation and payment are identical in both $M$ and $M'$. If player $k$'s potential profit is $0$, the allocation might change from player $i$ to player $k$ and collect $v_k$. Since $M$ maximizes revenue, this can happen only if $v_k=v_i$, otherwise both $k$ and $i$ have equal profits and $M$'s tie breaking rule should sell the item to player $k$ and charge $v_k>v_i$. We have that the revenue of $M'$ is at least $v_k$, which is the same as the revenue of $M$.

		\item \textbf{$\max(v_{-i}) \geq  z > v_i$.} Since $z\geq v_i+\eps=p_{-i}(v_{-i})+\eps$, player $i$'s potential profit is non-negative in both mechanisms. Moreover, the potential profit of each player $k$ with $v_k<v_i$ is negative, by Claim \ref{claim-only-small-are-profitable}. By Observation \ref{obs-small-is profitable}, the revenue of $M'$ is at least the revenue of $M$ in each instance $(z,v_{-i})$.
	
		\item \textbf{$z > \max(v_{-i})$.} We claim that the revenue in the instance $(z,v_{-i})$ is larger by $\eps$ in $M'$. For $k\neq i$, let $v^z_{-k}$ be $(z,v_{-i})$ without the $k$'th coordinate. Observe that $\max(v^z_{-k})>\max(v_{-i})$. Therefore, by the conditions of the proposition, for every $k\neq i$, $p_{-k}(v^z_{-k}) > \max(v_{-i})$, since $v^z_{-k}\succ v_{-i}$. Hence, all players but $i$ are not profitable. Now, observe that $z >\max(v_{-i})\geq v_i+\eps=p_{-i}(v_{-i})+\eps$. The payment of player $i$ is $v_i+\eps$ in $M'$ and $v_i$ in $M$, hence player $i$ is the only profitable player in this instance in both mechanisms. In $M'$ player $i$'s payment increases by $\eps$, thus the revenue of $M'$ is larger.
		
\end{itemize}
We conclude that in all possible scenarios the revenue of $M'$ is not smaller.


\subsection{Proof of Claim \ref{clm:higher-price}}\label{subsec-proof-higher}

	The setting of this claim is illustrated in Figure \ref{fig-higher-price} for two players. 
%
We first claim that $p_{-j}(v_{-j})= v_j+\eps$ (recall that $j$ is the player with the highest value in $v_{-i}$). To see this, observe that by the assumptions of the claim, for every $z\geq  v_j$ all players $k\neq j$ have negative potential profit in the instance $(z,v_{-j})$. This is because 
	for every player $k\neq j $ and $z$ as above it holds that $v^z_{-k} \succ v_{-k}$, where $v^z_{-k}$ denotes the vector obtained from $(z, v_{-j})$ by removing the $k$'th coordinate. Hence by the conditions of Proposition \ref{prop-main},  $p_{-k}(v^z_{-k})\geq\max(v^z_{-k})$. Moreover, by construction $\max(v^z_{-k})=z> \max(v_{-i})\geq v_k$ and hence we get that player $k$'s potential profit is negative in $(z,v_{-j})$. Next, we want to apply Lemma \ref{lem-rev-max}. Intuitively, the Lemma builds on the fact that in a single-bidder auction whose value is drawn from some regular distribution, the revenue increases as the take-it-or-leave-it price is closer to the monopolist price. The lemma generalizes this property for multi-player auctions under certain conditions, which hold in our case. Applying Lemma \ref{lem-rev-max} together with the condition that $p_{-j}(v_{-j})>v_{-j}$ we get that $p_{-j}(v_{-j})=\max(v_{-i})+\eps = v_j + \eps$.

The heart of the proof is the following lemma:
\begin{lemma} \label{lem-sim-change}
Let $M$ be a mechanism for players that their values are independently sampled from a standard distribution $\mathcal F$ with payment functions $p_{-1},\ldots, p_{-n}$. Consider an instance $(v_1,\ldots, v_n)$. Let $j$ be such that $v_j=\max(v_{-i})$. Suppose that:
\begin{itemize}
\item $p_{-i}(v_{-i})=v_i<\max(v_{-i})$.
\item $p_{-j}(v_{-j})=v_j+\eps$.
\item Only player $j$ might have a non-negative profit in the instance $(z,v_{-j})$, $z>v_j$.
\item Only players $k$ with $v_k\geq v_i$ might have non-negative profit in the instance $(z,v_{-i})$, $z>v_i$.
\end{itemize}
Then, the mechanism $M'$ which has payment functions $p'_{-1},\ldots,  p'_{-n}$ that are identical to $p_{-1}, \ldots, p_{-n}$ except that $p'_{-i}(v_{-i})=v_i+\eps$ and $p'_{-j}(v_{-j})=v_j$ generates at least as much revenue as $M$.
\end{lemma}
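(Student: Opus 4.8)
\end{lemma}

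The plan is to compare the expected revenues of $M$ and $M'$ directly and to argue that the difference $\mathrm{rev}(M)-\mathrm{rev}(M')$ is non-positive; after a short case analysis over the instances on which the two mechanisms can disagree, the whole claim will collapse to exactly the scalar inequality that defines a standard distribution (much as the analogous Lemma~\ref{lem-sim-change-truthful} collapsed to a virtual-value inequality). First I would observe that $M$ and $M'$ agree on every payment value except $p_{-i}(v_{-i})$, which is raised by $\eps$, and $p_{-j}(v_{-j})$, which is lowered by $\eps$. Hence the two mechanisms can differ in revenue only on instances of the form $(z,v_{-i})$ or $(z,v_{-j})$, and these two families overlap exactly in the instance $v=(v_1,\dots,v_n)$. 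Writing $\mathrm{rev}(M)-\mathrm{rev}(M')$ as a probability-weighted sum over these instances and using that the values are i.i.d.\ with density $f$ (so the weight of $(z,v_{-i})$ factors as $\prod_{k\neq i}f(v_k)\cdot f(z)$, and similarly for $v_{-j}$), there are three pieces to bound.

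For the family $(z,v_{-i})$ I would argue: when $z<v_i$ player $i$ is unprofitable in both mechanisms and no other potential payment changed, so the revenue is identical; when $v_i<z\le v_j$, the hypothesis that only players of value at least $v_i$ can be profitable there (together with Claim~\ref{claim-only-small-are-profitable}) lets me invoke Observation~\ref{obs-small-is profitable}, so raising $p_{-i}(v_{-i})$ cannot decrease the revenue in such an instance; and when $z>v_j=\max(v_{-i})$, the structural fact that every potential payment evaluated at a context strictly above $v_{-i}$ is at least the maximum value of that context (carried along from Proposition~\ref{prop-main}) forces every $k\neq i$ to be unprofitable, so player $i$ is the unique profitable player and pays exactly $\eps$ more in $M'$. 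Summing, this family contributes at most $-\eps\,(1-F(v_j))\prod_{k\neq i}f(v_k)$. For the family $(z,v_{-j})$: when $z<v_j$ player $j$ is unprofitable in both and nothing changes; when $z>v_j$ the hypothesis that only player $j$ can be profitable there, combined with $p'_{-j}(v_{-j})=p_{-j}(v_{-j})-\eps$, means the revenue drops by exactly $\eps$; so this family contributes $\eps\,(1-F(v_j))\prod_{k\neq j}f(v_k)$. Finally, at the common instance $v$ only player $i$ can be profitable in $M$ (its potential profit is $0$, player $j$'s is $-\eps$, all others negative), so $M$ collects $v_i$; in $M'$ player $j$'s potential profit is $0$ and player $i$'s is $-\eps$, so $M'$ may be taken to collect $v_j$; this contributes $(v_i-v_j)\prod_{k}f(v_k)$.

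Adding the three pieces, factoring out $\prod_{k\neq i,j}f(v_k)>0$, and dividing the remaining bracket by $f(v_i)f(v_j)>0$, the inequality $\mathrm{rev}(M)\le\mathrm{rev}(M')$ becomes exactly
\[
v_j-\eps\cdot\frac{1-F(v_j)}{f(v_j)}\ \ge\ v_i-\eps\cdot\frac{1-F(v_j)}{f(v_i)},
\]
which is the defining inequality of a standard distribution applied to the pair $v_j>v_i$. The proof then closes by invoking the hypothesis that $\mathcal F$ is standard.

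The main obstacle --- and the reason the lemma needs standardness rather than just regularity --- is the intermediate range $v_i<z\le v_j$ of the family $(z,v_{-i})$: there I can only guarantee, via Observation~\ref{obs-small-is profitable}, that raising $p_{-i}(v_{-i})$ loses no revenue, not that it gains a full $\eps$, because the item may move from player $i$ to another player whose value exceeds $v_i$ but whose price need not. Forgoing the potential gain of $\eps\,(F(v_j)-F(v_i))$ over this range is precisely what replaces the naive ``$\vv(v_j)\ge\vv(v_i)$'' one might hope for (which would follow from mere regularity) by the stronger standardness inequality in which $1-F(v_j)$, rather than $1-F(v_i)$, appears in the second term. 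Getting the winner and its payment right at the common instance $v$ and at the boundary values $z=v_i,v_j$, where some potential profit is exactly $0$, is the other place that will need care; I would handle it using the tie-breaking freedom in the construction of $M'$ together with the structural hypotheses inherited in the setting where the lemma is applied.
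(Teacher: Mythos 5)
Your proposal is correct and takes essentially the same route as the paper's proof: compare $M$ and $M'$ only on the families $(z,v_{-i})$ and $(z,v_{-j})$ plus the common instance, use Observation~\ref{obs-small-is profitable} on the intermediate range $v_i<z\le v_j$ (which is exactly why standardness rather than regularity is needed), obtain the exact $\pm\eps$ terms for $z>v_j$ and the gain $v_j-v_i$ at the common instance, and reduce everything to the paper's inequality~(\ref{eq4}), i.e., the standardness condition for the pair $v_j>v_i$. Like the paper, you import the facts that only player $i$ can be profitable in $(z,v_{-i})$ for $z>\max(v_{-i})$ and that the other players are unprofitable at the common instance from the surrounding conditions of Proposition~\ref{prop-main} rather than from the lemma's four bullets, which matches how the lemma is actually invoked.
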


Before proving the lemma, note that its conditions hold: we have already established that the first three conditions hold. The last condition holds by Claim \ref{claim-only-small-are-profitable}. Thus, by Lemma \ref{lem-sim-change} we have that the revenue of $M'$ is at least as large as that of $M$.

\begin{proof}(of Lemma \ref{lem-sim-change})
Let $F,f$ be the CDF and PDF of $\mathcal F$, respectively. The only instances that the revenue of $M$ and $M'$ might differ in are those in which the values of all players except player $i$ are according to $v_{-i}$ or the values of all players except player $j$ are according to $v_{-j}$. The expected revenue of $M$ in those instances is:
	\begin{align} \label{eq-swap-p}
	\prod_{k\neq i}f(v_{k})&\int_{0}^{1} \delta(z\neq v_i) \cdot f(z) \cdot  Rev^M(z,v_{-i})dz \\
	&+ \prod_{k\neq j}f(v_{k})\int_{0}^{1} \delta(z\neq v_j) \cdot f(z) \cdot Rev^M(z,v_{-j})dz + \prod_{k}f(v_{k})\cdot v_i\nonumber
	\end{align}
	where by $Rev^M(z,v_{-i}) $ we mean the revenue of $M$ in the instance $(z,v{-i})$ and $\delta(\cdot)$ is an indicator function that gets a boolean condition and returns $1$ if the condition is true and $0$ otherwise. Similarly, the expected revenue of $M'$ is:
	\begin{align}  \label{eq-swap-p'}
		\prod_{k\neq i}f(v_{k})&\int_{0}^{1} \delta(z\neq v_i) \cdot f(z) \cdot  Rev^{M'}(z,v_{-i})dz \\
		&+ \prod_{k\neq j}f(v_{k})\int_{0}^{1} \delta(z\neq v_j) \cdot f(z) \cdot Rev^{M'}(z,v_{-j})dz + \prod_{k} f(v_{k})\cdot v_j\nonumber
	\end{align}
	We would like to show that the expected revenue of $M'$ is at least as high as that of $M$  ($(\ref{eq-swap-p}) \leq (\ref{eq-swap-p'})$). By rearranging we get that it is sufficient to show that:
	\begin{align} \label{eq3}
	f(v_{j})&\int_{0}^{1} \delta(z\neq v_i) \cdot f(z) \cdot \Big(  Rev^M(z,v_{-i})-Rev^{M'}(z,v_{-i}) \Big)dz  \leq \\  
	&f(v_{i})\int_{0}^{1} \delta(z\neq v_j) \cdot f(z) \cdot \Big( Rev^{M'}(z,v_{-j}) - Rev^M(z,v_{-j})\Big)dz 
	+ f(v_j)\cdot f(v_i) \cdot (v_j-v_i) \nonumber
	\end{align}
		Notice that	in every instance ($z, v_{-j})$, $z>v_j$, only player $j$ might be allocated. In these instances, the revenue of $M$ will be bigger by $\eps$ than that of $M'$. Furthermore, in every instance $(z,v_{-i})$, $z < v_j$, player $j$ will not be allocated since $p_{-j}(v_{-i})>v>z$. In any such instance potential profits of the remaining players are the same in $M$ and $M'$ and thus the revenue in these instances is identical in both $M$ and $M'$ (recall that player $j$ has a negative profit in both mechanisms). Thus, it holds that: 
	\begin{align*}
	f(v_{i})\int_{0}^{1} \delta(z\neq v_i)  \cdot f(z) \cdot  \Big(  Rev^M(z,v_{-i})-Rev^{M'}(z,v_{-i}) \Big)dz 
	=	f(v_i) \cdot \eps \cdot (1-F(v_j))
	\end{align*}
	Consider instances of the form $(z,v_{-j})$. When $z> v_j$, only player $j$ is profitable. Since $p'_{-j}(v_{-j})-p_{-j}(v_{-j})=\eps$, $M'$ gains an extra revenue of $\eps$ comparing to $M$ in these instances. When $v_j \geq z > v_i$, since the value of all players with non-negative profit is at least $v_i$, Observation \ref{obs-small-is profitable} implies that the revenue can only increase in $M'$.  Finally, when $z<v_i$, player $i$ is not allocated and the potential profits are the same in both $M$ and $M'$, thus the revenue is identical in $M$ and $M'$. Therefore:
	\begin{align*}
	f(v_j)\int_{0}^{1} \delta(z\neq v_j) \Big( Rev^{M'}(z,v_{-j}) - Rev^M(z,v_{-j})\Big)dz   \geq  f(v_j) \cdot \eps  \cdot(1-F(v_j))
	\end{align*}
	Putting this together, we get that  a sufficient condition for inequality (\ref{eq3}) to hold  is:
	\begin{align} \label{eq4}
	f(v_i) \cdot \eps \cdot (1-F(v_j)) \leq f(v_j) \cdot \eps  \cdot(1-F(v_j)) + f(v_j)\cdot f(v_i) \cdot (v_j-v_i)
	\end{align}
Finally, note that inequality (\ref{eq4}) holds since  $\mathcal F$ is standard.
\end{proof}



This concludes the proof of Claim \ref{clm:higher-price}.

\subsection{Auxiliary Lemmas}\label{subsubsec-auxillary}

\begin{observation} \label{obs-small-is profitable}
Consider an instance $(v_1,\ldots, v_n)$ and some player $i$ such that $t=p_{-i}(v_{-i})<v_i$. Suppose that for every player $j$ with $v_j<v_i$ we have that $p_{-j}(v_{-j})> v_j$. Then the revenue in the instance $(v_1,\ldots, v_n)$ is at least as high when $p_{-i}(v_{-i})=t+\eps$ as when $p_{-i}(v_{-i})=t$.
\end{observation}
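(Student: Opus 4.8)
The plan is to fix the instance $v=(v_1,\dots,v_n)$ and compare two profit maximizers that agree on every payment function except that one has $p_{-i}(v_{-i})=t$ (call it $M$) and the other has $p_{-i}(v_{-i})=t+\eps$ (call it $M'$). The whole argument rests on one simple fact together with a case split on the winner. First I would record that the potential profit $v_k-p_{-k}(v_{-k})$ of every player $k\ne i$ is literally the same in $M$ and in $M'$, while player $i$'s potential profit drops from $v_i-t$ to $v_i-t-\eps$; and since all values and payments live on the $\eps$-grid, $t<v_i$ forces $t\le v_i-\eps$, so player $i$'s potential profit is positive in $M$ and non-negative in $M'$ — hence the item is allocated in both mechanisms (in the boundary case $t+\eps=v_i$, where player $i$'s potential profit in $M'$ is exactly $0$ and possibly no one else is profitable, we simply let $M'$ allocate to $i$, which is permitted for a profit maximizer). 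Second, I would restate the hypothesis in contrapositive form: \emph{every player with non-negative potential profit has value at least $v_i$}. This is the only place the hypothesis is used, and it does all the work.

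Next I would split on the winner $w$ of $M$ at $v$. If $w\ne i$, its potential profit $\pi^\star=v_w-p_{-w}(v_{-w})$ is at least player $i$'s, i.e. $\pi^\star\ge v_i-t$, and it is untouched in $M'$ while player $i$'s potential profit in $M'$ is $v_i-t-\eps<\pi^\star$; thus $w$ still has the maximal potential profit in $M'$, so (keeping the same tie-breaking) $w$ wins in $M'$ at the same price and the revenue is unchanged. If $w=i$, then $M$ collects exactly $t$ and every $k\ne i$ has potential profit at most $v_i-t$. Let $\pi'$ be the largest potential profit among the players $k\ne i$ (the same number in $M$ and $M'$), so $\pi'\le v_i-t$. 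If $\pi'\le v_i-t-\eps$, player $i$ still attains the maximum in $M'$: if $i$ wins the revenue is $t+\eps$, and if a tied player $k\ne i$ wins instead, it has non-negative potential profit, hence $v_k\ge v_i$, so it pays $v_k-\pi'\ge v_i-(v_i-t-\eps)=t+\eps$. If instead $\pi'>v_i-t-\eps$, then $\pi'>0$, so $M'$ allocates to some player $k'\ne i$ with potential profit $\pi'$; by the recorded fact $v_{k'}\ge v_i$, and the revenue is $v_{k'}-\pi'\ge v_i-\pi'\ge v_i-(v_i-t)=t$. In every branch the revenue of $M'$ at $v$ is at least that of $M$.

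The conceptual content is entirely in the hypothesis $v_j<v_i\Rightarrow p_{-j}(v_{-j})>v_j$: it pins down that whoever ends up winning in $M'$ has value at least $v_i$, hence pays at least $v_i-\pi'\ge t$, and simultaneously it guarantees that the only way the identity of the winner can migrate away from player $i$ is toward such a high-value, high-payment player. The only genuine obstacles are bookkeeping: making the conclusion robust to the tie-breaking rule (handled by inspecting each way a tie can be broken) and the degenerate case $t+\eps=v_i$ in which a bare profit maximizer is not forced to allocate — resolved by specifying that $M'$ does allocate there.
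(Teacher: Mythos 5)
Your proof is correct and follows essentially the same route as the paper's: only player $i$'s potential profit changes, and the hypothesis forces any other player with non-negative potential profit to have value at least $v_i$ and hence potential payment $v_j-\pi'\geq v_i-(v_i-t)=t$, so the winner of the modified mechanism still pays at least $t$. Your treatment is slightly more explicit about tie-breaking and the boundary case $t+\eps=v_i$, but the decomposition and the key inequality are the same as in the paper.
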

\begin{proof}
In a profit maximizer, if some player $j$, $j\neq i$ wins when $p_{-i}(v_{-i})=t$ then he also wins when $p_{-i}(v_{-i})=t+\eps$. This is because player $j$'s profit is the same but player $i$'s profit has decreased: $v_j-p_{-j}(v_{-j}) \geq v_i -t \implies v_j-p_{-j}(v_{-j}) \geq v_i -t-\eps$. The revenue is identical in both cases. 
	
Suppose that player $i$ wins when $p_{-i}(v_{-i})=t$ (the revenue is obviously $t$). Hence, for every player $j$ with non-negative profit, $v_j-p_{-j}(v_{-j}) \leq v_i -t \implies p_{-j}(v_{-j}) \geq t+v_j-v_i \geq t$, where the last transition is because $v_j\geq v_{i}$. Player $i$'s potential profit is non-negative in both cases, thus the item will be allocated in both and the payment will at least $t$ (either $p_{-j}(v_{-j}) \geq t$ or $t+\eps$).
\end{proof}

\begin{lemma} \label{lem-rev-max}
Consider a revenue maximizing mechanism $M$ with payment functions $p_{-1},\ldots, ,p_{-n}$ where values are sampled independently from a standard distribution $\mathcal F$. Suppose that for some $v_{-i}$ and $t\geq x$ it holds that all players except player $i$ have negative profit in every instance $(z,v_{-i})$, $z\geq t$, then $p_{-i}(v_{-i})\leq t$.
%
%
%
\end{lemma}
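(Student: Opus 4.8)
The plan is to run the standard single-parameter exchange argument, exactly as in the proof of Lemma~\ref{lem-rev-max-truthful}. Assume toward contradiction that $p_{-i}(v_{-i}) > t$, and write $v_i := p_{-i}(v_{-i})$; since all values lie on the $\eps$-grid this gives $v_i \geq t + \eps$, hence $v_i - \eps \geq t \geq x$. Let $M'$ be obtained from $M$ by keeping every payment function unchanged except $p'_{-i}(v_{-i}) := v_i - \eps$, with allocation determined by the profit-maximizer rule. Lowering one potential payment preserves individual rationality, no positive transfers, and the profit-maximizer structure, and it leaves the potential profit of every player $k \neq i$ unchanged in every instance $(z, v_{-i})$; in particular the hypothesis of the lemma --- only player $i$ can have non-negative potential profit once $z \geq t$ --- is inherited by $M'$.

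The crucial step is to localize the revenue difference. $M$ and $M'$ agree on all payment functions except $p_{-i}(v_{-i})$, so they can differ only on instances $(z, v_{-i})$, and for $z < t$ they coincide there. For $z \geq t$ the hypothesis forbids allocating to any player $k \neq i$ in either mechanism, so on this event each mechanism behaves like a single-bidder auction for player $i$ with a posted price ($v_i$ in $M$, $v_i - \eps$ in $M'$), and because $v_i - \eps \geq t$ the whole ``sale region'' of $M'$ lies inside $\{z \geq t\}$. Conditioning on the other values being $v_{-i}$, the expected revenue contribution of $\{z \geq t\}$ is therefore the single-bidder revenue $v_i\cdot(1 - F(v_i - \eps))$ under $M$ and $(v_i - \eps)\cdot\bigl(1 - F(v_i - \eps) + f(v_i - \eps)\bigr)$ under $M'$.

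Finally I would compare these two quantities: after cancelling and dividing by $f(v_i - \eps) > 0$, the inequality ``$M'$ is at least as good'' reduces to $v_i - \eps - \eps\cdot\frac{1 - F(v_i - \eps)}{f(v_i - \eps)} \geq 0$, i.e.\ $\vv(v_i - \eps) \geq 0$, which holds since $v_i - \eps \geq x$ and $\mathcal F$ is regular --- standardness is not actually needed here, only regularity, exactly as in Lemma~\ref{lem-rev-max-truthful}. Thus $M'$ has revenue at least that of $M$, so $M'$ is also revenue-maximizing but with $p_{-i}(v_{-i})$ smaller by $\eps$; iterating (each step preserves the hypothesis and drops $p_{-i}(v_{-i})$ by $\eps$) yields a revenue-maximizing mechanism with $p_{-i}(v_{-i}) \leq t$. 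I expect the only genuinely delicate point to be the bookkeeping in the middle paragraph --- checking that the hypothesis really does force the item, on $\{z \geq t\}$, to go to player $i$ or to nobody in both mechanisms --- since once the problem is reduced to a single bidder, the rest is the familiar one-line regularity computation.
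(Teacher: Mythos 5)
Your proposal is correct and takes essentially the same route as the paper's proof: lower $p_{-i}(v_{-i})$ by $\eps$, localize the revenue difference to instances $(z,v_{-i})$ with $z\geq t$ where only player $i$ can be allocated, and compare the two single-bidder revenues $v_i\cdot(1-F(v_i-\eps))$ and $(v_i-\eps)\cdot(1-F(v_i-\eps)+f(v_i-\eps))$, which reduces to $\vv(v_i-\eps)\geq 0$. Your observation that only regularity (rather than standardness) is needed at this step is consistent with the paper's parallel Lemma~\ref{lem-rev-max-truthful} for the truthful setting.
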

\begin{proof}
Let $F, f$ be the CDF and PDF of $\mathcal F$, respectively. Towards contradiction, let $v_{-i}$, $t$ be such that all the conditions hold but $p_{-i}(v_{-i})>t$. We will show that the mechanism $M'$ with identical payment functions $p'_{-1}, \ldots ,p'_{-n}$ except that $p'_{-i}(v_{-i})=p_{-i}(v_{-i})-\eps$ generates more revenue.
	 
The only difference in the revenue of $M$ and $M'$ is in instances $(z,v_{-i})$, $z\geq t$. In these instances, by the conditions of the lemma, only player $i$ might be profitable in the instance $(z,v_{-i})$ in both mechanisms. Thus, we only need to compare the difference in the revenue in these instances. Let $v_i=p_{-i}(v_{-i})$ and let $Rev^M(z,v_{-i})$ be the revenue of $M$ in the instance $(z,v_{-i})$. In these instances:

	\begin{itemize}	
		\item  The revenue of $M$ is: $\int_t^1 f(z)Rev^M(z,v_{-i})dz =v_i\cdot (1-F(v_i-\eps))$.

		\item The revenue of $M'$ is: $\int_t^1 f(z)Rev^{M'}(z,v_{-i})dz =(v_i-\eps)\cdot (1-F(v_i-\eps)+f(v_i-\eps))$.

%
	\end{itemize}

Therefore, the revenue of $M'$ is at least that of $M$ if: 
\begin{align*}
(v_i-\eps)\cdot (1-F(v_i-\eps)+f(v_i-\eps)) &\geq v_i\cdot (1-F(v_i-\eps))\\
-\eps (1-F(v_i-\eps))+(v_i-\eps)f(v_i-\eps) &\geq 0 \\
v_i-\eps - \frac {\eps \cdot (1-F(v_i-\eps))} {f(v_i-\eps)} &\geq 0\\
\end{align*}
I.e., $\vv(v_i-\eps)\geq 0$, which is true since $\mathcal F$ is a standard distribution and $v_i> x$.
%
%
%
	\end{proof}
}

\end{document}